\colorlet{DarkRed}{red!50!black}
\colorlet{DarkGreen}{green!50!black}
\colorlet{DarkBlue}{blue!50!black}
\newtheorem{theorem}{Theorem}[section]
\newtheorem{lemma}[theorem]{Lemma}
\newtheorem{definition}[theorem]{Definition}
\newcommand{\davg}{\overline{d}}
\newcommand{\vol}{\operatorname{vol}}
\title{A Faster Local Algorithm for Detecting Bounded-Size Cuts with Applications to Higher-Connectivity Problems}
\author{Sebastian Forster~\thanks{Department of Computer Sciences, University of Salzburg, Austria. Work partially done while at Max Planck Institute for Informatics and while at University of Vienna.} \and Liu Yang\thanks{Work partially done while at Yale University.}}
\date{}
\begin{document}
\maketitle
\begin{abstract}
Consider the following ``local'' cut-detection problem in a directed graph:
We are given a starting vertex $ s $ and need to detect whether there is a cut with at most $ k $ edges crossing the cut such that the side of the cut containing $ s $ has at most $ \Delta $ interior edges.
If we are given query access to the input graph, then this problem can in principle be solved in sublinear time without reading the whole graph and with query complexity depending on $ k $ and $ \Delta $.
We design an elegant randomized procedure that solves a slack variant of this problem with $ O (k^2 \Delta) $ queries, improving in particular a previous bound of $ O ((2 (k + 1))^{k+2} \Delta) $ by Chechik et al.~[SODA 2017].
In this slack variant, the procedure must successfully detect a component containing~$ s $ with at most $ k $ outgoing edges and $ \Delta $ interior edges if such a component exists, but the component it actually detects may have up to $ O (k \Delta) $ interior edges.

Besides being of interest on its own, such cut-detection procedures have been used in many algorithmic applications for higher-connectivity problems.
Our new cut-detection procedure therefore almost readily implies (1) a faster vertex connectivity algorithm which in particular has nearly linear running time for polylogarithmic value of the vertex connectivity, (2) a faster algorithm for computing the maximal $k$-edge connected subgraphs, and (3) faster property testing algorithms for higher edge and vertex connectivity, which resolves two open problems, one by Goldreich and Ron~[STOC '97] and one by Orenstein and Ron~[TCS 2011].
\end{abstract}

\section{Introduction}

Although the term ``big data'' has been overused in recent years, there is a factual trend in growing input sizes for computational tasks.
In practical applications, this has led to many engineering and architectural efforts for designing scalable systems~\cite{DeanG04,White12,IsardBYBF07,ZahariaCFSS10}.
In theory of computing, this trend has mainly been reflected by studying models that penalize reading, storing, or exchanging data such as the property testing model~\cite{GoldreichGR98}, the streaming model~\cite{HenzingerRR98,McGregor14}, the CONGEST model~\cite{Peleg00}, the congested clique model~\cite{LotkerPPP05}, the massively parallel computation model~\cite{KarloffSV10}, or the $k$-machine model~\cite{KlauckNP015}. 
A recent take on this are \emph{Local Computation Algorithms}~\cite{RubinfeldTVX11,LeviM17}, where the goal is to design sublinear-time algorithms that do not even need to read the whole input.

In this paper, we study a natural ``local'' version of the problem of finding an edge cut of bounded size with applications to higher connectivity: given a starting vertex~$ s $, detect a ``small'' component containing $ s $ with at most $ k $ outgoing edges (if it exists).
This problem has implicitly been studied before in the context of property testing~\cite{GoldreichR02,OrensteinR11} and for developing a centralized algorithm that computes the maximal $k$-edge connected subgraphs of a directed graph~\cite{ChechikHILP17}.
Recently, a variant for vertex cuts has been studied for obtaining faster vertex connectivity algorithms~\cite{NanongkaiSY19}.
A somewhat similar problem of locally detecting a small component with low-conductance has recently been studied extensively~\cite{KawarabayashiT19,HenzingerRW17,SaranurakW19}, in particular to obtain centralized algorithms for deterministically computing the edge connectivity of an undirected graph.

We improve upon the query complexity and running time of all prior local computation algorithms for detecting bounded-size cuts.
The significance of this contribution is confirmed by the fact that it almost readily implies faster algorithms for several problems in higher connectivity.
First, we obtain the first nearly-linear time algorithm for computing the vertex connectivity $ \kappa $ of a directed or undirected graph whenever $ \kappa $ is polylogarithmic in the number of vertices.
Our algorithm is the fastest known algorithm for a wide, polynomial, range of $ \kappa $.
Second, we obtain algorithms for computing the maximal $k$-edge connected subgraphs of directed or undirected graphs that significantly improve the algorithms of Chechik et al.~\cite{ChechikHILP17}.
Third, we improve upon the state of the art in property testing algorithms for higher connectivity in essentially all settings, considering both $k$-edge connectivity and $k$-vertex connectivity, both directed and undirected graphs, and both graphs of bounded and unbounded degree.
Furthermore, we give a much more uniform treatment of all these settings.
In particular, our improvements answer two long-standing open problems in the field posed by Goldreich and Ron~\cite{GoldreichR02} and by Orenstein and Ron~\cite{OrensteinR11}, respectively.

We next settle our notation and terminology in Section~\ref{sec:preliminaries} before we discuss our results and compare them with related work in Section~\ref{sec:discussion of results}.
We then proceed by giving the details of our results: the local cut detection result in Section~\ref{sec:local cut detection}, the vertex connectivity result in Section~\ref{sec:vertex connectivity}, the result on computing the maximal $k$-edge connected subgraphs in Section~\ref{sec:connected subgraphs}, and the property testing results in Section~\ref{sec:property testing}.
Finally, we conclude the paper in Section~\ref{sec:conclusion} with some open problems.

\subsection{Preliminaries}\label{sec:preliminaries}

In the following we fix our terminology and give all relevant definitions.

\paragraph{Graph Terminology}

In this paper, we consider input graphs $ G = (V, E) $ with $ n $ vertices $ V = \{1, \ldots, n \} $ and $ m $ edges $ E $.
Graphs might be directed, i.e., $ E \subseteq V^2 $, or undirected, i.e., $ E \subseteq \binom{V}{2} $.
For any two subsets of vertices $ A \subseteq V $ and $ B \subseteq V $ we define the set of edges going from $ A $ to $ B $ as $ E (A, B) = E \cap (A \times B) $.
Note that $ E (A, B) = E (B, A) $ in undirected graphs.
In directed graphs, we say that an edge $ (u, v) $ is reachable from a vertex $ s $ if $ u $ is reachable from~$ s $.
The reverse of an edge $ (u, v) $ is the edge $ (v, u) $ and the reverse of a graph $ G = (V, E) $ is the graph in which every edge is reversed, i.e., $ G' = (V, E') $ with edge set $ E' = \{ (v, u) : (u, v) \in E \} $.
We say that an event occurs with high probability if it does so with probability at least $ 1 - 1/n^c $ for any (fixed) choice of the constant $ c \geq 1 $.

A graph is \emph{strongly connected} if for every pair of vertices $ u $ and $ v $ there is a path from $ u $ to $ v $ and a path from $ v $ to $ u $. (The former implies the latter in undirected graphs and we usually just call the graph \emph{connected} in that case.)
A graph is \emph{$k$-edge connected} if it is strongly connected whenever fewer than $ k $ edges are removed.
A graph is \emph{$k$-vertex connected} if it has more than $ k $ vertices and is strongly connected whenever fewer than $ k $ vertices (and their incident edges) are removed.
The edge connectivity~$ \lambda $ of a graph is the minimum value of $ k $ such that the graph is $k$-edge connected and the vertex connectivity~$ \kappa $ is the minimum value of $ k $ such that the graph is $k$-vertex connected.
An edge cut $ (L, R) $ is a partition of the vertices into two non-empty sets $ L $ and $ R $.
The size of an edge cut $ (L, R) $ is the number of edges going from $ L $ to $ R $, i.e., $ | E (L, R) | $.
A vertex cut $ (L, M, R) $ is a partition of the vertices into three sets $ L $, $ M $, and $ R $ such that $ L $ and $ R $ are non-empty and there are no edges from $ L $ to $ R $, i.e., $ E (L, R) = \emptyset $.
The size of a vertex cut $ (L, M, R) $ is $ | M | $, the size of $ M $.
Observe that a graph is $ k $-edge connected if and only if it has no edge cut of size at most $ k - 1 $ and it is $ k $-vertex connected if and only if it has no vertex cut of size at most $ k - 1 $.

In our paper, we are mainly interested in detecting so-called $k$-edge-out and -in components and $k$-vertex-out and -in components, which we define in the following.
For every subset of vertices $ C \subseteq V $, the \emph{vertex size of $ C $} $ C $ is $ | C |$, the \emph{edge size of $ C $} is $ | E (C, C) | $, the \emph{volume of $ C $} is $ \vol (C) = | E (C, V) | $, and the \emph{symmetric volume of $ C $} is $ \vol^* (C) = | E (C, V) \cup E (V, C) | $.

\begin{definition}
A \emph{$k$-edge-out (-in) component} of a directed graph $ G = (V, E) $ is a non-empty subset of vertices $ C \subseteq V $ such that there are at most $ k $ edges leaving (entering)~$ C $, i.e., $ | E \cap (C \times V \setminus C) | \leq k $ ($ | E \cap (V \setminus C \times C) | \leq k $).
$ C $ is \emph{minimal} if for every proper subset $ \hat{C} $ of $ C $ the number of edges leaving (entering) $ \hat{C} $ is more than the number of edges leaving (entering)~$ C $.
$ C $ is \emph{proper} if $ C $ is a proper subset of $ V $, i.e., $ C \subset V $.
\end{definition}

Observe that for every proper, non-empty $k$-edge-out component~$ C $ there exists an edge cut $ (C, V \setminus C) $ of size at most $ k $, certifying an edge connectivity of less than $ k $, and for every proper, non-empty $k$-edge-in component~$ C $ there exists an edge cut $ (V \setminus C, C) $ of size at most $ k $, certifying a vertex connectivity of less than $ k $.

\begin{definition}
A \emph{$k$-vertex-out (-in) component} of a directed graph $ G = (V, E) $ is a non-empty subset of vertices $ C \subseteq V $ such that the number of vertices reached from edges leaving (entering)~$ C $ is at most $ k $, i.e., $ |B| \leq k $ for $ B = \{ v \in V \setminus C \mid \exists (u, v) \in E : u \in C \} $ ($ B = \{ u \in V \setminus C \mid \exists (u, v) \in E : v \in C \} $).
$ C $ is \emph{minimal} if for every proper subset $ \hat{C} $ of $ C $ the number of vertices reached from edges leaving (entering)~$ \hat{C} $ is more than the number of vertices reached from edges leaving (entering)~$ C $.
$ C $ is \emph{proper} if $ C \cup B $ is a proper subset of $ V $, i.e., $ C \cup B \subset V $.
\end{definition}
Observe that for every proper, non-empty $k$-vertex-out component~$ C $ there exists a vertex cut $ (C, B, V \setminus C) $ of size at most $ k $ and for every proper, non-empty $k$-vertex-in component~$ C $ there exists an edge cut $ (V \setminus C, B, C) $ of size at most $ k $.

In this paper, we use $ \tilde O (\cdot) $-notation to suppress polylogarithmic factors.

\paragraph{Query Model}

In the \emph{incidence-lists model}, we assume that there is some order on the outgoing as well as on the incoming edges of every vertex.
In directed graphs, the algorithm can, for any vertex $ v $ and any integer $ i \geq 1 $, in a single query ask for the $i$-th outgoing edge of $ v $ (and either get this edge in return or a special symbol like $ \bot $ if $ v $ has fewer than $ i $ outgoing edges).
Similarly, the algorithm can ask for $i$-th incoming edge of $ v $.
In undirected graphs, the algorithm can simply ask for the $i$-th edge incident on $ v $.
The query complexity of an algorithm is the number of queries it performs.

\paragraph{Property Testing}

Informally, the main goal in property testing is to find out whether a given graph, to which the algorithm has query access, has a certain property by performing a sublinear number of queries, allowing a ``soft'' decision margin.
In this paper, we consider the properties of being $ k $-edge connected or being $ k $-vertex connected.

Formally, a graph property $ \mathcal{P} $ is a subset of graphs closed under isomorphism, usually specified implicitly by a predicate $ P $.
In the \emph{bounded-degree model} a graph is said to be $ \epsilon $-far from having the property in question, if more than $ \epsilon n d $ edge modifications must be made to obtain a graph that has the property.
In directed graphs, $ d $ is a given upper bound on the maximum in- or out-degree of any vertex (i.e., the maximum number of incoming or outgoing edges of any vertex).
In undirected graphs, $ d $ is a given upper bound on the maximum degree of any vertex (i.e., the maximum number of edges incident on any vertex).
Note that $ n d $ is an upper bound on~$ m $ in directed graphs and an upper bound on $ 2 m $ in undirected graphs.
In the \emph{unbounded-degree model}, the number of edge modifications must be more than $ \epsilon n \davg $, where $ \davg := m / n $.
Note that $ n \davg $ is an upper bound on $ m $ in directed graphs and an upper bound on $ 2 m $ in undirected graphs.

A \emph{property testing algorithm} has query access to the input graph and, with probability at least $ \tfrac{2}{3} $, is required to accept every graph that has the property and reject every graph that is $ \epsilon $-far from having the property.
We assume that the property testing algorithm knows $ n $ and $ d $ or $ \davg $, respectively, in advance and that $ \epsilon $ and $ k $ are given as parameters.

\subsection{Discussion of Our Results}\label{sec:discussion of results}

In the following we compare our results with prior work.

\subsubsection{Local Cut Detection}

Consider the following ``local'' problem:
We are given a starting vertex $ s $ and need to detect whether there is a $k$-edge-out component containing~$ s $ of vertex size at most $ \Gamma $ (or of edge size at most $ \Delta $) in $ G $.
The goal is to design sublinear-time algorithms that only have query access to the graph and thus answer this question by locally exploring the neighborhood of $ s $ instead of reading the whole graph.
Several algorithms for this problem have been proposed in the literature.
We review them in the following and compare them to our new result.

In undirected graphs, a local variant of Karger's minimum cut algorithm~\cite{Karger00} gives the following guarantees.
\begin{theorem}[Implicit in \cite{GoldreichR02}]
There is a randomized algorithm that, given integer parameters $ k \geq 1 $ and~$ \Delta \geq 1 $, a starting vertex~$ s $, and query access to an undirected graph~$ G $, with $ O (\Gamma^{2 - 2/k} (\Delta + k)) $ queries detects, with constant success probability, whether there is a $k$-edge-out component containing~$ s $ of vertex size at most~$ \Gamma $ and edge size at most~$ \Delta $ in~$ G $ and if so returns such a component.
\end{theorem}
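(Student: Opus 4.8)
The plan is to implement a \emph{local} variant of Karger's randomized contraction algorithm, run it $O(\Gamma^{2-2/k})$ times independently from~$s$, and return the first valid component produced by any trial. Completeness (no false positives) is then immediate: before a trial returns a set~$B$, it checks that $B$ is a proper subset, that $|B| \le \Gamma$, that the number of edges internal to~$B$ is at most~$\Delta$, and that at most~$k$ edges leave~$B$, all from the edges it has already queried. So the content is the other direction: if some $k$-edge-out component $C \ni s$ of vertex size $\gamma \le \Gamma$ and edge size at most~$\Delta$ exists, a single trial must detect \emph{a} valid component with probability $\Omega(\Gamma^{-(2-2/k)})$, after which a union bound over the $\Theta(\Gamma^{2-2/k})$ trials gives constant overall success.

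First I would fix a clean target. Among all $k$-edge-out components containing~$s$ of vertex size at most~$\Gamma$ and edge size at most~$\Delta$, pick an inclusion-minimal one, $C$, with $\gamma := |C|$. One may assume $C$ is connected (replacing it by the connected component of~$s$ in $G[C]$ does not increase the vertex size, edge size, or number of leaving edges), and by minimality every proper nonempty subset of~$C$ has strictly more than~$k$ edges leaving it in~$G$; in particular, when $\gamma \ge 2$, every vertex of~$C$ has degree larger than~$k$. Collapsing $V \setminus C$ into a single sink vertex gives an auxiliary multigraph~$G'$ on $\gamma+1$ vertices in which $(C,\mathrm{sink})$ is a cut of value exactly~$k$, the sink is the unique vertex of degree~$k$, and all other vertices have degree larger than~$k$. (The case $\gamma = 1$ is trivial: just query the $\le k$ edges at~$s$.)

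The trial itself grows a supernode~$B$, initially $\{s\}$, by repeatedly contracting a uniformly random edge incident to~$B$: sample a uniformly random edge-endpoint slot among the vertices currently in~$B$ (learning degrees by binary search on incidence lists), query the corresponding edge, discard it if it is internal to~$B$, and otherwise absorb the new endpoint into~$B$ and reveal its incident edges lazily; after each absorption, test whether~$B$ is now a valid component and return it if so, and abort once~$B$ exceeds the vertex- or edge-size budget. A trial that never leaves~$C$ touches only the $\vol(C) \le 2\Delta + k$ edges incident to~$C$, so with a suitable cap on discarded self-loop samples each trial costs $O(\Delta + k)$ queries, yielding the claimed $O(\Gamma^{2-2/k}(\Delta+k))$ total. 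The crucial observation is that as soon as~$B$ has absorbed exactly the vertices of~$C$ (and nothing outside), it is a valid component and gets returned; equivalently, the trial succeeds whenever the random contraction sequence never contracts an edge leaving~$C$ before all of~$C$ has been merged — which is precisely the event that the cut $(C,\mathrm{sink})$ survives the first $\gamma-1$ random contractions in~$G'$.

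The main obstacle is the probability bound for this survival event. The generic Karger analysis only gives $\binom{\gamma+1}{2}^{-1} = \Omega(\gamma^{-2})$, which would cost an extra $\Gamma^{2}$ factor. The improvement to $\Omega(\gamma^{-(2-2/k)}) \ge \Omega(\Gamma^{-(2-2/k)})$ must come from a sharpened edge-count bound at each stage of the contraction, exploiting that the cut has only~$k$ edges while all $\gamma$ non-sink supernodes retain degree larger than~$k$ throughout: feeding the refined lower bound on the number of surviving edges into the telescoping product $\prod_{t}\bigl(1 - \Pr[\text{a cut edge is contracted with } t \text{ supernodes left}]\bigr)$ harvests an extra $\Gamma^{\Theta(1/k)}$ factor over the generic estimate. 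Carrying this telescoping computation through carefully — nailing down the precise exponent and checking that it degrades gracefully to the trivial exhaustive algorithm for large~$k$ — is the delicate technical heart; the reduction to~$C$, the lazy simulation with its query accounting, the verification step, and the final union bound are all routine.
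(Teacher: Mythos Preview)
The paper does not prove this theorem; it is merely cited as prior work implicit in~\cite{GoldreichR02} to set context for the paper's own local cut-detection procedure (Theorem~\ref{thm:local procedure edge-out component}). There is thus no proof in the paper to compare your proposal against.

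Your sketch does track the Goldreich--Ron approach, but the probability analysis you outline has a genuine gap. You identify your trial's success with ``the cut $(C,\mathrm{sink})$ survives the first $\gamma-1$ random contractions in $G'$'' and then invoke a sharpened Karger-style telescoping product. But your local process is \emph{not} standard Karger on $G'$: you always contract a random edge-slot \emph{incident to the current blob $B$}, whereas Karger contracts a uniformly random edge of the entire graph (and may, for instance, merge two vertices of $C\setminus B$ with each other). In your process the conditional failure probability at step $t$ is $|E(B_t,\mathrm{sink})|/\deg_{G'}(B_t)$, a quantity that depends on the random set $B_t$ and not on a global edge count, so the usual ``at least $kt/2$ edges remain when $t$ supernodes are left'' bound does not plug in. Relatedly, your assertion that ``every proper nonempty subset of $C$ has strictly more than $k$ edges leaving it'' follows from inclusion-minimality of $C$ only for subsets \emph{containing $s$}; for a singleton $\{v\}$ with $v\neq s$ nothing in your setup forces $d_v>k$, so the claim that all non-sink vertices of $G'$ have degree exceeding $k$ is unjustified as stated. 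The $\Omega(\gamma^{-(2-2/k)})$ bound is indeed what~\cite{GoldreichR02} obtain, but extracting it for the rooted, one-sided contraction process requires a more careful argument than the global-Karger heuristic you sketch; that argument, not the routine pieces you list, is where the real work lies.
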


In directed graphs, a backtracking approach that tries to identify the at most $ k $ edges leaving the component gives the following guarantees.

\begin{theorem}[\cite{OrensteinR11}]
There is a deterministic algorithm that, given integer parameters $ k \geq 0 $ and~$ \Delta \geq 1 $, a starting vertex~$ s $, and query access to a directed graph~$ G $, with $ O (\Gamma^{k + 2}) $ queries in general and with $ O (\Gamma^{k + 1} d) $ in graphs of maximum degree $ d $ detects whether there is a $k$-edge-out component containing~$ s $ of vertex size at most~$ \Gamma $ in~$ G $ and if so returns such a component. 
\end{theorem}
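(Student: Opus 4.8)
The plan is to solve this by a deterministic recursive branching (backtracking) procedure that keeps three pieces of state: a vertex set $ A $ of vertices already committed to lie in the sought component (initialized to $ A = \{s\} $), an edge set $ F $ of edges already committed to leave it, and a budget $ b $ for the number of edges still allowed to leave it (initialized to $ b = k $ and maintained so that $ |F| + b = k $). If a $k$-edge-out component $ C \ni s $ of vertex size at most $ \Gamma $ exists, the intended meaning is $ A \subseteq C $ and $ F \subseteq \{\text{edges leaving } C\} $. The core of one recursive call is a graph search (BFS or DFS) started from $ A $ that only traverses edges not in $ F $ and that is halted as soon as either \textbf{(i)} there is no untraversed non-$F$ edge out of the discovered set, or \textbf{(ii)} the discovered set reaches size $ \Gamma + 1 $. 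In case (i) the discovered set $ \bar A $ has all of its outgoing edges inside $ F $, so if $ |\bar A| \le \Gamma $ then, since $ |F| \le k $, $ \bar A $ is a valid answer and we return it; otherwise this branch fails. In case (ii), if $ b = 0 $ this branch fails; otherwise, writing $ v_1, v_2, \dots, v_{\Gamma+1} $ for the discovered vertices in order of discovery with the vertices of $ A $ seeded first, we branch over the candidates $ j \in \{\,|A|+1, \dots, \Gamma+1\,\} $ for ``$ v_j $ is the first discovered vertex not in $ C $'', and in that branch recurse with $ A $ replaced by $ \{v_1, \dots, v_{j-1}\} $, with $ F $ replaced by $ F \cup \{e_j\} $ where $ e_j $ is the edge through which $ v_j $ was discovered, and with $ b $ decremented by one.

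For correctness I would argue, along the ``correct'' branch (the one matching a genuinely existing $ C $), that the invariants $ A \subseteq C $ and $ F \subseteq \{\text{edges leaving } C\} $ are preserved: every $ v_i $ with $ i < j $ is discovered before the first vertex outside $ C $ and hence lies in $ C $, and $ e_j $ has its tail among $ v_1,\dots,v_{j-1} \subseteq C $ and its head $ v_j \notin C $, so $ e_j $ genuinely leaves $ C $. Conversely, anything returned is by construction a subset of $ V $ of vertex size at most $ \Gamma $ that contains $ s $ and all of whose outgoing edges lie in a set of size at most $ k $, hence a valid $ k $-edge-out component. That the correct branch actually succeeds follows because $ e_j $ is freshly traversed and thus not already in $ F $, so $ F $ strictly grows at each level while staying inside the $ \le k $ outgoing edges of $ C $; therefore after at most $ k $ levels $ F $ equals the full set of edges leaving $ C $, at which point the $ F $-avoiding search from $ A \subseteq C $ can never leave $ C $, so case (i) fires with $ |\bar A| \le |C| \le \Gamma $ and the component is returned (this check precedes the $ b = 0 $ failure check, so it is not missed when the budget is exhausted). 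Also, case (ii) always supplies at least one valid candidate $ j $, since $ \Gamma + 1 > |C| $ forces a discovered vertex outside $ C $, which cannot be any $ v_i $ with $ i \le |A| $.

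For the query complexity, observe that the recursion has depth at most $ k $ (one unit of budget is spent per level) and branching factor at most $ \Gamma + 1 - |A| \le \Gamma $ (since $ |A| \ge 1 $), so the recursion tree has $ O(\Gamma^{k}) $ nodes. A single search touches at most $ \Gamma + 1 $ vertices; in a graph of maximum out-degree $ d $ it therefore costs $ O(\Gamma d) $ queries, giving $ O(\Gamma^{k+1} d) $ overall, and in a general (simple) directed graph each of those $ \le \Gamma+1 $ vertices has at most $ \Gamma + 1 $ outgoing edges into the discovered set while only $ \Gamma + 1 $ queries in total ever reveal a fresh vertex, for $ O(\Gamma^2) $ queries per search and $ O(\Gamma^{k+2}) $ overall (degenerate cases such as $ \Gamma = 1 $ are handled directly). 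The step I expect to be the crux is exactly this branching rule: branching over which \emph{frontier edge} is a cut edge would inflate the branching factor to $ \Theta(\Gamma^2) $ or $ \Theta(\Gamma d) $ and yield only a $ \Gamma^{2k} $-type bound, whereas branching over which discovered \emph{vertex} is the first one outside $ C $ keeps the branching factor at $ \Gamma $ while still forcing monotone progress (one new genuine cut edge added to $ F $ per level), which is what pins the exponent to $ k $.
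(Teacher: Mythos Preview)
The paper does not actually prove this theorem; it is quoted from Orenstein and Ron~\cite{OrensteinR11} as prior work, with only the one-line description ``a backtracking approach that tries to identify the at most $k$ edges leaving the component.'' So there is no proof in the paper to compare against.

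That said, your proposal is correct and is exactly the kind of argument that one-line description points to. The invariants $A\subseteq C$ and $F\subseteq\{\text{edges leaving }C\}$ are the right ones; the observation that the search edge $e_j$ is freshly traversed (hence not already in $F$) and has its tail among $v_1,\dots,v_{j-1}\subseteq C$ is what guarantees strict progress, and the termination argument (after at most $|\partial C|\le k$ levels the $F$-avoiding search is trapped inside $C$) is sound. Your soundness direction is also clean: whatever is returned has all outgoing edges in $F$ with $|F|\le k$, hence is a valid $k$-edge-out component of vertex size at most~$\Gamma$.

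You correctly identify the crux: branching on the first discovered \emph{vertex} outside $C$ (equivalently, on which of the $\le\Gamma$ search-tree edges is a genuine cut edge) gives branching factor~$\Gamma$, while naively branching on all frontier edges would blow the exponent up. Two small remarks: (i) in case~(i) the check $|\bar A|\le\Gamma$ is actually automatic, since case~(ii) would have fired first otherwise; (ii) your $O(\Gamma^2)$ per-search bound in the unbounded-degree case implicitly assumes the graph is simple (so that each scanned vertex has at most $\Gamma$ edges into the already-discovered set), which is the standard assumption in this model but worth stating.
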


With a different motivation -- computing the $k$-edge connected components of a directed graph -- Chechik et al.~\cite{ChechikHILP17} developed a faster local cut-detection procedure for directed graphs.
The guarantees on the detected component are a bit weaker than those of Orenstein and Ron~\cite{OrensteinR11} as the component detected by the algorithm might be larger than $ \Delta $.
In the applications considered in this paper, this however is not an issue; as long as one is willing to pay the overhead in the complexity for detecting the component, the algorithms can also rely on the weaker guarantee.

\begin{theorem}[\cite{ChechikHILP17}]\label{lem:local procedure edge-out component Chechik}
There is a deterministic algorithm that, given integer parameters $ k \geq 0 $ and~$ \Delta \geq 1 $, a starting vertex~$ s $, and query access to a directed graph~$ G $, with $ O ((2 (k + 1))^{k+2} \Delta) $ queries returns a set of vertices $ U $ such that
(1) if there is a $ k $-edge-out component containing~$ s $ of edge size at most~$ \Delta $, then $ U \supseteq \{ s \} $ and
(2) if $ U \supseteq \{ s \} $, then $ U $ is a minimal $ k $-edge-out component containing~$ s $ of edge size at most $ O ((k + 1) \Delta) $.
\end{theorem}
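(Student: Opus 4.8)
The plan is to refine the ``backtracking'' idea of Orenstein and Ron~\cite{OrensteinR11}: explore from~$ s $ by depth-first search, and as soon as the search has explored ``too much'' conclude that it must have left the sought component through one of its at most~$ k $ outgoing edges, guess that edge, forbid it, and recurse with the remaining out-edge budget decreased by one. What makes this faster than a $ \Delta^{\Theta(k)} $-type bound is to arrange that only $ O(k) $ candidate outgoing edges have to be tried at each level of the recursion. Concretely, the procedure carries a set~$ F $ of at most~$ k $ \emph{forbidden} edges (its current guess for the edges leaving the component) and a counter~$ j $ starting at~$ k $. At a recursion node with parameters $ (F, j) $ it runs a depth-first search from~$ s $ in $ G \setminus F $, traversing edges one at a time, with a cap of $ \Theta((k+1)\Delta) $ traversals. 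If the search halts within the cap, it returns the set~$ U $ of visited vertices; every $ G $-edge leaving~$ U $ then lies in~$ F $, so $ U $ is a $ k $-edge-out component, and the cap bounds $ |E(U,U)| $ by $ O((k+1)\Delta) $. If the search exceeds the cap, the procedure selects a set~$ \mathcal{C} $ of $ O(k) $ candidate edges from among those it traversed and, for each $ e \in \mathcal{C} $, recurses on $ (F \cup \{e\},\, j-1) $, returning the first nonempty set obtained; it returns~$ \emptyset $ if every branch fails, which is also the outcome at $ j = 0 $ when the cap is exceeded.

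For correctness I would treat the two implications separately. \emph{Upward}: a nonempty~$ U $ is returned only by a halting search, so by the above it is a $ k $-edge-out component containing~$ s $ of edge size $ O((k+1)\Delta) $; minimality is then obtained by a post-processing step confined to the $ O((k+1)\Delta) $ edges spanned by~$ U $, repeatedly replacing~$ U $ by a proper subset containing~$ s $ whenever this does not increase its number of outgoing edges. \emph{Downward}: suppose a $ k $-edge-out component $ C $ with $ s \in C $ and $ |E(C,C)| \le \Delta $ exists. I would show by induction on~$ j $ that the branch forbidding exactly the edges of $ E(C, V \setminus C) $, taken in the order in which the search first traverses them, succeeds. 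Indeed, as long as $ F \subseteq E(C, V \setminus C) $, a depth-first search from~$ s $ in $ G \setminus F $ that never traverses an edge of $ E(C, V \setminus C) $ stays inside~$ C $ and hence halts after at most $ |E(C,C)| \le \Delta $ traversals; so either it halts within the cap and returns a subset of~$ C $ containing~$ s $, or it traverses some edge of $ E(C, V \setminus C) \setminus F $, which must belong to~$ \mathcal{C} $, and forbidding it puts a new element of $ E(C, V \setminus C) $ into~$ F $, so that the induction hypothesis applies (in the base case $ F = E(C, V \setminus C) $ the search is forced to halt inside~$ C $). Since every nonempty returned set contains~$ s $, this yields implication~(1).

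For the query bound, the recursion tree has depth at most $ k+1 $ and branching factor $ 2(k+1) $, hence $ O((2(k+1))^{k+1}) $ nodes, each performing a depth-first search of $ O((k+1)\Delta) $ traversals, i.e., $ O((k+1)\Delta) $ queries in the incidence-lists model; multiplying gives $ O((2(k+1))^{k+2}\Delta) $ queries.

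The step I expect to be the main obstacle is the construction of the $ O(k) $-size candidate set~$ \mathcal{C} $: branching over all $ \Theta((k+1)\Delta) $ traversed edges would be correct but would inflate the query bound to roughly $ \Delta^{k+2} $. This needs a structural fact about the search: a depth-first search that stays inside~$ C $ spends at most $ |E(C,C)| \le \Delta $ traversals, so one can place $ O(k) $ ``checkpoints'' at traversal counts $ \Delta, 2\Delta, \dots, \Theta((k+1)\Delta) $ and argue that the outgoing edge of~$ C $ that was crossed is pinned down to one of $ O(k) $ possibilities, intuitively one per checkpoint and read off from a short suffix of the search stack at that checkpoint. Making this rigorous has to cope with the search repeatedly leaving and re-entering~$ C $ through edges that \emph{enter}~$ C $, and with ``contracting'' the part of the search already committed to~$ C $ between successive recursive calls so that the cap and the candidate count stay under control; this, together with verifying the minimality post-processing, is the technical heart of the argument.
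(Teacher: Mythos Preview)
This theorem is quoted from Chechik et al.\ and not re-proved in the paper. The paper does, however, expose the underlying mechanism when building its own randomized improvement (Theorem~\ref{thm:local procedure edge-out component}): it is an \emph{augmenting-paths} scheme, not an edge-deletion scheme. In each of $k$ rounds one runs a bounded DFS from~$s$, selects a DFS-tree path $\pi_i$ from~$s$ to some chosen vertex, and \emph{reverses} all edges of~$\pi_i$ (as in a residual graph). The paper's Lemma~\ref{lem:out edge reduction} (attributed to~\cite{ChechikHILP17}) is the key accounting fact: each $\pi_i$ whose endpoint lies in $V\setminus C$ reduces the number of $C\to V\setminus C$ edges in the current graph by exactly one, regardless of how many times $\pi_i$ crosses the cut; and Lemma~\ref{lem:reachable vertices are component} shows that whenever a DFS halts within its budget, the reachable set is automatically a minimal $i$-edge-out component. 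The deterministic procedure of Chechik et al.\ branches over $O(k)$ candidate endpoints per round; the paper's contribution is to replace that branching by a single uniform random sample.

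Your approach is genuinely different: you delete (forbid) single edges rather than reverse paths. The upward implication survives---a DFS that halts in $G\setminus F$ with $|F|\le k$ certifies a $k$-edge-out component---and your minimality post-processing is fine. But the candidate step you flag as ``the main obstacle'' is not a detail to be filled in later; it is exactly where edge deletion and path reversal diverge, and where your scheme currently breaks. Under edge deletion you must place into~$\mathcal{C}$ one of at most $k$ \emph{specific} edges in $E(C,V\setminus C)\setminus F$. Your checkpoint heuristic does not do this: since the DFS spends at most $\Delta$ traversals with both endpoints in $C$ \emph{in total} (not per checkpoint interval), all you can conclude is that the first crossed cut edge lies among the first $\Delta+1$ traversed edges---$\Theta(\Delta)$ candidates, not $O(k)$---and once the search has left $C$ (and possibly re-entered), the later checkpoints carry no usable information about where the remaining cut edges sit; your ``short suffix of the stack'' and ``contracting the committed part'' suggestions do not come with an argument and I do not see one. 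In the path-reversal framework the requirement on a candidate is far weaker---only the \emph{endpoint} of the resulting $\pi_i$ must lie outside $C$---and by the original-tail count at most $\Delta+k$ of the $\Theta(k\Delta)$ processed edges are ``bad'' in this sense, which is the asymmetry the $O(k)$-way deterministic branching of Chechik et al.\ exploits. Without switching to path reversal (or supplying a new structural lemma for edge deletion that you have not sketched), your recursion degenerates to the $\Delta^{\Theta(k)}$ regime of Orenstein--Ron rather than the stated $(2(k+1))^{k+2}\Delta$.
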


Our approach is to follow the general algorithmic framework of Chechik et al.\ and to speed it up by using randomization.
In this way we reduce the dependence on $ k $ in the query complexity from exponential to polynomial.
Formally, we achieve the following guarantees.

\begin{theorem}\label{thm:local procedure edge-out component}
There is a randomized algorithm that, given integer parameters $ k \geq 0 $ and~$ \Delta \geq 1 $, a probability parameter $ p < 1 $, a starting vertex~$ s $, and query access to a directed graph~$ G $, with $ O ((k^2 + 1) (\Delta + k) \log (\tfrac{1}{1 - p})) $ queries returns a set of vertices $ U $ such that
(1) if there is a $ k $-edge-out component containing~$ s $ of edge size at most~$ \Delta $ in~$ G $, then $ U \supseteq \{ s \} $ with probability at least $ p $ and
(2) if $ U \supseteq \{ s \} $, then $ U $ is a minimal $ k $-edge-out component of edge size at most $ O((k + 1) (\Delta + k)) $ in~$ G $.
If every query to the graph can be performed in expected constant time, then the running time of the algorithm is $ O ((k^2 + 1) (\Delta + k) \log (\tfrac{1}{1 - p})) $.
\end{theorem}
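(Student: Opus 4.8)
I would keep the high‑level idea of Chechik et al.\ --- grow a vertex set around $s$ by purely local exploration and certify that at most $k$ edges leave it --- but carry out the growth by a Ford--Fulkerson‑type computation whose searches are cut off after a \emph{randomly chosen} amount of work. This randomized truncation is what replaces their deterministic enumeration over candidate cut‑edge sets and collapses the $(2(k+1))^{k+2}$ factor to a single polynomial‑cost run. The starting observation is a normalization: if some $k$‑edge‑out component $C\ni s$ has edge size $\le\Delta$, then the subset $C':=\{v : v\text{ is reachable from }s\text{ using only edges internal to }C\}$ still has $s\in C'$ and at most $k$ edges leaving it, and in addition $\vol(C')\le\Delta+k$. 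So it suffices to detect a set $S\ni s$ with $|E(S,V\setminus S)|\le k$ and $\vol(S)=O(\Delta+k)$.

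The algorithm I have in mind runs at most $k+1$ augmentation rounds, maintaining an integral flow $f$ out of $s$ (a union of at most $k$ edge‑disjoint $s$‑paths). It draws a single threshold $b$ uniformly from $\{1,\dots,2N\}$, where $N=\Theta((k+1)(\Delta+k))$ is an a~priori bound on the volume of the region the searches can touch. In round $i$ it runs a DFS from $s$ in the residual graph $G_f$, stopping it the moment it has traversed $b$ edges. Three things can happen: (a) the DFS discovers it can push one more unit of flow out of the current low‑volume frontier --- then augment $f$ and continue; (b) the DFS exhausts the residual‑reachable set before using up its $b$ traversals (``stuck'') --- then return $U:=$ that residual‑reachable set; (c) the DFS uses up all $b$ traversals first --- then return $U:=\emptyset$. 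If all $k+1$ rounds augment, also return $U:=\emptyset$.

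For property~(2), note that $U\neq\emptyset$ only in outcome~(b), where $f$ is a \emph{maximum} flow and $U$ is its residual‑reachable set; hence $|E(U,V\setminus U)|$ equals the flow value, which is one less than the round index and therefore at most $k$, so $U$ is a $k$‑edge‑out component containing $s$. Since the DFS producing $U$ traversed at most $b\le 2N$ edges, $\vol(U)=O((k+1)(\Delta+k))$ and so its edge size is $O((k+1)(\Delta+k))$, as required. Minimality is automatic: the residual‑reachable set of a maximum flow is contained in the source side of every minimum cut, so any proper subset of $U$ that still contains $s$ has strictly more edges leaving it than $U$. For property~(1), assume $C'$ as above exists. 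Then the maximum flow value is $\le k$, so in some round $i^\star\le k+1$ no augmenting path exists; I would argue that the residual‑reachable region then has volume $\le N$ (it lies inside $C'$ together with the $\le k$ already‑routed paths, each short because each was itself found by a truncated search), so that DFS gets stuck within $N$ traversals, and likewise each earlier round finds its augmenting path within $N$ traversals. On the event $\{b>N\}$ --- of probability $\ge\tfrac12$, and shared by all rounds since $b$ is drawn only once --- no round overflows, round $i^\star$ reports ``stuck'', and a valid $U\supseteq\{s\}$ is returned. A single run thus makes $O((k+1)N)=O((k^2+1)(\Delta+k))$ queries (the same running time under the expected‑$O(1)$ query assumption) and succeeds with probability $\ge\tfrac12$; running it $O(\log\tfrac1{1-p})$ times (at least once) and returning the first nonempty output raises the success probability to $p$ with no new risk of a false positive, which yields the stated bounds; the degenerate case $k=0$ is just a truncated DFS of $O(\Delta)$ edges.

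The step I expect to be the real obstacle is the volume bound $N=O((k+1)(\Delta+k))$ on the region touched by the searches, together with the claim that the earlier rounds locate their augmenting paths within that budget: this is precisely where one must keep the flow paths from wandering away from $s$ --- which is why those searches must be truncated as well --- and it couples the analysis across rounds. The delicate quantitative point is to calibrate the truncation distribution so that one \emph{shared} draw of $b$ keeps \emph{every} round from overflowing with constant probability, instead of paying a factor exponential in the number of rounds; once that is set up, the rest is standard max‑flow/min‑cut bookkeeping together with the minimization argument already used by Chechik et al.
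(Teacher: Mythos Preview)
Your plan has a real gap, and it originates before the step you flag as delicate. You describe the rounds as Ford--Fulkerson augmentations, but there is no sink: a ``flow out of $s$'' that is merely a union of edge-disjoint $s$-paths carries no intrinsic notion of an augmenting path having been \emph{found}, so your cases (a) and (c) are not algorithmically distinguishable as written. If (a) is meant to be ``the DFS reached a vertex outside $C'$,'' the algorithm cannot test that, since $C'$ is unknown; if it is meant to be ``take the DFS-tree path to wherever the search stood when it hit its $b$-th edge,'' then (a) and (c) collapse, and your success argument must show that for a \emph{single} draw of $b$ the endpoint of \emph{every} such path lands outside $C'$. But the residual graph in round $i$ depends on the paths chosen in earlier rounds, which themselves depend on $b$, so the bad set of thresholds shifts from round to round in a $b$-dependent way; the event $\{b>N\}$ does not by itself control this. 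Your volume bound is also circular: each truncated path may have length up to $b\le 2N$, so $k$ paths contribute up to $2kN$ edges to the region you are trying to bound by $N$.

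The paper sidesteps all of this by randomizing differently. Each of the $k$ rounds runs DFS with the \emph{same fixed} budget $2k(\Delta+k)$; if the DFS does not finish, the algorithm samples one of the $2k(\Delta+k)$ traversed edges uniformly at random and reverses the DFS-tree path to the tail (in $E$) of that edge. At most $\Delta+k$ edges of $E$ have tail in $C$, so each sampled endpoint lies in $C$ with probability at most $1/(2k)$; the $k$ samples are drawn afresh each round, and a plain union bound gives failure probability $\le\tfrac12$. When all $k$ endpoints lie outside $C$, the out-edge-reduction lemma of Chechik et al.\ forces zero residual edges from $C$ to $V\setminus C$, so the final DFS is confined to $C$ and stops within $\Delta$ edges --- no bound on the lengths of the earlier paths is ever needed, which is exactly what dissolves your circularity. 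The idea you are missing is thus to put the randomness in the \emph{choice of path endpoint within a fixed-size edge pool}, refreshed independently each round, rather than in a single shared truncation threshold.
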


We give a very general reduction that immediately makes this improvement carry over to detecting vertex cuts with the following guarantees.

\begin{theorem}\label{thm:local procedure vertex-out component}
There is a randomized algorithm that, given integer parameters $ k \geq 0 $ and~$ \Delta \geq 1 $, a probability parameter $ p < 1 $, a starting vertex~$ s $, and query access to a directed graph~$ G $, with $ O ((k^2 + 1) \Delta \log (\tfrac{1}{1 - p})) $ queries returns a set of vertices~$ U $ such that
(1) if there is a $ k $-vertex-out component containing~$ s $ of (symmetric) volume at most~$ \Delta $ in~$ G $, then $ U \supseteq \{ s \} $ with probability at least $ p $ and
(2) if $ U \supseteq \{ s \} $, then $ U $ is a $ k $-vertex-out component of (symmetric) volume at most $ O ((k + 1) (\Delta + k)) $ in~$ G $.
If every query to the graph can be performed in expected constant time, then the running time of the algorithm is $ O ((k^2 + 1) (\Delta + k) \log (\tfrac{1}{1 - p})) $.
\end{theorem}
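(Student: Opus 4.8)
The plan is to reduce the vertex version to the edge version of Theorem~\ref{thm:local procedure edge-out component} by the classical vertex‑splitting construction, carried out locally. First I would form the split graph $G^*$ on vertex set $\{v_{\mathrm{in}},v_{\mathrm{out}}:v\in V\}$ with an \emph{internal} edge $(v_{\mathrm{in}},v_{\mathrm{out}})$ for every $v$ and a \emph{link} edge $(u_{\mathrm{out}},v_{\mathrm{in}})$ for every $(u,v)\in E$; then vertex cuts of $G$ are exactly the edge cuts of $G^*$ that use only internal edges. The crucial point is that $G^*$ need not be materialised: each incidence query to $G^*$ can be answered with $O(1)$ queries to $G$, since $v_{\mathrm{in}}$ has a single out‑edge and $v_{\mathrm{out}}$ a single in‑edge (the internal one), the remaining out‑edges of $v_{\mathrm{out}}$ are the out‑edges of $v$ in $G$, and the remaining in‑edges of $v_{\mathrm{in}}$ are the in‑edges of $v$. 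The algorithm then runs the procedure of Theorem~\ref{thm:local procedure edge-out component} on $G^*$ from the start vertex $s_{\mathrm{out}}$ with parameters $k':=\min\{k,\Delta\}$, $\Delta^*:=2\Delta+1$, and $p$, obtains a set $U^*$, and outputs $U:=\{v\in V:v_{\mathrm{out}}\in U^*\}$.

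Two facts drive the correctness. For \emph{completeness}: if $G$ has a $k$-vertex-out component $C\ni s$ with $\vol(C)\le\Delta$ and out‑boundary $B$, then after replacing $C$ by the vertices reachable from $s$ inside $C$ (which only shrinks $B$ and $\vol(C)$) I may assume $|C|\le\vol(C)+1\le\Delta+1$; the set $C^*:=\{v_{\mathrm{in}},v_{\mathrm{out}}:v\in C\}\cup\{v_{\mathrm{in}}:v\in B\}$ is then a $k'$-edge-out component of $G^*$ containing $s_{\mathrm{out}}$, since its only outgoing edges are the $|B|\le\min\{k,\vol(C)\}\le k'$ internal edges of the $B$-vertices, and its edge size is $|C|+\vol(C)\le\Delta^*$; so Theorem~\ref{thm:local procedure edge-out component}(1) puts $s_{\mathrm{out}}\in U^*$, hence $s\in U$, with probability at least $p$. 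Here the restriction to $k'=\min\{k,\Delta\}$ is harmless — a target component of volume at most $\Delta$ automatically has out‑boundary at most $\Delta$, hence is already a $k'$-vertex-out component — and it is precisely what keeps the query bound free of an additive $k$. For \emph{soundness}: if $s\in U$ then $s_{\mathrm{out}}\in U^*$, so by Theorem~\ref{thm:local procedure edge-out component}(2) the set $U^*$ is a \emph{minimal} $k'$-edge-out component of $G^*$ of edge size $O((k'+1)(\Delta^*+k'))=O((k+1)(\Delta+k))$; minimality together with $s_{\mathrm{out}}\in U^*$ forces $U^*$ to be exactly the set reachable from $s_{\mathrm{out}}$ inside $G^*[U^*]$, and since the only in‑edge of any $v_{\mathrm{out}}$ is the internal one, this means $v_{\mathrm{out}}\in U^*$ pulls in $v_{\mathrm{in}}$ (up to the single exceptional vertex $s$). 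A short case analysis of the edges leaving $U^*$ then shows that the out‑boundary of $U$ has size at most the number of leaving edges of $U^*$, which is at most $k'\le k$, so $U$ is a $k$-vertex-out component, and that every out‑edge of $U$ in $G$ is either a link edge internal to $G^*[U^*]$ or one of the at most $k'$ leaving link edges of $U^*$, whence $\vol(U)\le(\text{edge size of }U^*)+k'=O((k+1)(\Delta+k))$.

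The complexity then follows by substitution: Theorem~\ref{thm:local procedure edge-out component} performs $O((k'^2+1)(\Delta^*+k')\log\tfrac1{1-p})$ queries to $G^*$, and since $k'\le k$ and $\Delta^*+k'=O(\Delta)$ this is $O((k^2+1)\Delta\log\tfrac1{1-p})$; each $G^*$-query is simulated by $O(1)$ queries to $G$, so the same bound holds for $G$, and because the simulation of a $G^*$-query is expected‑constant‑time whenever a $G$-query is, the stated (expected) running time follows as well. The symmetric‑volume statement is obtained in the same way: completeness is immediate from $\vol^*\ge\vol$, and for soundness one additionally argues that the detected component also has few incoming edges (with nothing to check on undirected inputs, where $\vol^*=\vol$).

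I expect the soundness bookkeeping to be the delicate part: one has to verify that the component output by the edge procedure on $G^*$, despite being guaranteed only to be a \emph{minimal edge-out} component, really does project to a legitimate \emph{vertex}-out component of $G$, and that the passage from ``edge size in $G^*$'' to ``volume in $G$'' costs only an additive $O(k)$. The symmetric‑volume refinement in the directed case — ensuring the detected component also has boundedly many incoming edges rather than only outgoing ones — is the one genuinely new wrinkle beyond transcribing the classical gadget.
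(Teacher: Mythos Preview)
Your reduction via vertex splitting and projection $U=\{v:v_{\mathrm{out}}\in U^*\}$ is the same approach the paper takes, and your treatment of the plain \emph{volume} version is correct. Two minor differences are worth noting: the paper identifies $s_{\mathrm{in}}=s_{\mathrm{out}}$ in the split graph (harmless here), and instead of your $k'=\min\{k,\Delta\}$ trick it simply runs the edge procedure with a \emph{volume} budget rather than an edge-size budget, which directly eliminates the additive~$k$ in the query bound (a component of volume $\Delta'$ has exactly $\Delta'$ edges with tail inside it, not $\Delta'+k$). Both routes work.

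There is, however, a genuine gap in your handling of the \emph{symmetric volume} variant. Your one-line plan --- ``one additionally argues that the detected component also has few incoming edges'' --- is not true and cannot be made true without changing the algorithm. A minimal $k$-edge-out component of $G^*$ can project to a set $U$ with arbitrarily many incoming edges in~$G$: take $s$ with a single out-edge but $10^6$ in-edges; the edge procedure happily returns $U=\{s\}$, whose symmetric volume is $10^6$, violating~(2). The paper resolves this by modifying the edge-detection procedure itself: instead of budgeting edge size or volume, it budgets the \emph{restricted symmetric volume} $\vol'(C')=|E(C',C')|+|E(C',B')|+|E(V'_s\setminus C',I')|$, where $I'$ is the set of ``interior'' vertices $v_{\mathrm{in}}\in C'$ with $v_{\mathrm{out}}\in C'$. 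Operationally, each DFS additionally scans the \emph{incoming} edges of a vertex as soon as it becomes interior and charges them against the budget. One then proves (Lemmas~\ref{lem:vertex component implies edge component} and~\ref{lem:edge component implies vertex component}) that $\vol^*(C)\le\vol'(C')$ for the projection and $\vol'(C')\le 3\,\vol^*(C)$ for the lift, which is what makes both directions of the symmetric statement go through. This modification of the underlying procedure is the missing idea in your proposal.
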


This problem for vertex cuts has recently been studied by Nanongkai, Saranurak, and Yingchareonthawornchai~\cite{NanongkaiSY19} in the context of computing the global vertex connectivity.
They get the following guarantees.

\begin{theorem}[\cite{NanongkaiSY19}]
There is a deterministic algorithm that, given integer parameters $ k \geq 0 $ and~$ \Delta \geq 1 $, a starting vertex~$ s $, and a directed graph~$ G $ in adjacency-list representation, in time $ \tilde O (k \Delta^{3/2}) $ returns a set of vertices~$ U $ such that
(1) if there is a $ k $-vertex-out component containing~$ s $ of volume at most~$ \Delta $ in~$ G $, then $ U \supseteq \{ s \} $ with probability at least $ p $ and
(2) if $ U \supseteq \{ s \} $, then $ U $ is a $ k $-vertex-out component of volume at most $ O ((k + 1) \Delta) $ in~$ G $.
\end{theorem}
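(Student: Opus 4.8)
The plan is to reduce the $ k $-vertex-out component problem to a $ k $-edge-out component problem and then solve the latter by a bounded, flow-based local search, following the approach of~\cite{NanongkaiSY19}.

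First I would apply the standard vertex-splitting reduction: replace each vertex $ v $ by two copies $ v_{\mathrm{in}}, v_{\mathrm{out}} $ joined by an \emph{internal} edge $ (v_{\mathrm{in}}, v_{\mathrm{out}}) $, and replace each original edge $ (u,v) $ by $ (u_{\mathrm{out}}, v_{\mathrm{in}}) $; each query to the split graph $ G' $ is answered with $ O(1) $ queries to $ G $. A $ k $-vertex-out component $ C $ of $ G $ containing $ s $ with out-neighborhood $ B $, $ |B| \le k $, maps to the set $ C' = \{ v_{\mathrm{in}}, v_{\mathrm{out}} : v \in C \} \cup \{ w_{\mathrm{in}} : w \in B \} $, whose only outgoing edges in $ G' $ are the $ |B| \le k $ internal edges of the vertices of $ B $, and $ \vol_{G'}(C') = \Theta(\vol_{G}(C)) $. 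Conversely, a minimal $ k $-edge-out component of $ s_{\mathrm{out}} $ in $ G' $ of small edge size, restricted to the $ \mathrm{out} $-copies, projects back to a $ k $-vertex-out component of $ s $ in $ G $ of volume $ O((k+1)\Delta) $. So it suffices to give a local procedure that, given $ s $, $ k $, $ \Delta $, either certifies that $ s $ lies in no $ k $-edge-out component of edge size $ \le \Delta $ or returns a minimal one of edge size $ O((k+1)\Delta) $.

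Second, for the edge problem I would run Ford--Fulkerson on the unit-capacity residual graph rooted at $ s $: repeatedly look for an augmenting path by a depth-first search started at $ s $, where each DFS is \emph{truncated} once it has traversed a prescribed number of edges. If a DFS backtracks all the way to $ s $ with no residual edge left to try while staying within its edge budget, then by max-flow/min-cut applied to the current partial flow the set $ S \ni s $ of visited vertices has $ |\partial^{\mathrm{out}}(S)| $ equal to the amount of flow routed so far; if that amount is $ < k $, return $ S $. If, instead, $ k $ augmenting searches succeed and a $ (k{+}1) $-st one succeeds as well, certify that no $ k $-edge-out component of $ s $ of edge size $ \le \Delta $ exists.

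The crux --- and the source of both the randomization and the $ \tilde{O}(k\Delta^{3/2}) $ bound --- is that this residual graph has no sink, so a ``successful'' augmenting search is not self-evidently bounded: in a graph with no small cut around $ s $ a DFS could wander far. I would resolve this as in~\cite{NanongkaiSY19} by routing flow toward a \emph{randomly sampled} terminal rather than a fixed sink, so that, conditioned on the favorable event that the sample is placed appropriately relative to a hypothetical small component $ S $, an augmenting DFS that overruns a budget of $ \tilde{O}(\Delta) $ edges witnesses that no such $ S $ exists; each repetition then performs $ O(k) $ truncated searches at cost $ \tilde{O}(k\Delta) $, and since a single sample lands in the favorable configuration with probability $ \Omega(\Delta^{-1/2}) $ one repeats $ \tilde{O}(\Delta^{1/2}) $ times, for a total of $ \tilde{O}(k\Delta^{3/2}) $. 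The two delicate points are (i) calibrating the per-search edge budget so that a returned set provably has volume $ O((k{+}1)\Delta) $ rather than merely $ O(\Delta) $, and (ii) proving the $ \Omega(\Delta^{-1/2}) $ probability bound, which rests on a volume/average-degree argument over the unknown component $ S $; I expect (ii) to be the main obstacle.
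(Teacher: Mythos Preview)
This theorem is not proved in the paper; it is stated purely as a citation of prior work from \cite{NanongkaiSY19}, with no accompanying argument, so there is no in-paper proof to compare your proposal against. (Note also that the statement as printed is internally inconsistent---it says ``deterministic'' yet also ``with probability at least $p$''; the $\tilde O(k\Delta^{3/2})$ result of \cite{NanongkaiSY19} is randomized, which is what you are sketching.)

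Since there is no target proof here, let me just flag one point in your sketch on its own merits. Your vertex-splitting reduction and the bounded Ford--Fulkerson/augmenting-path framework are indeed the right scaffolding and match what \cite{NanongkaiSY19} use. However, your account of where the $\sqrt{\Delta}$ overhead comes from---sampling a random terminal and repeating $\tilde O(\sqrt{\Delta})$ times because the terminal lands ``favorably'' with probability $\Omega(\Delta^{-1/2})$---does not match their mechanism. In \cite{NanongkaiSY19} there is no random sink that a single augmenting search is trying to reach; rather, each DFS is simply truncated after $\Theta(\Delta)$ edge traversals, and the vertex at which it is truncated (which, if a small component $S$ exists, lies outside $S$ with probability $\ge 1/2$) serves as the far endpoint whose tree path is reversed. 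The $\sqrt{\Delta}$ enters through a separate structural argument bounding how many such paths can be packed before the reachable set must shrink, not through a hit-probability of $\Delta^{-1/2}$ for a sampled terminal. Your point (ii) as stated would therefore not go through; if you want to reconstruct their bound you should look instead at the length of the DFS tree paths relative to the volume budget.
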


We remark that Nanongkai, Saranurak, and Yingchareonthawornchai also study an approximate version of this problem where the size of the vertex cut may exceed $ k $ by a factor of $ 1 + \epsilon $.

\subsubsection{Computing the Vertex Connectivity}

Computing the vertex connectivity $ \kappa $ of graph is a classic problem in graph algorithms.
Despite numerous research efforts, the conjectured linear-time algorithm~\cite{AhoHU74} has not yet been found.
In fact, even if $ \kappa $ is constant, no nearly-linear time algorithm has been given in the literature to date.
We give the first algorithm providing this guarantee.

For a long time, the state of the art was either a running time of $ \tilde O (m n) $ due to Henzinger, Rao, and Gabow~\cite{HenzingerRG00}, or a running time of $ \tilde O (n^\omega + n \kappa ^\omega) $ due to Cheriyan and Reif~\cite{CheriyanR94}, where $ \omega $ is the matrix-multiplication exponent for which currently $ \omega < 2.37287 $~\cite{Gall14a} is known.
In undirected graphs, the state of the art was either a running time of $ \tilde O (\kappa n^2) $ due to Henzinger, Rao, and Gabow~\cite{HenzingerRG00} or a running time of $ \tilde O (n^\omega + n \kappa^\omega) $ due to Linial, Lovász and Wigderson~\cite{LinialLW88}.
Very recently, Nanongkai, Saranurak, and Yingchareonthawornchai~\cite{NanongkaiSY19} improved upon some of these bounds by giving an algorithm with running time $ \tilde O (\kappa \cdot \min \{ m^{2/3} n, m^{4/3} \}) $ in directed graphs and an algorithm with running time $ \tilde O (\kappa^{7/3} n^{4/3} + m) $ in undirected graphs.
All of these algorithms are deterministic.
The fastest known deterministic algorithm by Gabow~\cite{Gabow06} has a running time of $ O ( \min \{ n^{3/4}, \kappa^{3/2} \} \cdot \kappa m + m n) $ in directed graphs and a running time of $ O ( \min \{ n^{3/4}, \kappa^{3/2} \} \cdot \kappa^2 n + \kappa n^2 ) $ in undirected graphs.
See~\cite{NanongkaiSY19} for a more thorough overview of prior work.
We significantly improve upon this state of the art, by giving an algorithm with running time $ \tilde O (\kappa^2 m + n) $ in directed graphs and an algorithm with running time $ \tilde O (\kappa^3 n + m) $ in undirected graphs.
Our algorithms are randomized Monte-Carlo algorithms that are correct with high probability.
Table~\ref{tab:results vertex connectivity} gives an overview over all of these results.

\begin{table}[htbp]
    \centering
    \begin{tabular}{c c c c}
    \textbf{Graph class} & \textbf{Running time} & \textbf{Deterministic} & \textbf{Reference} \\
    directed & $ O ( \min \{ n^{3/4}, \kappa^{3/2} \} \cdot \kappa m + m n) $ & yes & \cite{Gabow06} \\
    directed & $ \tilde O (m n) $ & no & \cite{HenzingerRG00} \\
    directed & $ \tilde O (n^\omega + n \kappa ^\omega) $ & no & \cite{CheriyanR94} \\
    directed & $ \tilde O (\kappa \cdot \min \{ m^{2/3} n, m^{4/3} \}) $ & no &  \cite{NanongkaiSY19} \\
    directed & $ \tilde O (\kappa^2 m + n) $ & no & \textbf{here} \\
    undirected & $ O ( \min \{ n^{3/4}, \kappa^{3/2} \} \cdot \kappa^2 n + \kappa n^2 ) $ & yes & \cite{Gabow06} \\
    undirected & $ \tilde O (\kappa n^2) $ & no & \cite{HenzingerRG00} \\
    undirected & $ \tilde O (n^\omega + n \kappa ^\omega) $ & no & \cite{LinialLW88} \\
    undirected & $ \tilde O (\kappa^{7/3} n^{4/3} + m) $ & no & \cite{NanongkaiSY19} \\
    undirected & $ \tilde O (\kappa^3 n + m) $ & no &  \textbf{here}
    \end{tabular}
    \caption{Comparison of algorithms for computing the vertex connectivity $ \kappa $}
    \label{tab:results vertex connectivity}
\end{table}

Our algorithms are the first to run in nearly linear time for a wide range of values of $ \kappa $ (whenever $ \kappa $ is polylogarithmic in $ n $).
Furthermore, in undirected graphs we give the current fastest solution as long as $ \kappa = \tilde o (n^{(\omega - 1) / 3}) $, i.e., roughly when $ \kappa \leq n^{0.45} $.
In directed graphs, the precise number depends on the density of the graph, but in any case we do give the fastest solution as long as $ \kappa = \tilde o (n^{(\omega - 2) / 2}) $, i.e., roughly when $ \kappa \leq n^{0.18} $.

\subsubsection{Computing the Maximal $k$-Edge Connected Subgraphs}

For a set $ C \subseteq $ of vertices, its induced subgraph~$ G [C] $ is a maximal $k$-edge connected subgraph of~$ G $ if $ G [C] $ is $k$-edge connected and no superset of $ C $ has this property.
The problem of computing all maximal $k$-edge connected subgraphs of $ G $ is a natural generalization of computing strongly connected components to higher edge connectivity.

For a long time, the state of the art for this problem was a running time of $ O (k m n \log{n}) $ for $ k > 2 $ and $ O (m n) $ for $ k = 2 $.
The first improvement upon this was given by Henzinger, Krinninger, and Loitzenbauer with a running of $ O (k^{O(1)} n^2 \log{n}) $ for $ k > 2 $ and $ O (n^2) $ for $ k = 2 $.
The second improvement was given by Chechik et al.~\cite{ChechikHILP17} with a running time of $ O ((2k)^{k + 2} m^{3/2} \log{n} + n) $ for $ k > 2 $ and $ O (m^{3/2}) $ for $ k = 2 $.
In undirected graphs, a version of the algorithm by Chechik et al.\ runs in time $ O ((2k)^{k + 2} m \sqrt{n} \log{n} + n) $ for $ k \geq 4 $ and in time $ O (m \sqrt{n} + n) $ for $ k \leq 3 $.
In this paper, we improve upon this by designing an algorithm that has expected running time $ O (k^{3/2} m^{3/2} \log n) $, reducing the dependence on $ k $ from exponential to polynomial.
We furthermore improve the running time for undirected graphs to $ O (k^4 n^{3/2} \log{n} + k m \log^2 n) $ thus improving both the dependence on $ k $ \emph{and} on $ m $.
Table~\ref{tab:results connected subgraphs} compares our results to previous results.

\begin{table}[htbp]
    \centering
    \begin{tabular}{c c c c c}
    \textbf{Parameter} & \textbf{Graph class} & \textbf{Running time} & \textbf{Deterministic} & \textbf{Reference} \\
    $ k = 2 $ & directed & $ O (m n) $ & yes & implied by~\cite{GabowT85} \\
    $ k \geq 3 $ & directed & $ O (k m n \log{n}) $ & yes & implied by~\cite{Gabow95} \\
    $ k = 2 $ & directed & $ O (n^2) $ & yes & \cite{HenzingerKL15} \\
    $ k \geq 3 $ & directed & $ O (k^{O(1)} n^2 \log{n}) $ & yes & \cite{HenzingerKL15} \\
    $ k = 2 $ & directed & $ O (m^{3/2} + n) $ & yes & \cite{ChechikHILP17} \\
    $ k \geq 3 $ & directed & $ O ((2k)^{k + 2} m^{3/2} \log{n} + n) $ & yes & \cite{ChechikHILP17} \\
    $ k \leq 3 $ & undirected & $ O (m \sqrt{n}) $ & yes & \cite{ChechikHILP17} \\
    $ k \geq 4 $ & undirected & $ O ((2k)^{k + 2} m \sqrt{n} \log{n} + n) $ & yes & \cite{ChechikHILP17} \\
    $ k \geq 2 $ & directed & $ O (k^{3/2} m^{3/2} \log n + n) $ & no & \textbf{here} \\
    $ k \geq 2 $ & undirected & $ O (k^4 n^{3/2} \log{n} + k m \log^2 n) $ & no & \textbf{here}
    \end{tabular}
    \caption{Comparison of algorithms for computing the maximal $k$-edge connected subgraphs}
    \label{tab:results connected subgraphs}
\end{table}

Note that another natural way of generalizing the concept of strongly connected components to higher edge connectivity is the following:
A $k$-edge connected component is a maximal subset of vertices such that any pair of distinct vertices is $k$-edge connected in $ G $.\footnote{Note that the $ k $ edge disjoint paths between a pair of vertices in a $k$-edge connected component might use edges that are not contained in the $k$-edge connected component. This is not allowed for maximal connected subgraphs.}
For a summary on the state of the art for computing the maximal $k$-edge connected subgraphs and components in both directed and undirected graphs, as well as the respective counterparts for vertex connectivity, we refer to~\cite{ChechikHILP17} and the references therein.

\subsubsection{Property Testing for Higher Connectivity}

After the seminal work of Goldreich and Ron~\cite{GoldreichR02}, property testing algorithms for $ k $-connectivity have been studied for various settings.
Table~\ref{tab:results unbounded degree} shows the state of the art in the unbounded-degree model and compares it with our results, whereas Table~\ref{tab:results bounded degree} shows the state of the art in the bounded-degree model and compares it with our results.
Observe that we subsume all prior results.
In particular, we solve the open problem of Goldreich and Ron~\cite{GoldreichR02} asking for a faster property testing algorithm for $ k $-edge connectivity in undirected graphs of bounded-degree.
Furthermore, we solve the open problem of Orenstein and Ron~\cite{OrensteinR11} asking for faster property testing algorithms for $ k $-edge connectivity and $ k $-vertex connectivity with polynomial dependence on $ k $ in the query complexity.

\begin{table}[htbp]
    \centering
    \begin{tabular}{l c c}
         & \textbf{State of the art} & \textbf{Here} \\
        undirected $k$-edge connectivity & $ \tilde O \left( \frac{k^4}{(\epsilon \davg)^4} \right) $~\cite{ParnasR02} & $ \tilde O \left( \frac{k^4}{(\epsilon \davg)^2} \right) $ \\ 
        directed $k$-edge connectivity & $ \tilde O \left( \left( \frac{c k}{\epsilon \davg} \right)^{k+1} \right) $~\cite{YoshidaI10,OrensteinR11} & $ \tilde O \left( \frac{k^4}{(\epsilon \davg)^2} \right) $ \\
        undirected $k$-vertex connectivity & $ \tilde O \left( \left( \frac{c k}{\epsilon \davg} \right)^{k+1} \right) $~\cite{OrensteinR11} & $ \tilde O \left( \frac{k^5}{(\epsilon \davg)^2} \right) $ \\
        directed $k$-vertex connectivity & $ \tilde O \left( \left( \frac{c k}{\epsilon \davg} \right)^{k+1} \right) $~\cite{OrensteinR11} & $ \tilde O \left( \frac{k^5}{(\epsilon \davg)^2} \right) $
    \end{tabular}
    \caption{Comparison of property testing algorithms for higher connectivity in unbounded-degree graphs}
    \label{tab:results unbounded degree}
\end{table}

\begin{table}[htbp]
    \centering
    \begin{tabular}{l c c}
         & \textbf{State of the art} & \textbf{Here} \\
        undirected $2$-edge connectivity & $ O \left( \frac{\log^2 (\frac{1}{\epsilon d})}{\epsilon} \right) $~\cite{GoldreichR02} & $ O \left( \frac{\log^2 (\frac{1}{\epsilon d})}{\epsilon} \right) $ \\
        undirected $3$-edge connectivity & $ O \left( \frac{\log (\frac{1}{\epsilon d})}{\epsilon^2 d} \right) $~\cite{GoldreichR02} & $ O \left( \frac{\log^2 (\frac{1}{\epsilon d})}{\epsilon} \right) $ \\
        undirected $k$-edge connectivity & $ \tilde O \left( \frac{k^3}{\epsilon^{3 - \frac{2}{k}} d^{2 - \frac{2}{k}}} \right) $~\cite{GoldreichR02} & $ \tilde O \left( \frac{k^3}{\epsilon} \right) $ \\
        directed $k$-edge connectivity & $ \tilde O \left( \left( \frac{c k}{\epsilon d} \right)^k d \right) $~\cite{YoshidaI10} & $ \tilde O \left( \frac{k^3}{\epsilon} \right) $ \\
        undirected $k$-vertex connectivity & $ \tilde O \left( \left( \frac{c k}{\epsilon d} \right)^k d \right) $~\cite{YoshidaI12} & $ \tilde O \left( \frac{k^3}{\epsilon} \right) $ \\
        directed $k$-vertex connectivity & $ \tilde O \left( \left( \frac{c k}{\epsilon d} \right)^k d \right) $~\cite{OrensteinR11} & $ \tilde O \left( \frac{k^3}{\epsilon} \right) $
    \end{tabular}
    \caption{Comparison of property testing algorithms for higher connectivity in bounded-degree graphs}
    \label{tab:results bounded degree}
\end{table}

Note that the query complexity of the $k$-edge connectivity tester for undirected bounded-degree graphs in Theorem~3.1 of~\cite{GoldreichR02} is stated as $ O \left( \frac{k^3 \log^2 (1 / (\epsilon d))}{\epsilon^{3 - \frac{2}{k}} d^{2 - \frac{2}{k}}} \right) $.
However, to the best of our understanding, this bound only applies for $ (k-1) $-connected graphs.
Following Algorithm~3.18 in~\cite{GoldreichR02}, the query complexity in arbitrary bounded-degree graphs should therefore be $ O \left( \frac{k^3 \log^2 (\frac{k}{\epsilon d})}{\epsilon^{3 - \frac{2}{k}} d^{2 - \frac{2}{k}}} \right) $.
While such a deviation is certainly marginal and often hidden with good reason in the $ \tilde O (\cdot) $-notation, we do report it here to make clear that no further arguments than the combinatorial ones of Orenstein and Ron~\cite{OrensteinR11} and the algorithmic ones in this paper are needed to obtain the state of the art.

\paragraph{Independent Work}
In follow-up work to~\cite{NanongkaiSY19}, Nanongkai, Saranurak, and Yingchareonthawornchai have independently claimed results with the same guarantees as ours for locally computing a bounded-size edge or vertex cut, for computing the vertex connectivity of a directed or an undirected graph, and for computing the maximal $k$-edge connected subgraphs of a directed graph~\cite{Saranurak19}.

\section{Local Cut Detection}\label{sec:local cut detection}

In this section we first prove Theorem~\ref{thm:local procedure edge-out component} by giving a fast local procedure for detecting a $ k $-edge-out component of edge size at most $ \Delta $.
We then obtain an analogous statement for vertex connectivity.

\subsection{Detecting Bounded-Size Edge Cuts}

We exploit that a $ k $-edge out component has at most $ k $ edge-disjoint paths leaving the component.
Once we ``block'' these paths, we will not be able to leave the component in any other way and may conclude that the set of vertices reachable from $ s $ is a $ k $-edge out component.
In particular, we try to find these paths in an iterative manner using only simple depth-first searches (DFS).
In principle, we can hope to keep each DFS ``local'' by exploring only the neighborhood of $ s $ as the component has at most $ \Delta $ edges.
However, since we do not know the component in advance, we do not know which vertices visited by each DFS are outside of the component.
Our main idea is to perform each DFS up to a ``budget'' for processing $ \Omega (k (\Delta + k)) $ edges and to then sample one of the processed edges uniformly at random.\footnote{We consider the variant of DFS using a stack in which the stack initially contains the starting vertex and in each iteration a vertex is popped from the stack and visited, which means that all its outgoing edges are processed by pushing their other endpoints on the stack.}
We add the path from $ s $ to the tail of the sampled edge in the current DFS tree to our set of chosen paths.
If there is a $k$-edge-out component containing $ s $ of edge size at most $ \Delta $, then there are only $ \Delta + k $ edges whose tails are inside the component.
Thus, the endpoint of our new path is lying inside of the component with probability only $ O (\tfrac{\Delta + k }{k \cdot (\Delta + k)}) = O (\tfrac{1}{k}) $.
We repeat this process $ k $~times, blocking the edges of the paths chosen so far in each DFS. 
Thus, the probability of \emph{one} of the tails of a sampled edge lying inside of the component is at most $ O (k \cdot \tfrac{1}{k}) = O (1) $ by the union bound.
In this way, we obtain $ k $~paths leaving the component with constant probability and we can verify that an additional DFS not using one of these paths cannot reach more than the $ \Delta $ edges of the component.
By a standard boosting approach we can increase the success probability at the cost of repeating this algorithm multiple times.

As described so far, this approach still has the problem that there might be no small $k$-edge-out component containing $ s $ at all and the final DFS might simply fail to process more than $ \Delta $ edges because our choice of paths blocked some relevant edges.
In particular, each of our chosen paths might leave and re-renter the component multiple times, as we only have a sufficiently high probability that the \emph{endpoint} of the path is outside of the component.
To avoid such a situation, we do not literally block the edges of the paths found so far by removing them from the graph.
Instead, we use the augmenting paths framework of Chechik et al.~\cite{ChechikHILP17} that after each iteration reverses the edges of our chosen path, similar to residual graph constructions in maximum flow algorithms.

In more detail, our algorithm works as follows:
First, we perform up to $ k $ depth-first searches up to a budget of $ 2 k (\Delta + k) $ edges.
Consider the $ i $-th DFS.
If the DFS is completed before the budget on the number of edges is exceeded (i.e., if the DFS processes less than $ 2 k (\Delta + k) $ edges), we return the set of vertices found in the DFS as a $k$-edge-out component.
Otherwise, we sample one edge $ (u_i, v_i) $ processed by the DFS uniformly at random and let $ \pi_i $ be the path from $ s $ to the tail $ u_i $ of this edge in the DFS tree.
A special case arises if the edge $ (u_i, v_i) $ is the reversal of the edge $ (v_i, u_i) $.
If this happens, then we let $ \pi_i $ be the path from $ s $ to $ v_i $, the tail of the original edge in $ E $, in the DFS tree.
We then reverse the edges on $ \pi_i $ in the graph and start the next DFS.
Finally, we perform another DFS up to a budget of $ \Delta + 1 $ edges.
Again, if the DFS is completed before the budget on the number of edges is exceeded, we return the set of vertices found in the DFS as a $k$-edge-out component.
Otherwise, we return the empty set to indicate that no $ k $-out-edge component of edge size at most $ \Delta $ has been found.
The pseudocode of this procedure is given in Algorithm~\ref{alg:local procedure}.

\begin{algorithm2e}
\caption{Local procedure for detecting a $ k $-edge-out component}\label{alg:local procedure}
\SetKwFunction{DetectComponent}{DetectComponent}
\SetKwFunction{DetectComponentParameterized}{DetectComponentParam}
\tcp{The procedures have query access to the input graph $ G $. They try to detect a $k$-edge-out component of edge size at most~$ \Delta $ containing~$ s $.}
\Procedure{\DetectComponent{$ s $, $ k $, $ \Delta $, $ G = (V, E) $}}{
    $ E_0 \gets E $\;
    $ G_0 \gets (V, E_1) $\;
    \For{$ i = 0 $ \KwTo $ k - 1 $}{
        Perform depth-first search from $ s $ in $ G_i $ processing up to $ 2 k (\Delta + k) $ edges~$ F_i $ and visiting vertices~$ U_i $ with resulting partial DFS tree~$ T_i $\;
        \If{$ |F_i| < 2 k (\Delta + k) $}{
            \Return $ U_i $\;
        }
        Pick one edge $ e_i = (u_i, v_i) $ from $ F_i $ uniformly at random\;
        \eIf(\tcp*[f]{Check if $ e_i $ has been reversed}){$ e_i \in E $}{
            Let $ \pi_i $ be the path from $ s $ to $ u_i $ in $ T_i $\;
        }{
            Let $ \pi_i $ be the path from $ s $ to $ v_i $ in $ T_i $\;
        }
        \tcp{Reverse the edges of $ \pi_i $}
        $ E_{i + 1} = E_i \setminus \{ (u, v) : (u, v) \in \pi_i \} \cup \{ (v, u) : (u, v) \in \pi_i \} $\;\label{line:reverse edges on path}
        $ G_{i + 1} = (V, E_{i + 1}) $\;
    }
    Perform depth-first search from $ s $ in $ G_k $ processing up to $ \Delta + 1 $ edges~$ F_k $ and visiting vertices~$ U_k $$ F_k $\;\label{line:final DFS}
    \eIf{$ |F_k| \leq \Delta $}{
        \Return $ U_k $\;
    }{
        \Return $ \emptyset $\;\label{line:algorithm returns 0}
    }
}

\BlankLine

\tcp{The additional parameter $ p $ controls the success probability}
\Procedure{\DetectComponentParameterized{$ s $, $ k $, $ \Delta $, $ G $, $ p $}}{
    \For{$ \lceil \log (\frac{1}{1 - p}) \rceil $ times}{\label{line:repeat procecdure}
        $ U \gets $ \DetectComponent{$ s $, $ k $, $ \Delta $, $ G $}\;
        \If{$ U \neq \emptyset $}{
            \Return $ U $\;
        }
    }
    \Return $ \emptyset $\;
}
\end{algorithm2e}

In the following, we prove Theorem~\ref{thm:local procedure edge-out component} by a series of lemmas.
We start with the following lemmas from Chechik et al.~\cite{ChechikHILP17}; we give the proofs for completeness using our own notation.
\begin{lemma}[\cite{ChechikHILP17}]\label{lem:out edge reduction}
Let $ S \subseteq V $ be a set of vertices containing~$ s $ and let $ T = V \setminus S $.
Assume that $ T $ contains $ \ell $ of the endpoints of the paths $ \pi_0, \ldots, \pi_i $ found in Procedure \DetectComponent for some $ \ell \leq i + 1 $ and any $ i \leq k - 1 $.
Then in $ G_{i+1} $ there are $ \ell $ fewer edges from $ S $ to $ T $ than in $ G $.
\end{lemma}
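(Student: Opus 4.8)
The plan is to isolate the effect of a single edge reversal and then telescope. Fix $i$ with $0 \le i \le k-1$, and suppose the paths $\pi_0, \dots, \pi_i$ are all found by the procedure, so that $G_{j+1}$ arises from $G_j$ by reversing the edges of $\pi_j$ (Line~\ref{line:reverse edges on path}) for each $j$ with $0 \le j \le i$; write $w_j \in \{u_j, v_j\}$ for the endpoint of $\pi_j$ other than $s$, and recall $G_0 = G$, $E_0 = E$. I would first show that for each such $j$,
\[
|E_j(S,T)| - |E_{j+1}(S,T)| = \begin{cases} 1 & \text{if } w_j \in T, \\ 0 & \text{if } w_j \in S. \end{cases}
\]
To see this, note $\pi_j$ is a path in the DFS tree $T_j$, hence a simple path (indeed a walk would suffice) from $s \in S$ to $w_j$ using edges of $G_j$. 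Tracking the side of the partition $(S,T)$ as one traverses $\pi_j$ from $s$ to $w_j$, the number $a_j$ of edges of $\pi_j$ directed from $S$ to $T$ minus the number $b_j$ directed from $T$ to $S$ equals $1$ if $w_j \in T$ and $0$ if $w_j \in S$ (edges internal to $S$ or to $T$ contribute to neither count). Reversing $\pi_j$ turns its $a_j$ many $S\to T$ edges into $T\to S$ edges and its $b_j$ many $T\to S$ edges into $S\to T$ edges, while leaving internal edges internal; treating the edge set with multiplicities as in the residual-graph reading of Line~\ref{line:reverse edges on path}, this gives $|E_{j+1}(S,T)| = |E_j(S,T)| - a_j + b_j$, which is the displayed identity.

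Summing the identity over $j = 0, 1, \dots, i$ then telescopes to
\[
|E_0(S,T)| - |E_{i+1}(S,T)| = \bigl|\{\, j : 0 \le j \le i,\ w_j \in T \,\}\bigr| = \ell ,
\]
the last step being the hypothesis that $T$ contains exactly $\ell$ of the endpoints of $\pi_0, \dots, \pi_i$; the bound $\ell \le i+1$ holds trivially as there are only $i+1$ such paths. Since $E_0 = E$, this says that $G_{i+1}$ has exactly $\ell$ fewer edges from $S$ to $T$ than $G$, which is the claim.

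I do not expect a genuine obstacle here, as the argument is elementary counting. The two spots deserving explicit care are: (i) fixing the bookkeeping convention for Line~\ref{line:reverse edges on path}, i.e.\ reading the edge set as a multiset in the usual residual-graph manner, so that each reversed path edge contributes exactly $\pm 1$ to the $S\to T$ count rather than possibly merging with a pre-existing antiparallel edge; and (ii) confirming that each $\pi_j$ is a well-defined simple path, which holds since it is a root-to-leaf path in the DFS tree $T_j$ — here the special case in the definition of $\pi_j$ (selecting $v_j$ rather than $u_j$ when $e_j$ is a reversed edge) only affects which vertex $w_j$ is, hence only which side of $(S,T)$ it lies on, which is precisely the quantity aggregated into $\ell$.
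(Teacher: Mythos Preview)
Your proof is correct and follows essentially the same approach as the paper's: both arguments track how many edges of each $\pi_j$ cross from $S$ to $T$ versus from $T$ to $S$, observe that the difference is $1$ or $0$ according to whether the endpoint lies in $T$ or $S$, and telescope. The only cosmetic difference is that the paper first contracts $S$ and $T$ to single vertices $s'$ and $t'$ in a multigraph before carrying out the same count, whereas you count crossing edges directly; your explicit remark on the multiset reading of Line~\ref{line:reverse edges on path} plays the role that the paper's ``(multi-)graph'' convention plays there.
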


\begin{proof}
Consider the (multi-)graph $ G' $ that is obtained from $ G $ by contracting the vertices of $ S $ to a single vertex $ s' $ and the vertices of $ T $ to a single vertex $ t' $.
Applying the contraction to the paths $ \pi_j $, we obtain for each $ 0 \leq j \leq i $ a set of edges $ P_j $ between $ s' $ and $ t' $ that represents the contraction of $ \pi_j $, where we keep the direction of edges as in $ \pi_j $.
Let $ G_0' = G $ and for $ 0 \leq j \leq i $ let $ G_{j+1}'$ be the (multi-)graph obtained from $ G_j' $ by reversing the edges of $ P_j $.
Note that the graph $ G_j' $ can also be obtained from $ G_j $ by contracting $ S $ and $ T $, respectively.
By definition, the graphs $ G_j' $ differ from $ G' $ only in the direction of the edges between $ s' $ and~$ t' $.
Further we have that if $ \pi_j $ ends at a vertex of $ T $ (case 1), then in $ P_j $ the number of edges from $ s' $ to~$ t' $ is one more than the number of edges from $ t' $ to~$ s' $; in contrast, if $ \pi_j $ ends at a vertex of $ S $ (case 2), then in $ P_j $ there are as many edges from $ s' $ to~$ t' $ as from $ t $ to $ s $.
In case 1 the number of edges from $ s' $ to $ t' $ in $ G_{j+1}' $ is one lower than in $ G_j' $, while in case 2 the number of edges from $ s' $ to $ t' $ is the same in $ G_j' $ and $ G_{j+1}' $.
Let $ 0 \leq \ell \leq i + 1 $ be the number of paths among $ \pi_0, \ldots, \pi_i $ that end in $ T $.
We have that the number of edges from $ s' $ to $ t' $ in $ G_{i + 1}' $, and therefore the number of edges from $ S $ to $ T $ in $ G_{i + 1} $, is equal to the number of paths from $ S $ to $ T $ in $ G $ minus $ \ell $.
\end{proof}

\begin{lemma}[Implicit in \cite{ChechikHILP17}]\label{lem:reachable vertices are component}
If Procedure \DetectComponent returns $ U_i $ for some $ 0 \leq i \leq k $, then $ U_i $ is a minimal $ i $-edge-out component containing~$ s $ in $ G $.
\end{lemma}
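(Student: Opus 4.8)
The plan is to separate the two circumstances under which Algorithm~\ref{alg:local procedure} outputs a set $U_i$ and to observe that in both of them the corresponding depth-first search ran to completion: either some iteration $i\le k-1$ of the main loop exits because $|F_i|<2k(\Delta+k)$, or the final search exits because $|F_k|\le\Delta$. In each case the search from $s$ in $G_i$ processed strictly fewer edges than its budget, so it explored every edge reachable from $s$ in $G_i$; hence $U_i$ is exactly the set of vertices reachable from $s$ in $G_i$. In particular $s\in U_i$ and $G_i$ contains no edge from $U_i$ to $V\setminus U_i$. I would also record that at the moment $U_i$ is returned, $G_i$ has been obtained from $G$ by reversing \emph{precisely} the paths $\pi_0,\ldots,\pi_{i-1}$ (no path at all if $i=0$); note that the special case in how the endpoint of $\pi_j$ is chosen is irrelevant here, since only whether that endpoint lies in $U_i$ (later: in $\hat C$) will matter, not which vertex it is.

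First I would show that $U_i$ is an $i$-edge-out component of $G$ containing $s$. Apply Lemma~\ref{lem:out edge reduction} with $S=U_i$, $T=V\setminus U_i$, and the $i$ paths $\pi_0,\ldots,\pi_{i-1}$: writing $\ell\le i$ for the number of these paths that end in $T$, the lemma says that $G_i$ has exactly $|E(U_i,V\setminus U_i)|-\ell$ edges from $U_i$ to $V\setminus U_i$, and we just argued this quantity is $0$. Hence $|E(U_i,V\setminus U_i)|=\ell\le i$, so $U_i$ is an $i$-edge-out component, and it contains $s$. For minimality, let $\hat C$ be a non-empty proper subset of $U_i$ with $s\in\hat C$ (this is the pertinent notion for a component containing $s$, and it is also what is needed to invoke Lemma~\ref{lem:out edge reduction}). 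Applying that lemma once more, now with $S=\hat C$ and writing $\ell'$ for the number of the paths $\pi_0,\ldots,\pi_{i-1}$ that end in $V\setminus\hat C$, the number of edges of $G_i$ from $\hat C$ to $V\setminus\hat C$ equals $|E(\hat C,V\setminus\hat C)|-\ell'$. Two easy observations close the argument: (i) since $\hat C\subseteq U_i$, any path ending in $V\setminus U_i$ also ends in $V\setminus\hat C$, so $\ell'\ge\ell$; and (ii) since $U_i\setminus\hat C\neq\emptyset$ and every vertex of $U_i$ is reachable from $s$ in $G_i$ while $s\in\hat C$, at least one edge of $G_i$ leaves $\hat C$. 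Combining, $|E(\hat C,V\setminus\hat C)|=(\text{edges of }G_i\text{ leaving }\hat C)+\ell'\ge 1+\ell>\ell=|E(U_i,V\setminus U_i)|$, which is exactly minimality.

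I expect the only mildly delicate points to be bookkeeping: making sure that when $U_i$ is output exactly $\pi_0,\ldots,\pi_{i-1}$ (and not $\pi_i$) have already been reversed, so that Lemma~\ref{lem:out edge reduction} is invoked with the correct number of paths, and keeping clear the distinction between edge counts in $G$ and in $G_i$. The structural content — that a completed budgeted DFS certifies a reachability-closed set, together with the reachability observation~(ii) — is routine, and everything reduces to two applications of Lemma~\ref{lem:out edge reduction}, the second of them to the subset $\hat C$ rather than to $U_i$.
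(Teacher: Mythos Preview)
Your proposal is correct and follows essentially the same approach as the paper: both identify $U_i$ as the full reachability set of $s$ in $G_i$ (since the DFS ran to completion), then apply Lemma~\ref{lem:out edge reduction} to $S=U_i$ to obtain $|E(U_i,V\setminus U_i)|=\ell\le i$, and finally apply the same lemma to a proper subset $\hat C\ni s$ for minimality. The only cosmetic difference is in the minimality step: the paper argues by contradiction (if $\hat S$ had at most $\ell$ outgoing edges in $G$, then by Lemma~\ref{lem:out edge reduction} no edge would leave $\hat S$ in $G_i$, so $\hat S$ would already be the reachable set), whereas you give the direct inequality $|E(\hat C,V\setminus\hat C)|=\ell'+|E_{G_i}(\hat C,V\setminus\hat C)|\ge\ell+1>\ell$ via your observations (i) and (ii); these are logically the same argument, and both proofs restrict attention to subsets containing $s$.
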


\begin{proof}
Let $ S $ be the set of vertices reachable from $ s $ in $ G_i $ and let $ T $ be $ V \setminus S $.
Observe that the set $ U_i $ is returned only if the $i$-th DFS has been completed without being stopped because of exceeding its ``edge budget''.
Therefore $ U_i = S $.

If $ i = 0 $, then $ S $ is the set of vertices reachable from $ s $ in $ G $, which trivially is a minimal $0$-edge-out component of $ G $ containing $ s $.
Consider now the case $ i \geq 1 $.
By the definition of $ S $, $ S $ contains $ s $ and there are no edges from $ S $ to $ T $ in $ G_i $.
Thus by Lemma~\ref{lem:out edge reduction} the number of edges from $ S $ to~$ T $ in~$ G $ is equal to the number $ \ell $ of paths among $ \pi_0, \ldots, \pi_{i-1} $ that end in~$ T $.
Thus, $ S $ has $ \ell \leq i $ outgoing edges in $ G $, i.e., $ S $ is an $ i $-edge-out component of~$ G $.
It remains to prove the minimality of $ S $, i.e., to show that $ S $ does not contain a proper subset~$ \hat{S} $ that contains $ s $ and has $ \ell $ or less outgoing edges.
Assume by contradiction that such a set $ \hat{S} $ exists and let $ \hat{T} = V \setminus \hat{S} $.
By $ \hat{T} \supset T $, at least $ \ell $ of the paths among $ \pi_0, \ldots, \pi_{i-1} $ end in $ \hat{T} $.
Thus, by Lemma~\ref{lem:out edge reduction}, there are no edges from $ \hat{S} $ to $ \hat{T} $ in $ G_i $.
This implies that the set of vertices reachable from $ s $ in $ G_i $ is $ \hat{S} $, contradicting the assumption that $ S \supset \hat{S} $ is this set.
\end{proof}

\begin{lemma}[Implicit in \cite{ChechikHILP17}]\label{lem:final reachable set is component}
Let $ C $ be a minimal $k$-edge-out component containing~$ s $ of edge size at most~$ \Delta $ and assume for each $ 0 \leq i \leq k-1 $ that at least $ 2 k (\Delta + k) $ edges are reachable from $ s $ in $ G_i $ and that $ \pi_i $ ends in $ V \setminus C $.
Then Procedure \DetectComponent returns $ U_k \supseteq \{ s \} $.
\end{lemma}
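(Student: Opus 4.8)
The plan is to check three things in order: that the procedure never returns inside the \texttt{for}-loop of Algorithm~\ref{alg:local procedure} and hence actually performs the final depth-first search in $ G_k $; that $ C $ has no outgoing edges in $ G_k $; and that, consequently, the final DFS processes at most $ \Delta $ edges, so that $ U_k $ is returned.

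First I would use the hypothesis that at least $ 2k(\Delta+k) $ edges are reachable from $ s $ in $ G_i $ for each $ 0 \le i \le k-1 $. The DFS started from $ s $ in $ G_i $ processes reachable edges until either it exhausts them or it hits the budget $ 2k(\Delta+k) $; since there are at least $ 2k(\Delta+k) $ reachable edges, it processes exactly $ 2k(\Delta+k) $ of them, so $ |F_i| = 2k(\Delta+k) \not< 2k(\Delta+k) $. Hence the procedure does not return $ U_i $ in iteration $ i $, the loop runs to completion, the paths $ \pi_0, \dots, \pi_{k-1} $ are all chosen, and the procedure reaches the final DFS from $ s $ in $ G_k $.

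Next I would apply Lemma~\ref{lem:out edge reduction} with $ S = C $, $ T = V \setminus C $, $ i = k-1 $, and $ \ell = k $: by hypothesis each of $ \pi_0, \dots, \pi_{k-1} $ ends in $ V \setminus C $, so $ T $ contains all $ k \le (k-1)+1 $ of their endpoints, and the lemma gives that $ G_k $ has exactly $ k $ fewer edges from $ C $ to $ V \setminus C $ than $ G $. Since $ C $ is a $ k $-edge-out component, $ G $ has at most $ k $ such edges, so $ G_k $ has none. As $ s \in C $, the set $ R $ of vertices reachable from $ s $ in $ G_k $ is therefore contained in $ C $. The edges processed by the final DFS are exactly the edges $ (u,v) $ of $ G_k $ with $ u \in R $; since $ R \subseteq C $ and $ G_k $ has no edge from $ C $ to $ V \setminus C $, every such edge has both endpoints in $ C $. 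Now $ G_k $ is obtained from $ G $ by repeatedly reversing edges on the paths $ \pi_0, \dots, \pi_{k-1} $, and reversal sends an edge $ (u,v) $ to $ (v,u) $, which has the same (unordered) pair of endpoints; hence an edge of $ G_k $ has both endpoints in $ C $ exactly when the edge of $ G $ it originates from does, so $ G_k $ has $ |E(C,C)| \le \Delta $ edges with both endpoints in $ C $. Thus the final DFS processes at most $ \Delta $ edges, $ |F_k| \le \Delta $, and the procedure returns $ U_k $; since the DFS starts at $ s $ we have $ s \in U_k $, i.e.\ $ U_k \supseteq \{ s \} $.

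I expect the only genuinely delicate point to be the last counting argument: one must see that reversing a path $ \pi_i $ that starts at $ s \in C $ and ends outside $ C $ only moves edges across the cut $ (C, V \setminus C) $ (it crosses the cut one more time in the $ C \to V\setminus C $ direction than in the other) and never turns a non-edge or a boundary edge into an edge with both endpoints in $ C $, so the number of edges induced by $ C $ stays at $ |E(C,C)| \le \Delta $ throughout. Everything else is bookkeeping on top of Lemma~\ref{lem:out edge reduction}.
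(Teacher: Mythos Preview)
Your argument is correct and follows the same route as the paper's proof: apply Lemma~\ref{lem:out edge reduction} with $S=C$ to conclude $G_k$ has no edges from $C$ to $V\setminus C$, then bound the number of edges the final DFS can process by $|E(C,C)|\le\Delta$. You spell out more carefully than the paper both that the \texttt{for}-loop runs to completion and that edge reversal preserves the count of edges with both endpoints in $C$; the paper glosses over these points (and in fact phrases the last step in terms of reachable \emph{vertices} rather than edges), so your version is if anything cleaner.
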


\begin{proof}
By definition, there are at most $ k $ edges from $ C $ to $ V \setminus C $ in $ G $.
As by assumption, all paths $ \pi_0, \ldots, \pi_{k-1} $ end in $ V \setminus C $, there are no edges from $ C $ to $ V \setminus C $ in $ G_k $ by Lemma~\ref{lem:out edge reduction}.
As $ C $ contains $ s $ and $ C $ has edge size at most $ \Delta $, the number of vertices reachable from~$ s $ in~$ G_k $ is at most~$ \Delta $.
Thus, the DFS in Line~\ref{line:final DFS} of Procedure \DetectComponent traverses all vertices reachable from $ s $ in $ G_k $ and a non-empty set $ U_k $ containing at least $ s $ is returned.
\end{proof}

\begin{lemma}\label{lem:probability of paths ending in component}
Let $ C $ be a minimal $k$-edge-out component containing~$ s $ of edge size at most~$ \Delta $ and assume for each $ 0 \leq i \leq k-1 $ that at least $ 2 k (\Delta + k) $ edges are reachable from $ s $ in $ G_i $.
Then the probability that there is some $ 0 \leq i \leq k - 1 $ such that $ \pi_i $ ends in $ C $ is at most $ \tfrac{1}{2} $.
\end{lemma}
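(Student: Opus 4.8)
The plan is to bound, for each fixed iteration $ i $, the probability that $ \pi_i $ ends in $ C $ \emph{conditioned on all random choices made in iterations $ 0, \ldots, i-1 $} (which determine the graph $ G_i $), and then to combine these bounds by a union bound over $ i = 0, \ldots, k-1 $; note that no independence is needed for this.

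The first step is to set up a clean notion of the \emph{original tail} of an edge of $ G_i $. Viewing everything as multigraphs (as in the proof of Lemma~\ref{lem:out edge reduction}), the graph $ G_i $ is obtained from $ G $ by reversing, iteratively, the edges along $ \pi_0, \ldots, \pi_{i-1} $; tracking each edge through these reversals gives a bijection between the edges of $ G_i $ and the edges of $ G $, and I define the original tail of an edge of $ G_i $ to be the tail of the corresponding edge of $ G $. The key observation is then, by inspection of Algorithm~\ref{alg:local procedure}, that $ \pi_i $ ends in $ C $ if and only if the original tail of the sampled edge $ e_i $ lies in $ C $: if $ e_i = (u_i, v_i) \in E $ the algorithm sets $ \pi_i $ to end at $ u_i $, which is both the tail of $ e_i $ and its original tail; if $ e_i \notin E $ the algorithm sets $ \pi_i $ to end at $ v_i $, which is the tail of the original edge $ (v_i, u_i) \in E $.

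Next I would count the edges of $ G_i $ whose original tail lies in $ C $. Under the bijection this count equals the number of edges of $ G $ with tail in $ C $, i.e. $ \vol(C) = |E(C, C)| + |E(C, V \setminus C)| \leq \Delta + k $, since $ C $ has edge size at most $ \Delta $ and, being a $ k $-edge-out component, at most $ k $ outgoing edges; crucially this bound holds for every $ G_i $, regardless of which paths have been reversed so far. On the other side, the hypothesis that at least $ 2 k (\Delta + k) $ edges are reachable from $ s $ in $ G_i $ ensures that the $ i $-th DFS does not terminate early, so it processes a set $ F_i $ of at least $ 2 k (\Delta + k) $ edges, from which $ e_i $ is drawn uniformly at random. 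Since the edges of $ F_i $ whose original tail lies in $ C $ form a subset of the at most $ \Delta + k $ such edges of $ G_i $, the conditional probability that $ \pi_i $ ends in $ C $ is at most $ \tfrac{\Delta + k}{2 k (\Delta + k)} = \tfrac{1}{2k} $, independently of the outcomes of the earlier iterations. A union bound over the $ k $ iterations then gives probability at most $ k \cdot \tfrac{1}{2k} = \tfrac{1}{2} $, as claimed.

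The only delicate point is the bookkeeping around the path reversals: one must be careful that the multigraph viewpoint makes the ``original tail'' well defined even when edges are reversed and un-reversed across iterations, and that the count of edges of $ G_i $ with original tail in $ C $ is genuinely unaffected by the reversals performed so far (this is exactly what the bijection buys us). Once that is in place, the probabilistic part is a routine per-iteration estimate followed by a union bound, and minimality of $ C $ is not even used.
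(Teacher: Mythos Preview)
Your proposal is correct and follows essentially the same approach as the paper: for each fixed~$i$, bound the number of ``bad'' sampled edges (those whose original tail lies in~$C$) by $\Delta + k$, divide by $|F_i| \geq 2k(\Delta+k)$ to get a per-iteration bound of $\tfrac{1}{2k}$, and finish with a union bound over the $k$ iterations. The paper handles the identification of the ``original tail'' by a direct case split on whether $e_i \in E$, whereas you make this explicit via the multigraph bijection between the edges of~$G_i$ and those of~$G$; this is the same idea expressed slightly more carefully, and your remark that minimality of~$C$ is unused is also correct.
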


\begin{proof}
First, fix some $ i $.
The $i$-th sampled edge $ e_i = (u_i, v_i) $ is either contained in~$ E $ or its reverse edge $ (v_i, u_i) $ is.
In the first case $ \pi_i $ ends in $ C $ if and only if the tail $ u_i $ of the $ e_i $ is contained in $ C $ and in the second case $ \pi_i $ ends in $ C $ if and only if the tail $ v_i $ of the reverse of $ e_i $ is contained in~$ C $.
As $ C $ has edge size at most~$ \Delta $ and there are at most $ k $~edges leaving~$ C $, there are at most $ \Delta + k $ edges of~$ E $ whose tail is contained in~$ C $.
By the assumption we have $ | F_i | \geq 2 k (\Delta + k) $, i.e., $ e_i $ was sampled from a set of $ 2 k (\Delta + k) $ distinct edges.
Thus, the probability that $ \pi_i $ ends in $ C $ is at most $ \tfrac{\Delta + k}{2 k (\Delta + k)} = \tfrac{1}{2 k} $.
Now by the union bound, the probability that for least one $ 0 \leq i \leq k-1 $ the path $ \pi_i $ ends in $ C $ is at most $ k \cdot \tfrac{1}{2 k} = \tfrac{1}{2} $.
\end{proof}

\begin{lemma}\label{lem:procedure find component}
If Procedure \DetectComponent returns $ U \supseteq \{ s \} $, then $ U $ is a minimal $k$-edge-out component of edge size at most $ \max (2 k (\Delta + k), \Delta) $ in~$ G $.
If $ G $ has a $k$-edge-out component containing~$ s $ of edge size at most $ \Delta $, then Procedure \DetectComponent returns $ U \supseteq \{ s \} $ with probability at least $ \tfrac{1}{2} $.
The query complexity of Procedure \DetectComponent is $ O ((k^2 + 1) (\Delta + k)) $.
\end{lemma}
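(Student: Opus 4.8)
The plan is to prove the three assertions of Lemma~\ref{lem:procedure find component}---correctness of any returned component, the detection guarantee, and the query complexity---essentially independently, relying on Lemmas~\ref{lem:reachable vertices are component}, \ref{lem:final reachable set is component}, and~\ref{lem:probability of paths ending in component}. For the first assertion, observe that \DetectComponent returns a set $ U \supseteq \{ s \} $ only by returning some $ U_i $ with $ 0 \le i \le k-1 $ when the $i$-th DFS finishes after processing $ |F_i| < 2k(\Delta+k) $ edges, or by returning $ U_k $ when the final DFS finishes after processing $ |F_k| \le \Delta $ edges. In either case the relevant DFS ran to completion, so $ U_i $ is exactly the set of vertices reachable from $ s $ in $ G_i $, it has no outgoing edges in $ G_i $, and $ |F_i| $ equals the number of edges of $ G_i $ with tail in $ U_i $, hence the number of edges of $ G_i $ with \emph{both} endpoints in $ U_i $. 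Since reversing the edges along $ \pi_0, \dots, \pi_{i-1} $ maps each edge with both endpoints in $ U_i $ to such an edge and creates no new edges, the edge size of $ U_i $ in $ G $ equals that of $ U_i $ in $ G_i $, which is $ |F_i| \le \max(2k(\Delta+k), \Delta) $. Lemma~\ref{lem:reachable vertices are component} already tells us $ U_i $ is a minimal $i$-edge-out component containing $ s $ in $ G $, and since $ i \le k $ and minimality is a property of $ U_i $ alone, it is in particular a minimal $k$-edge-out component, giving the first assertion.

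For the detection guarantee, suppose $ G $ has a $k$-edge-out component containing $ s $ of edge size at most $ \Delta $; by repeatedly replacing it with a proper subset that still contains $ s $ and has no more outgoing edges---which can only decrease the edge size---we obtain a minimal such component $ C $. Fix a run of \DetectComponent. If the $i$-th DFS processes fewer than $ 2k(\Delta+k) $ edges for some $ 0 \le i \le k-1 $, the procedure returns $ U_i \supseteq \{ s \} $ immediately. Otherwise all $ k $ of the first searches hit their budget, so for every $ 0 \le i \le k-1 $ at least $ 2k(\Delta+k) $ edges are reachable from $ s $ in $ G_i $; by Lemma~\ref{lem:probability of paths ending in component}, in this case the probability that some $ \pi_i $ ends in $ C $ is at most $ \tfrac12 $, and when no $ \pi_i $ ends in $ C $, Lemma~\ref{lem:final reachable set is component} guarantees the final DFS returns $ U_k \supseteq \{ s \} $. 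Hence the procedure fails to return a set containing $ s $ with probability at most $ \tfrac12 $.

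For the query complexity, note that a DFS processing at most $ B $ edges visits at most $ B + 1 $ vertices and can be simulated with $ O(B) $ queries: the outgoing edges of a vertex in $ G_i $ are its original outgoing edges in $ G $, fetched one query at a time, together with the reverses of the path edges recorded so far that enter it, so the reversals require no additional queries and no incoming-edge queries are ever needed. Summing over the at most $ k $ searches of budget $ 2k(\Delta+k) $ and the final search of budget $ \Delta + 1 $ yields $ O(k^2(\Delta+k) + \Delta) = O((k^2+1)(\Delta+k)) $.

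Two steps in the above deserve the most care. First, the claim that reversing the edges along the $ \pi_i $ preserves the number of edges with both endpoints in a fixed vertex set should be argued by treating $ G_i $ as a multiset of edges, exactly as in the proof of Lemma~\ref{lem:out edge reduction}, so that parallel edges are not silently dropped; equivalently, the edge-reversal operation of line~\ref{line:reverse edges on path} is a bijection on the edge set that preserves, for any fixed set, the counts of internal, outgoing, and incoming edges up to the net crossing of the reversed path. Second, in the detection guarantee the event ``every one of the first $ k $ searches hits its budget'' depends on the random samples themselves, so Lemma~\ref{lem:probability of paths ending in component} must be read as a statement conditioned on (equivalently, restricted to runs satisfying) that event; its proof---a step-by-step union bound in which the bound $ \tfrac{1}{2k} $ at step $ i $ uses only the history through step $ i $ together with the fact that $ |F_i| \ge 2k(\Delta+k) $---carries over verbatim to this conditional form, which is exactly what is needed here.
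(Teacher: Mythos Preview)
Your proof is correct and follows essentially the same approach as the paper's, invoking Lemmas~\ref{lem:reachable vertices are component}, \ref{lem:final reachable set is component}, and~\ref{lem:probability of paths ending in component} in the same way for the three assertions. You are in fact more careful than the paper on two points it glosses over: why the edge-size bound measured in $G_i$ transfers back to $G$ (your multiset remark is the right fix), and the conditioning implicit in applying Lemma~\ref{lem:probability of paths ending in component} only on runs where all $k$ budgets are hit.
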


\begin{proof}
To prove the correctness claims, observe first that $ s $ is always reachable from itself and thus the only possibility for the procedure to return a set not containing $ s $ is in Line~\ref{line:algorithm returns 0}  (where it returns $ \emptyset $).
Now the first correctness claim follows from Lemma~\ref{lem:reachable vertices are component} and the fact that all sets returned by the algorithm have edge size at most $ 2 k (\Delta + k) $ if $ k \geq 1 $ and at most $ \Delta $ if $ k = 0 $.
It remains to show that Line~\ref{line:algorithm returns 0} is executed with probability at most $ \tfrac{1}{2} $.
A precondition for this to happen is that for every $ 0 \leq i \leq k - 1 $ the number of edges reachable from~$ s $ in~$ G_i $ is at least $ 2 k (\Delta + k) $.
Let $ C $ be a $k$-edge-out component containing~$ s $ of edge size at most $ \Delta $ in $ G $.
By Lemma~\ref{lem:probability of paths ending in component} the probability that every path $ \pi_i $ (for $ 0 \leq i \leq k-1 $) ends in $ V \setminus C $ is at most $ \tfrac{1}{2} $.
If that is the case, then the procedure returns $ U_k \supseteq \{ s \} $ by Lemma~\ref{lem:final reachable set is component}.
Thus, the probability of the procedure returning a set containing $ s $ is at least~$ \tfrac{1}{2} $.

To bound the query complexity, observe that the procedure performs at most one depth-first search up to $ \Delta + 1 $ edges and $ k $ depth-first searches up to $ 2 k (\Delta + k) $ edges.
Thus, the total number of queries is bounded by $ O ((k^2 + 1) (\Delta + k)) $.
\end{proof}

\begin{lemma}
If Procedure \DetectComponentParameterized returns $ U \supseteq \{ s \} $, then $ U $ is a minimal $k$-edge-out component of edge size at most $ \max (2 k (\Delta + k), \Delta) $ in $ G $.
If $ G $ has a $k$-edge-out component containing~$ s $ of edge size at most~$ \Delta $, then Procedure \DetectComponentParameterized returns $ U \supseteq \{ s \} $ with probability at least $ p $.
The query complexity of Procedure \DetectComponentParameterized is $ O ((k^2 + 1) (\Delta + k) \log (\tfrac{1}{1 - p})) $.
\end{lemma}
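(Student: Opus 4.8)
The plan is to obtain all three claims directly from Lemma~\ref{lem:procedure find component} by a standard independent-repetition argument, since Procedure \DetectComponentParameterized merely invokes \DetectComponent up to $ \lceil \log (\tfrac{1}{1-p}) \rceil $ times with fresh random bits and returns the first non-empty output (or $ \emptyset $ if all invocations fail).

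For the structural claim, I would note that \DetectComponentParameterized returns a set containing $ s $ only when one of its calls to \DetectComponent returns a non-empty set, and that (as already observed in the proof of Lemma~\ref{lem:procedure find component}) \DetectComponent never returns a non-empty set that omits $ s $, since every non-$\emptyset$ return value is the vertex set visited by a DFS started at $ s $. Thus whenever \DetectComponentParameterized outputs some $ U \supseteq \{ s \} $, this $ U $ was produced verbatim by a call to \DetectComponent, and Lemma~\ref{lem:procedure find component} guarantees that $ U $ is a minimal $ k $-edge-out component of $ G $ of edge size at most $ \max (2 k (\Delta + k), \Delta) $.

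For the success-probability claim, suppose $ G $ has a $ k $-edge-out component containing $ s $ of edge size at most $ \Delta $. By Lemma~\ref{lem:procedure find component}, a single call to \DetectComponent then fails (i.e., does not return a set containing $ s $) with probability at most $ \tfrac{1}{2} $. Since the $ \lceil \log (\tfrac{1}{1-p}) \rceil $ calls use independent randomness, the probability that all of them fail is at most $ (\tfrac{1}{2})^{\lceil \log (1/(1-p)) \rceil} \leq 1 - p $, taking the logarithm to base $ 2 $; hence with probability at least $ p $ some call succeeds, and its output, which contains $ s $, is returned.

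The query-complexity claim follows because each of the at most $ \lceil \log (\tfrac{1}{1-p}) \rceil $ calls uses $ O ((k^2 + 1)(\Delta + k)) $ queries by Lemma~\ref{lem:procedure find component} and the wrapper issues no queries of its own, for a total of $ O ((k^2 + 1)(\Delta + k) \log (\tfrac{1}{1-p})) $. The only points that need care are matching the base of the logarithm to the per-run success probability $ \tfrac{1}{2} $ and invoking independence of the repetitions, so there is essentially no obstacle here: this lemma is the routine amplification step on top of Lemma~\ref{lem:procedure find component}.
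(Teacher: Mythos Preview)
Your proposal is correct and follows essentially the same approach as the paper: all three claims are derived directly from Lemma~\ref{lem:procedure find component} via the standard independent-repetition amplification, with the same computation $(\tfrac{1}{2})^{\lceil \log(1/(1-p)) \rceil} \leq 1-p$ for the success probability and the same multiplicative bound on the query complexity.
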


\begin{proof}
The first part of the lemma directly follows from the first part of Lemma~\ref{lem:procedure find component}.
To prove the second part of the lemma, assume that $ G $ has a $k$-edge-out component containing~$ s $ of edge size at most~$ \Delta $.
The probability that a single call of Procedure \DetectComponent returns a set not containing $ s $ is at most $ \tfrac{1}{2} $ by Lemma~\ref{lem:procedure find component}.
Thus, the probability that $ \lceil \log (1 - p) \rceil $ independent calls of Procedure \DetectComponent each return a set not containing $ s $ is at most $ (\tfrac{1}{2})^{\lceil \log (\frac{1}{1 - p}) \rceil} \leq 1 - p $.
It follows that the probability that \DetectComponentParameterized returns a set containing $ s $ is at least $ 1 - (1 - p) = p $.

The bound on the query complexity and the running time directly follows from Lemma~\ref{lem:procedure find component}
\end{proof}

We finally argue that Procedure \DetectComponentParameterized can be implemented to run in time $ O ((k^2 + 1) (\Delta + k) \log (\tfrac{1}{1 - p})) $.
In Procedure \DetectComponent, we store all edges queried so far in hash tables, with one hash table per vertex containing its incident edges.
Whenever we reverse the direction of some edge $ (u, v) $ in Line~\ref{line:reverse edges on path}, we do so by replacing $ (u, v) $ with $ (v, u) $ in the hash tables of $ u $ and $ v $.
Recall that by assumption each query takes constant time in expectation and there are $ O ((k^2 + 1) (\Delta + k)) $ queries.
As additionally each path~$ \pi_i $ has length $ O (k (\Delta + k)) $ and there are at most $ k $ such paths for which the edges are reversed, the number of operations to the hash tables is $ O ((k^2 + 1) (\Delta + k)) $.
This gives an expected running time of $ O ((k^2 + 1) (\Delta + k)) $ for Procedure \DetectComponent.
By being a little more careful we can also get a worst-case bound by tolerating a slightly larger error probability: we stop the algorithm and return~$ \emptyset $ whenever its running time so far exceeds the expected bound by a factor of $ 4 $.
By Markov's bound this happens with probability at most $ \tfrac{1}{4} $.
This decreases the probability of Procedure \DetectComponent to return $ U \neq \emptyset $ from $ \tfrac{1}{2} $ to $ \tfrac{1}{4} $.
We can account for this increase by increasing the number of repetitions in Line~\ref{line:repeat procecdure} of procedure \DetectComponentParameterized from $ \lceil \log (\frac{1}{1 - p}) \rceil $ to $ \lceil \log_{4/3} (\frac{1}{1 - p}) \rceil $.
Thus, the running time of Procedure \DetectComponentParameterized will be $ O ((k^2 + 1) (\Delta + k) \log (\tfrac{1}{1 - p})) $ and Theorem~\ref{thm:local procedure edge-out component} follows.

\subsection{Detecting Bounded-Size Vertex Cuts}

In the following, we prove Theorem~\ref{thm:local procedure vertex-out component} by reducing local vertex cut detection to local edge cut detection.
To do this, we modify a well-known reduction that has previously been used for computing the local vertex connectivity of a pair of vertices by performing a maximum flow computation~\cite{Even75}.

Given a directed graph $ G = (V, E) $ containing vertex~$ s $, define the graph $ G'_s = (V'_s, E'_s) $ as follows:
\begin{itemize}
\item For every vertex $ v \in V \setminus \{ s \} $, $ G'_s $ contains two vertices $ v_{\mathrm{in}} $ and $ v_{\mathrm{out}} $ and additionally a vertex $ s_{\mathrm{in}} = s_{\mathrm{out}}  $, i.e.,
\begin{equation*}
V'_s = \{ v_{\mathrm{in}} : v \in V \} \cup \{ v_{\mathrm{out}} : v \in V \}
\end{equation*}
where only $ s_{\mathrm{in}} $ and $ s_{\mathrm{out}} $ are identical.
\item For every vertex $ v \in V $, $ v_{\mathrm{in}} $ gets all incoming edges of $ v $, $ v_{\mathrm{out}} $ gets all outgoing edges of $ v $, and there additionally is an edge from $ v_{\mathrm{in}} $ to $ v_{\mathrm{out}} $, i.e.,
\begin{equation*}
E'_s = \{ (v_{\mathrm{out}}, w_{\mathrm{in}}) : (v, w) \in E \} \cup \{ (v_{\mathrm{in}}, v_{\mathrm{out}}): v \in V \} \, .
\end{equation*}
\end{itemize}
Note that we do not explicitly have to modify the input graph $ G $ (to which we have query access) in algorithmic applications.
Any algorithm running on~$ G'_s $ can on-the-fly decide for each edge whether it is from the first set in~$ E'_s $ or from the second set in~$ E'_s $.
In the first case, the edge can be queried from~$ G $, and in the second case the edge can be created by the algorithm instantly.

Let $ C \subseteq V $ be a subset of edges in $ G $.
Let $ B = \{ v \in V \setminus C | \exists u \in C : (u, v) \in E \} $ be its set of boundary vertices.
Recall that the symmetric volume of $ C $ in $ G $ is defined as
\begin{equation*}
    \vol^* (C) = | E (C, C) | + | E (C, B)| + | E (V \setminus C, C) | \, .
\end{equation*}
Let $ C' \subseteq V'_s $ be a subset of edges in $ G'_s $ and let $ B' = \{ v \in V'_s \setminus C' | \exists u \in C' : (u, v) \in E'_s \}  $ be its set of boundary vertices.
Furthermore, define the \emph{interior of $ C' $} as $ I' = C' \setminus \{ v_{\mathrm{in}} \in V'_s : v_{\mathrm{out}} \in V'_s \setminus C' \} $ and define the \emph{restricted symmetric volume of $ C' $} in $ G'_s $ as
\begin{equation*}
    \vol' (C') = | E (C', C') | + | E (C', B')| + | E (V'_s \setminus C', I') | \, .
\end{equation*}

We now present two lemmas that formally express the tight connections between $k$-vertex-out components containing $ s $ in $ G $ and $k$-vertex-out components containing $ s_{\mathrm{out}} $ in $ G'_s $.

\begin{lemma}\label{lem:vertex component implies edge component}
If there is a $k$-vertex-out component $ C $ containing~$ s $ in~$ G $ with $ \vol (C) = \Delta $ ($ \vol^* (C) = \Delta $), then there is a $k$-edge-out component $ C' $ containing~$ s_{\mathrm{out}} $ in~$ G'_s $ with $ \vol (C') \leq 3 \Delta $ ($ \vol' (C') \leq 3 \Delta $).
\end{lemma}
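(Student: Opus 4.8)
\emph{Proof plan.} The key step — and, I expect, the crux — is to first replace $C$ by a smaller component that is still $k$-vertex-out but whose \emph{vertex count} is controlled by its volume. Let $B$ be the boundary of $C$ and let $\hat C \subseteq C$ be the set of vertices reachable from $s$ in $G[C]$ when edge directions are disregarded. I would first verify that $\hat C$ keeps every property we need: $s \in \hat C$; if $u \in \hat C$ and $(u,v) \in E$ with $v \in C$, then $(u,v)$ is an edge of $G[C]$, so $v \in \hat C$ as well — hence the boundary $\hat B := \{ v \in V \setminus \hat C : \exists u \in \hat C,\ (u,v) \in E \}$ satisfies $\hat B \subseteq B$, so $|\hat B| \le |B| \le k$ and $\hat C$ is itself a $k$-vertex-out component; and $\hat C \subseteq C$ gives $\vol(\hat C) \le \vol(C) = \Delta$ as well as $\vol^*(\hat C) \le \vol^*(C) = \Delta$. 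The gain of this reduction is that every vertex of $\hat C$ is reachable from $s$ within $G[\hat C]$, so a spanning tree of that search uses $|\hat C| - 1$ edges of $G$ lying inside $\hat C$; hence $|\hat C| - 1 \le |E(\hat C, \hat C)| \le \vol(\hat C)$, and likewise $|\hat C| - 1 \le |E(\hat C, \hat C)| \le \vol^*(\hat C)$. Such a bound is simply unavailable on the original $C$, whose vertex count is not controlled by its volume at all (e.g.\ $C$ may contain arbitrarily many vertices with no outgoing edge), which is why I view this step as the heart of the argument.

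Next I would set $C' := \{ v_{\mathrm{in}} : v \in \hat C \cup \hat B \} \cup \{ v_{\mathrm{out}} : v \in \hat C \}$ and argue that $C'$ is a $k$-edge-out component of $G'_s$ containing $s_{\mathrm{out}}$. Since $s \in \hat C$ and $s_{\mathrm{in}} = s_{\mathrm{out}}$ we have $s_{\mathrm{out}} \in C'$. For the cut I would inspect which edges of $E'_s$ leave $C'$: the out-edges of $v_{\mathrm{out}}$ for $v \in \hat C$ are the edges $(v_{\mathrm{out}}, w_{\mathrm{in}})$ with $(v,w) \in E$, and by the property established above every such $w$ lies in $\hat C \cup \hat B$, so $w_{\mathrm{in}} \in C'$; the only out-edge of $v_{\mathrm{in}}$ for $v \in \hat C$ is the split edge $(v_{\mathrm{in}}, v_{\mathrm{out}})$, whose head is in $C'$; and the remaining vertices of $C'$ are the $v_{\mathrm{in}}$ with $v \in \hat B$, each having the single out-edge $(v_{\mathrm{in}}, v_{\mathrm{out}})$, which does leave $C'$ because $v_{\mathrm{out}} \notin C'$. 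Thus precisely $|\hat B| \le k$ edges leave $C'$.

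Finally, for the volume I would count the edges of $E'_s$ with tail in $C'$: every vertex of $\hat C \cup \hat B$ other than $s$ contributes its split edge $(v_{\mathrm{in}}, v_{\mathrm{out}})$, and every edge of $G$ with tail in $\hat C$ contributes its image $(v_{\mathrm{out}}, w_{\mathrm{in}})$, so $\vol(C') = (|\hat C| + |\hat B| - 1) + \vol(\hat C)$. Substituting $|\hat C| - 1 \le \vol(\hat C)$ and $|\hat B| \le |E(\hat C, \hat B)| \le \vol(\hat C)$ yields $\vol(C') \le 3\vol(\hat C) \le 3\Delta$, which is the first claim. For the symmetric claim I would additionally evaluate the term $|E(V'_s \setminus C', I')|$ in $\vol'(C')$, where the interior is $I' = \{ v_{\mathrm{in}}, v_{\mathrm{out}} : v \in \hat C \}$: the edges of $G'_s$ entering $I'$ from outside $C'$ are exactly the images of the edges of $G$ entering $\hat C$ from $V \setminus \hat C$, of which there are $|E(V \setminus \hat C, \hat C)| = \vol^*(\hat C) - \vol(\hat C)$. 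Adding this to $\vol(C')$ cancels the $\vol(\hat C)$ terms, leaving $\vol'(C') = |\hat C| + |\hat B| - 1 + \vol^*(\hat C)$, and now $|\hat C| - 1 \le |E(\hat C, \hat C)| \le \vol^*(\hat C)$ together with $|\hat B| \le \vol(\hat C) \le \vol^*(\hat C)$ gives $\vol'(C') \le 3\vol^*(\hat C) \le 3\Delta$. The only thing to be careful about in the write-up is making the two reductions (shrinking $C$ to $\hat C$, and then splitting) interact cleanly, in particular checking that the boundary never grows under the shrinking; everything else is bookkeeping.
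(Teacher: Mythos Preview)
Your argument is correct, and its core construction of $C'$ matches the paper's exactly. The paper, however, skips your shrinking step: it works directly with the given $C$ and, in bounding $\vol'(C')$, asserts $|C \cup B| \le 2\,\vol^*(C)$ without justification. As you correctly note, this inequality is not guaranteed for an arbitrary $k$-vertex-out component---nothing in the definition prevents $C$ from containing vertices with no incident edges, which inflate $|C|$ without contributing to $\vol^*(C)$ (or to $\vol(C)$). Your preliminary replacement of $C$ by its weakly-connected part $\hat C$ containing $s$ is precisely what supplies the missing bound $|\hat C|-1 \le |E(\hat C,\hat C)|$, and your verification that the boundary does not grow ($\hat B \subseteq B$) is the right check to make the reduction go through. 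So your proof follows the paper's route but is in fact more rigorous on this point.
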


\begin{proof}
Let $ B = \{ v \in V \setminus C | \exists u \in C : (u, v) \in E \} $ denote the at most $ k $ boundary vertices of~$ C $, i.e., those vertices with incoming edges from~$ C $.
Define $ C' $ as the component that is cut at the edges connecting $ v_{\mathrm{in}} $ to $ v_{\mathrm{out}} $ for the boundary vertices, i.e.,
\begin{equation*}
C' := \{ v_{\mathrm{out}} : v \in C \} \cup \{ v_{\mathrm{in}} : v \in C \} \cup \{ v_{\mathrm{in}} : v \in B \} \, .
\end{equation*}
Now let $ e $ be an edge leaving $ C' $ in $ G'_s $ and suppose $ e $ is of the form $ (v_{\mathrm{out}}, w_{\mathrm{in}}) $.
Then $ (v, w) \in E $, and, by the definition of $ C' $, $ v \in C $ and $ w \in V \setminus ( C \cup B ) $, which contradicts the fact that~$ B $ contains all boundary vertices.
Therefore, every edge leaving $ C' $ in $ G'_s $ must be of the form $ (v_{\mathrm{in}}, v_{\mathrm{out}}) $ which, by the construction of $ C' $ is only possible for $ v \in B $.
Thus, there are at most $ |B| \leq k $ edges leaving $ C'$.

We now show that $ \vol' (C') \leq 3 \Delta $.
The proof that $ \vol (C') \leq 3 \Delta $ would follow a similar, but simpler, argument.
In particular, we will show that for every edge $ e' \in  E (C' C') \cup E (C', B') \cup E (V'_s \setminus C', I') $ we either have $ e' = (v_{\mathrm{in}}, v_{\mathrm{out}}) $ for some $ v \in C \cup B $ or $ e' = (u_{\mathrm{out}}, v_{\mathrm{in}}) $ for some $ (u, v) \in E (C, C) \cup E (C, B) \cup E (V \setminus C, C) $.
It then follows that
\begin{align*}
    \vol' (C') &= | E (C', C') \cup E (C', B') \cup E (V'_s \setminus C', I') | \\
    &\leq | E (C, C) \cup E (C, B) \cup E (V \setminus C, C) | + | C \cup B | \\
    &\leq \vol^* (C) + 2 \vol^* (C) \\
    &\leq 3 \Delta
\end{align*}
as desired

Let $ e' \in E (C' C') \cup E (C', B') \cup E (V'_s \setminus C', I') $ and assume that $ e' = (u_{\mathrm{out}}, v_{\mathrm{in}}) $ for some $ (u, v) $ for some $ (u, v) \in E $.
If $ u_{\mathrm{out}} \in C' $, then by the definition of $ C' $ it must be the case that $ u \in C $ and therefore $ (u, v) \in E (C, C) \cup E (C, B) $.
Otherwise, we have $ (u_{\mathrm{out}}, v_{\mathrm{in}}) \in (V'_s \setminus C', I') $.
By the definition of $ C' $, $ u \in V \setminus C $, and by the definition of $ I' $, $ v_{\mathrm{in}} \in I' $ implies $ v_{\mathrm{out}} \in C' $ and thus $ v \in C $.
It follows that $ (u, v) \in E (V \setminus C, C) $.
\end{proof}

\begin{lemma}\label{lem:edge component implies vertex component}
Let $ C' $ be a minimal $k$-edge-out component containing~$ s_{\mathrm{out}} $ in~$ G'_s $.
Then $ C = \{ v \in V \mid v_{\mathrm{out}} \in C' \} $ is a $k$-vertex-out component containing~$ s $ in~$ G $ with $ \vol^* (C) \leq \vol' (C') $.
\end{lemma}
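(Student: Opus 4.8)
I will prove the lemma in three stages: a structural fact coming from the minimality of $C'$, then the bound on the boundary size of $C$, then the volume comparison.

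\textbf{Step 1 (structure of $C'$).} The plan is to first show that for every $v\in V$, if $v_{\mathrm{out}}\in C'$ then $v_{\mathrm{in}}\in C'$. The key observation is that in $G'_s$ the only edge entering $v_{\mathrm{out}}$ is the edge $(v_{\mathrm{in}},v_{\mathrm{out}})$, for every $v\neq s$. Suppose for contradiction that $v_{\mathrm{out}}\in C'$ but $v_{\mathrm{in}}\notin C'$; then $v\neq s$ because $s_{\mathrm{in}}=s_{\mathrm{out}}\in C'$. Consider $\hat C'=C'\setminus\{v_{\mathrm{out}}\}$, a proper and non-empty subset of $C'$ (it still contains $s_{\mathrm{out}}$). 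Every edge leaving $\hat C'$ either leaves $C'$ as well (if it does not start at $v_{\mathrm{out}}$) or points into $v_{\mathrm{out}}$; but the unique in-edge of $v_{\mathrm{out}}$ starts at $v_{\mathrm{in}}\notin\hat C'$, so the second case never occurs. Hence $\hat C'$ has no more outgoing edges than $C'$, contradicting the minimality of $C'$. Consequently, for $v\in C$ we have both $v_{\mathrm{out}}\in C'$ and $v_{\mathrm{in}}\in C'$, and since $v_{\mathrm{out}}\in C'$ the vertex $v_{\mathrm{in}}$ is not deleted from $C'$ in forming $I'$, so $v_{\mathrm{in}}\in I'$; in fact $I'=\{v_{\mathrm{in}}:v\in C\}\cup\{v_{\mathrm{out}}:v\in C\}$. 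Also $s_{\mathrm{out}}\in C'$ gives $s\in C$, so $C\ni s$ is non-empty.

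\textbf{Step 2 (boundary of $C$).} Let $B=\{v\in V\setminus C:\exists u\in C,\ (u,v)\in E\}$. I would injectively assign to each $v\in B$ an edge leaving $C'$ in $G'_s$. Fix $v\in B$; then $v\neq s$ and $v_{\mathrm{out}}\notin C'$. If $v_{\mathrm{in}}\in C'$, assign the edge $(v_{\mathrm{in}},v_{\mathrm{out}})$, which leaves $C'$; otherwise pick any $u\in C$ with $(u,v)\in E$ and assign $(u_{\mathrm{out}},v_{\mathrm{in}})$, which leaves $C'$ because $u_{\mathrm{out}}\in C'$ and $v_{\mathrm{in}}\notin C'$. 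Distinct $v\in B$ get distinct edges: in the first case the head is the $\mathrm{out}$-vertex $v_{\mathrm{out}}$ (with $v\neq s$), in the second case it is the $\mathrm{in}$-vertex $v_{\mathrm{in}}$ (with $v\neq s$), and these heads are pairwise distinct across $B$. Hence $|B|$ is at most the number of edges leaving $C'$, which is at most $k$. Combined with $s\in C$, this shows $C$ is a $k$-vertex-out component containing $s$ (no minimality of $C$ is claimed or needed).

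\textbf{Step 3 (volume).} Finally, for $\vol^*(C)\le\vol'(C')$, I would use the canonical injection $(u,v)\mapsto(u_{\mathrm{out}},v_{\mathrm{in}})$ of $E$ into $E'_s$ and check it sends $E(C,C)\cup E(C,B)\cup E(V\setminus C,C)$ into $E(C',C')\cup E(C',B')\cup E(V'_s\setminus C',I')$: if $(u,v)\in E(C,C)$ then $u_{\mathrm{out}},v_{\mathrm{in}}\in C'$ by Step~1; if $(u,v)\in E(C,B)$ then $u_{\mathrm{out}}\in C'$ and $v_{\mathrm{in}}$ is either in $C'$ (then the edge is in $E(C',C')$) or outside $C'$, in which case $v_{\mathrm{in}}\in B'$ since it has an in-edge from $u_{\mathrm{out}}\in C'$; if $(u,v)\in E(V\setminus C,C)$ then $u_{\mathrm{out}}\notin C'$ and $v_{\mathrm{in}}\in I'$. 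The three source sets are pairwise disjoint with union of size $\vol^*(C)$ (recall $E(C,V\setminus C)=E(C,B)$), and the three target sets are pairwise disjoint with union of size $\vol'(C')$; an injection between them yields the inequality.

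The step I expect to be the main obstacle is Step~1: identifying that the minimality of $C'$ is precisely what forces $v_{\mathrm{out}}\in C'\Rightarrow v_{\mathrm{in}}\in C'$ (through the unique in-edge of $v_{\mathrm{out}}$), together with the accompanying bookkeeping that shows the somewhat ad hoc notions of interior $I'$ and restricted symmetric volume $\vol'$ are exactly tailored so that the edge-counting in Step~3 loses nothing.
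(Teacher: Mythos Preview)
Your proof is correct and follows essentially the same approach as the paper's: the same injection of $B$ into the outgoing edges of $C'$, and the same edge-by-edge injection $(u,v)\mapsto(u_{\mathrm{out}},v_{\mathrm{in}})$ for the volume comparison, with minimality of $C'$ used to force $v_{\mathrm{in}}\in C'$ whenever $v_{\mathrm{out}}\in C'$. The only difference is organizational: you isolate the minimality consequence as a clean preliminary step (removing the single vertex $v_{\mathrm{out}}$), whereas the paper invokes it inline during the volume argument by removing a reachable set; your version is arguably tidier.
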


\begin{proof}
In the proof we will use counting arguments that implicitly consider the one-to-one mappings that map each vertex $ v \in V $ to the edge $ (v_{\mathrm{in}}, v_{\mathrm{out}}) \in E'_s $ and each edge $ (u, v) \in E $ to the edge $ (u_{\mathrm{out}}, v_{\mathrm{in}}) \in E'_s $.

Let $ B = \{ v \in V \setminus C \mid \exists u \in V : (u, v) \in E \} $ be the boundary vertices of $ C $.
We will show that to every vertex $ v \in B $ we can uniquely assign one edge $ e' \in E'_s (C', V'_s \setminus C') $, which implies $ | B  | \leq | E'_s (C', V \setminus C') | $.
As $ C' $ is a $k$-edge-out component we have $ | E'_s (C', V'_s \setminus C') | \leq k $.
Therefore $ | B | \leq k $ which means that $ C $ is a $k$-vertex-out component.

Let $ v \in B $.
By the definition of $ C $, we have $ v_{\mathrm{out}} \in V'_s \setminus C' $.
If $ v_{\mathrm{in}} \in C' $, then $ e' = (v_{\mathrm{in}},  v_{\mathrm{out}}) \in E'_s (C', V'_s \setminus C') $.
Otherwise, if $ v_{\mathrm{in}} \in V'_s \setminus C' $, the argument is as follows:
As $ v \in B $, there must be at least one edge $ (u, v) \in E (C, V \setminus C) $.
By the definition of $ C' $, $ u_{\mathrm{out}} \in C' $ and thus $ e' = (u_{\mathrm{out}}, v_{\mathrm{in}}) \in E'_s (C', V'_s \setminus C') $.
In both cases, $ v $ can be uniquely assigned to $ e' $ as desired.

Finally, we show that $ \vol^* (C) \leq \vol' (C') $.
The proof that $ \vol (C) \leq \vol (C') $ would follow a similar, but simpler, argument.
Consider some edge $ e = (u, v) \in E (C, C) \cup E (C, B) \cup E (V \setminus C, C) $.
If $ u_{\mathrm{out}} \in C' $, then $ u \in C $ by the definition of $ C $, and thus trivially $ e' = (u_{\mathrm{out}}, v_{\mathrm{in}}) \in E (C', C') \cup E (C', B') $.
If $ u_{\mathrm{out}} \notin C' $, then $ u \in V \setminus C $ and it therefore must be the case that $ v \in C $ and thus $ v_{\mathrm{out}} \in C' $ and, by the definition of $ I' $, $ v_{\mathrm{out}} \in I' $.

Now suppose that $ v_{\mathrm{in}} \in V'_s \setminus C' $.
Then the edge $ (v_{\mathrm{in}}, v_{\mathrm{out}}) $ (which exists by the definition of $ G'_s $) goes from $ V'_s \setminus C' $ to $ C' $.
However, any path to $ v_{\mathrm{out}} $ must have $ (v_{\mathrm{in}}, v_{\mathrm{out}}) $ as its last edge.
Let $ R \subseteq \{ v_{\mathrm{in}} \} $ be the subset of $ C' $ that contains all vertices reachable from $ v_{\mathrm{in}} $ using only vertices of $ C' $.
Then $ C' \setminus R $ is a $k$-edge-out component with the same number of outgoing edges as $ C' $, which contradicts the minimality of $ C' $.
It therefore must be the case that $ v_{\mathrm{in}} \in C' $ and thus $ e' = (u_{\mathrm{out}}, v_{\mathrm{in}}) \in E (V'_s \setminus C', I') $.

We have shown that for every edge $ e = (u, v) \in E (C, C) \cup E (C, B) \cup E (V \setminus C, C) $ we have $ e' =(u_{\mathrm{out}}, v_{\mathrm{in}}) \in E (C', C) \cup E (C', B') \cup E (V'_s \setminus C', I') $.
Therefore, $ \vol^* (C) \leq \vol' (C') $.
\end{proof}

We now obtain the local algorithm of Theorem~\ref{thm:local procedure vertex-out component} for detecting vertex cuts as follows by running a variant of the procedure of Theorem~\ref{thm:local procedure edge-out component} on $ G'_s $ with starting vertex $ s_{\mathrm{out}} $ and parameter~$ \Delta' = 3 \Delta $, which returns some set of vertices~$ U' $.
We then return $ U = \{ v \in V \mid v_{\mathrm{out}} \in U' \} $.

This variant of the procedure of Theorem~\ref{thm:local procedure edge-out component} is as follows.
Observe first that instead of giving a bound on the edge size of the $k$-edge-out component in Algorithm~\ref{alg:local procedure}, we could also give a bound $ \Delta' $ on the volume of the $k$-edge-out component.
This gives a component of maximum volume $ O ((k + 1) \Delta') $ and a query time and running time of $ O ((k^2 + 1) \Delta') $ for Procedure~\DetectComponent.
Furthermore, we could also give a bound $ \Delta' $ on the restricted symmetric volume.
We only need to modify the algorithm to also scan the incoming edges of every vertex as soon as it becomes an interior vertex and adding these edges to $ F_i $.
In this way, the query time and the running time still are $ O (\Delta') $.
This gives a component of maximum restricted symmetric volume $ O ((k + 1) \Delta') $ and a query time and running time of $ O ((k^2 + 1) \Delta') $.
These are $ O ((k + 1) \Delta) $ and $ O ((k^2 + 1) \Delta) $, respectively, for $ \Delta' = 3 \Delta $.

The correctness proof is as follows.
Suppose that~$ s $ is contained in a $k$-vertex-out component of (symmetric) volume at most~$ \Delta $ in $ G $.
Then by Lemma~\ref{lem:vertex component implies edge component}, $ G'_s $ has a $k$-edge-out component $ C' $ containing~$ s $ of (restricted symmetric) volume at most $ 3 \Delta = \Delta' $.
Therefore, the modified local procedure returns $ U' \supseteq \{ s_{\mathrm{out}} \} $ with probability at least~$ p $.
It follows that $ U \supseteq \{ s \} $ with probability at least $ p $ as desired.

Now suppose that $ U' \supseteq \{ s_{\mathrm{out}} \} $.
By our slight modification of Theorem~\ref{thm:local procedure edge-out component}, $ U' $ is a $k$-edge component containing $ s $ in $ G'_s $ of (restricted symmetric) volume $ O ((k + 1) \Delta') = O ((k + 1) \Delta) $.
By Lemma~\ref{lem:edge component implies vertex component}, $ U $ is a $k$-vertex-out component containing~$ s $ of (symmetric) volume $ O((k + 1) \Delta') = O ((k + 1) \Delta) $ as desired.

\section{Vertex Connectivity}\label{sec:vertex connectivity}

In the following, we give an improved algorithm for computing the vertex connectivity $ \kappa $ of a directed graph.
We obtain this algorithm by modifying the vertex connectivity algorithm of Nanongkai, Saranurak, and Yingchareonthawornchai~\cite{NanongkaiSY19}.
We review this algorithm in the following.

For a given integer $ k $, this algorithm either certifies that $ \kappa < k $ or concludes that, with high probability, $ \kappa \geq k $.
The algorithm assumes that $ k \leq \tfrac{\sqrt{m}}{2} $.
As a subroutine, the algorithm uses a local vertex cut detection algorithm, in style similar to our algorithm of Theorem~\ref{thm:local procedure vertex-out component}.
In the following, let $ \Delta^* $ be the maximum $ \Delta $ such that $ S (\Delta) + k^2 < m $, where $ S (\Delta) $ is an upper bound on the size of the component returned by the subroutine for a given $ \Delta $.
In a preprocessing step, the algorithm ensures that the graph is strongly connected (which can be checked in linear time) before proceeding.

Suppose first, the graph contains a vertex cut $ (L, M, R) $ of size at most $ k $ in which both sides $ L $ and $ R $ have symmetric volume more than $ \Delta^* $, where the symmetric volumes of $ L $ and $ R $ are defined as
\begin{equation*}
\vol^* (L) = | E (L, V) \cup E (V, L) | = | E (L, L) | + | E (L, M) | + | E (M, L) | + | E (R, L) |
\end{equation*}
and
\begin{equation*}
\vol^* (R) = | E (R, V) \cup E (V, R) | = | E (R, R) | + | E (R, M) | + | E (M, R) | + | E (R, L) | \, ,
\end{equation*}
respectively.
If we are given vertices $ s \in L $ and $ t \in R $, then we can find a vertex cut of size at most $ k $ in time $ O (k m) $ by running $ k $ iterations of the Ford-Fulkerson algorithm on a suitably modified graph.
By uniformly sampling $ O (\tfrac{m}{\Delta^*} \cdot \log n) $ pairs of edges $ (u, u') $, $ (v, v') $ and for each such sample trying out all combinations of their endpoints as choices for $ s $ and $ t $ we can find such a pair with high probability by Lemma~\ref{lem:connectivity:large cuts} below and thus certify that $ \kappa < k $.
This is the first step of the algorithm and it takes time $ O (\tfrac{m}{\Delta^*} \cdot k m \log n) $.

\begin{lemma}[Implicit in~\cite{NanongkaiSY19}]\label{lem:connectivity:large cuts}
Assume that $ G = (V, E) $ contains a vertex cut $ (L, M, R) $ of size $ |M| \leq k $ such that both $ L $ and $ R $ have symmetric volume at least $ \Delta $ and that $ k \leq \tfrac{\sqrt{m}}{2} $.
Let $ (e_1, e'_1), \ldots, (e_t, e'_t) $ for $ t = \lceil \tfrac{m}{\Delta} \cdot c \log n \rceil $ (and for any $ c \geq 1 $) be $ t $ pairs of edges that have been sampled from $ E \times E $ uniformly at random.
Then with probability at least $ 1 - \tfrac{1}{n^c} $ there is some sampled pair $ (e_i = (u_i, v_i), e'_i = (u'_i, v'_i)) $ (for $ 1 \leq i \leq t $) such that among the following four pairs of vertices there is at least one pair in which the first vertex is contained in $ L $ and the second vertex is contained in $ R $: $ (u_i, u'_i) $, $ (u_i, v'_i) $, $ (v_i, u'_i) $, $ (v_i, v'_i) $.
\end{lemma}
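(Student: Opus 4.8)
The statement is essentially a coupon-collector / hitting-set argument, so the plan is to first reduce "finding a good sampled pair" to "one sampled pair of edges having one endpoint with symmetric volume contribution in $L$ and the other in $R$", and then bound the failure probability of a single sample and take a union bound over the $t$ samples. The key observation is that an edge $e$ sampled uniformly from $E$ has probability exactly $\vol^*(L)/(2m)$ of being incident to $L$ (each edge incident to $L$ is counted once in $\vol^*(L) = |E(L,V)\cup E(V,L)|$, and the denominator $2m$ — or $m$, depending on how one counts multiplicities of edges with both endpoints in $L$ — needs to be pinned down carefully, which is the one place to be careful), and similarly $\vol^*(R)/(2m)$ for $R$. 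Since $L$ and $R$ are disjoint, whenever $e$ is incident to $L$ it exhibits a vertex of $L$ as one of its two endpoints, and whenever $e'$ is incident to $R$ it exhibits a vertex of $R$.

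First I would make precise the claim that if $e_i$ is incident to $L$ and $e'_i$ is incident to $R$, then one of the four pairs $(u_i,u'_i),(u_i,v'_i),(v_i,u'_i),(v_i,v'_i)$ has its first coordinate in $L$ and its second in $R$: pick the endpoint of $e_i$ that lies in $L$ (at least one does) to be the first coordinate, pick the endpoint of $e'_i$ that lies in $R$ (at least one does) to be the second coordinate; this is one of the four listed pairs, possibly after noting that we try $e_i$ in the role of "first edge". (If the algorithm only tries ordered pairs $(e_i,e'_i)$ and not $(e'_i,e_i)$, I would additionally symmetrize by also requiring that we consider $e_i$ incident to $R$ and $e'_i$ incident to $L$; either way the event "$e_i$ incident to $L$ and $e'_i$ incident to $R$" suffices for one of the listed pairs.)

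Second, I would bound the probability that a single sampled pair fails. A sampled pair $(e_i,e'_i)$ is "bad" if it is \emph{not} the case that $e_i$ is incident to $L$ and $e'_i$ incident to $R$. Since $e_i$ and $e'_i$ are sampled independently and uniformly, $\Pr[e_i \text{ incident to } L] = \vol^*(L)/(2m) \geq \Delta/(2m)$ and likewise for $R$, so $\Pr[(e_i,e'_i)\text{ good}] \geq (\Delta/(2m))^2$. Here is where I would use the hypothesis $k \leq \sqrt{m}/2$ and the strong connectivity preprocessing: these guarantee $\Delta \geq$ something like $k^2$ or at least that $\Delta/(2m)$ is not vacuously tiny — actually for the union bound all we need is that $\Delta \leq 2m$ (trivially true) and the value of $t$; the role of $k\le\sqrt m/2$ is more likely to ensure $L$ and $R$ are genuinely nonempty and the cut is nontrivial, which I would check against the definitions in Section~\ref{sec:preliminaries}. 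The probability that \emph{all} $t$ sampled pairs are bad is then at most $\left(1 - (\Delta/(2m))^2\right)^t$. Wait — this gives $(\Delta/(2m))^2$, which with $t = \Theta((m/\Delta)\log n)$ would only kill $(1-\Delta/(2m))^t$, not the squared quantity; so I would instead argue more cleverly, conditioning: with the chosen $t = \lceil (m/\Delta)\cdot c\log n\rceil$, the probability that \emph{no} $e_i$ (among all $t$) is incident to $L$ is at most $(1-\Delta/(2m))^t \leq e^{-c\log n/2} = n^{-c/2}$, and similarly symmetrically; a cleaner route is to note we in fact sample $t$ \emph{pairs}, giving $2t$ independent edges, and we just need one edge incident to $L$ and one incident to $R$ among them — but the matching-up of which $i$ is subtle, so the actual argument (as in~\cite{NanongkaiSY19}) pairs them within the same sample. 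The clean version: each sample is good independently with probability $\ge (\Delta/(2m))^2 \ge \Delta/(4m^2)\cdot\Delta$; hmm.

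The honest resolution, which I would adopt, is: the probability a single pair is good is $\geq (\Delta/(2m))^2$, so the probability all $t$ are bad is $\leq (1-(\Delta/(2m))^2)^t$, and to make this $\leq n^{-c}$ we would need $t \approx (2m/\Delta)^2\log n$, \emph{not} $(m/\Delta)\log n$. So the stated $t$ must be exploiting that we only need, over the whole batch, one edge near $L$ \emph{and} one edge near $R$ (not necessarily from the same index), which is valid only if the algorithm also tries cross-sample combinations — OR the lemma is using that actually $L$ \emph{or} $R$ (say $R$) is the small side and $\vol^*(L) \geq m - O(\text{something})$ so $\Pr[e_i\text{ incident to }L]$ is close to $1$ and only the $R$-event is the bottleneck, needing $t = \Theta((m/\vol^*(R))\log n) = \Theta((m/\Delta)\log n)$. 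This last reading is the most likely intended one, given that one side of a small cut typically has large volume. \textbf{The main obstacle} is therefore getting this bookkeeping exactly right: determining whether the lemma implicitly assumes one side has volume $\Omega(m)$ (making the single-sided hitting bound with $t=\Theta((m/\Delta)\log n)$ suffice), and nailing down the factor of $2$ in $\vol^*(L)/(2m)$ versus $\vol^*(L)/m$. Once that is settled, the union bound over $t$ samples and the choice of $t$ immediately give failure probability $\leq n^{-c}$, completing the proof.
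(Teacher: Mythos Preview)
Your proposal correctly identifies the structure of the argument and, crucially, the real difficulty: a naive per-sample success probability of $(\Delta/m)^2$ would require $t = \Theta((m/\Delta)^2 \log n)$ samples, not $\Theta((m/\Delta)\log n)$. You also correctly guess that the resolution is that one side of the cut has volume $\Omega(m)$, so only the other side is the bottleneck. But you stop short of actually proving this, and you misidentify the role of the hypothesis $k \leq \sqrt{m}/2$ (it is \emph{not} there to ensure nontriviality of the cut).

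Here is the missing step, which is exactly what the paper does. Write $\mathcal{L} = E(L,L)\cup E(L,M)\cup E(M,L)\cup E(R,L)$ (all edges with an endpoint in $L$; recall $E(L,R)=\emptyset$) and $\mathcal{R}$ symmetrically. The edge set decomposes as $E = \mathcal{L} \cup E(M,M) \cup E(M,R) \cup E(R,M) \cup E(R,R)$, and the hypothesis $k \leq \sqrt{m}/2$ gives $|E(M,M)| \leq k^2 \leq m/4$. Now do a case distinction: if $|\mathcal{L}| \geq m/4$, then $\Pr[e \in \mathcal{L}]\cdot\Pr[e' \in \mathcal{R}] \geq \tfrac14 \cdot \tfrac{\Delta}{m}$; if $|\mathcal{L}| < m/4$, then $|\mathcal{R}| \geq |E(R,R)\cup E(R,M)\cup E(M,R)| \geq m - |\mathcal{L}| - |E(M,M)| \geq m/2$, so again $\Pr[e \in \mathcal{L}]\cdot\Pr[e' \in \mathcal{R}] \geq \tfrac{\Delta}{m}\cdot\tfrac14$. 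Either way the per-sample success probability is at least $\Delta/(4m)$, and with $t = \Theta((m/\Delta)\log n)$ the failure probability $(1-\Delta/(4m))^t$ is at most $n^{-c}$. So the lemma does not \emph{assume} one side is large; it \emph{derives} it from the bound on $|M|$, and this is precisely where $k \leq \sqrt{m}/2$ is used.
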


\begin{proof}
For ease of notation, set $ \mathcal{L} = E (L, L) \cup E (L, M) \cup E (M, L) \cup E (R, L) $ and $ \mathcal{R} = E (R, R) \cup E (R, M) \cup E (M, R) \cup E (R, L) $ in this proof.
By our assumption we know that $ | \mathcal{L} | \geq \Delta $ and $ | \mathcal{R} | \geq \Delta $.
Let $ e = (u, v) $ and $ e' = (u', v') $ be a pair of sampled edges.
We will now show that with probability at least $ \tfrac{\Delta}{4 m} $ among the following pairs of vertices there is at least one for which the first vertex is contained in $ L $ and the second vertex is contained in $ R $: $ (u, u') $, $ (u, v') $, $ (v, u') $, $ (v, v') $.
We achieve this by showing that $ e \in \mathcal{L} $ and $ e' \in \mathcal{R} $ with probability at least $ \tfrac{\Delta}{4 m} $.

Since $ e $ and $ e' $ have been sampled independently, we have $ \Pr [e \in \mathcal{L} \wedge e' \in \mathcal{R}] = \Pr [e \in \mathcal{L}] \cdot \Pr [e' \in \mathcal{R}] $.
As both $ | \mathcal{L} | \geq \Delta $ and $ | \mathcal{R} | \geq \Delta $, we have $ \Pr [e \in \mathcal{L}] \geq \tfrac{\Delta}{m} $ and $ \Pr [e' \in \mathcal{R}] \geq \tfrac{\Delta}{m} $.
We now show by a case distinction that we can get even better estimates by exploiting that if the symmetric volume on one side is relatively small, then the symmetric volume of the other side must be relatively high.

Consider first the case $ |\mathcal{L}| \geq \tfrac{m}{4} $.
Then $ \Pr [e \in \mathcal{L}] \geq \tfrac{1}{4} $ and we have 
\begin{equation*}
\Pr [e \in \mathcal{L} \wedge e' \in \mathcal{R}] = \Pr [e \in \mathcal{L}] \cdot \Pr [e' \in \mathcal{R}] \geq \frac{1}{4} \cdot \frac{\Delta}{m} = \frac{\Delta}{4m} \, .
\end{equation*}
Consider now the other case $ |\mathcal{L}| < \tfrac{m}{4} $.
Observe that the set of edges of the graph can be partitioned as follows:
\begin{equation*}
    E = \mathcal{L} \cup E (M, M) \cup E (M, R) \cup E (R, M) \cup E (R, R) \, .
\end{equation*}
As $ | E (M, M) | \leq k^2 \leq \tfrac{m}{4} $, it must be the case that $ | \mathcal{R} | \geq | E (R, R) | + | E (R, M) | + | E (M, R) | \geq | E | - | \mathcal{L} | - | E (M, M) | \geq m - \tfrac{2}{4} \cdot m \geq \tfrac{1}{4} \cdot m $.
Therefore we have
\begin{equation*}
\Pr [e \in \mathcal{L} \wedge e' \in \mathcal{R}] = \Pr [e \in \mathcal{L}] \cdot \Pr [e' \in \mathcal{R}] \geq \frac{\Delta}{m} \cdot \frac{1}{4} = \frac{\Delta}{4m}
\end{equation*}
as promised above.

Now for $ t = \lceil \tfrac{4m}{\Delta} \cdot c \ln n \rceil $ many sampled edge pairs we have that the probability that none of them fulfills the desired condition is at most
\begin{equation*}
    \left ( 1 - \frac{\Delta}{4m} \right)^{t} \leq \left ( 1 - \frac{\Delta}{4m} \right)^{\frac{4 m}{\Delta} \cdot c \ln n} \leq \frac{1}{e^{c \ln n}} = \frac{1}{n^c} \, .
\end{equation*}
Thus, the probability that for at least one sampled edge pair $ (e, e') $ we have $ e \in \mathcal{L} $ and $ e' \in \mathcal{R} $ is at least $ 1 - \tfrac{1}{n^c} $ as desired.
\end{proof}

To understand the second step of the algorithm, suppose the graph contains a vertex cut $ (L, M, R) $ of size at most $ k $ in which one of the sides $ L $ or $ R $ has symmetric volume at most $ \Delta^* $.
Then, by running the local vertex cut detection algorithm with parameter $ \Delta^* $ on some vertex of the smaller side we will find a $k$-vertex-out component or a $k$-vertex-in component of symmetric volume at most $ S (\Delta^*) < m - k^2 $.
As $ | E (M, M) | < k^2 $, this component is proper and thus certifies that $ \kappa < k $.

Let $ C \in \{ L, R \} $ be such that the symmetric volume of $ C $ is at most $ \Delta^* $ and set $ \mathcal{C} = E (C, V) \cup E (V, C) $ (where $ \vol^*(C) = | \mathcal{C} | \leq \Delta^* $).
To find a vertex contained in $ C $ efficiently, we will perform the following sampling-based method:
Set $ \Delta_i = \Delta^* / 2^i $ for every $ 0 \leq i \leq \log{\Delta^*} $.
Then $ \Delta_{i+1} \leq | \mathcal{C} | \leq \Delta_i $ for some $ i \geq 0 $.
By sampling $ O (\tfrac{m}{\Delta_{i+1}} \cdot \log n) $ many edges uniformly at random, we will with high probability find some edge contained $ \mathcal{C} $.
For every edge contained in $ \mathcal{C} $ at least one of its endpoints is contained in $ C $.
By running a local vertex cut algorithm on both endpoints of each sampled edge with parameter $ \Delta_i $, we will thus with high probability find a $k$-vertex-out component or a $k$-vertex-in component of symmetric volume at most $ S (\Delta^*) < m - k^2 $ as desired.
This is the second step of the algorithm.
If the running time for a single instance of the local vertex cut algorithm is $ T (\Delta_i) = f(k) \cdot O (\Delta_i) $ for some function $ f (\cdot) $, then the running time of the second step of the algorithm is
\begin{equation*}
O \left( \sum_{0 \leq i \leq \log{\Delta^*}} T (\Delta_i) \cdot \frac{m}{\Delta_{i+1}} \log n \right) = O \left( f(k) \cdot m \log^2 n \right)
\end{equation*}

If neither the first step, nor the second step of the algorithm have found a vertex cut of size at most $ k $, then we conclude that $ \kappa \geq k $.
Since every vertex cut of size at most $ k $ must fall into one of the two cases, it follows from the discussion above that our algorithm is correct with high probability.
Formally, the guarantees of the algorithm can be summarized as follows.

\begin{lemma}[Implicit in \cite{NanongkaiSY19}]
Suppose there is an algorithm that, given constant-time query access to a directed graph, returns, for a fixed $ k \geq 1 $, any integer~$ \Delta \geq 1 $ and any starting vertex~$ s $, in time $ T (\Delta) = f(k) \cdot O (\Delta) $ for some function $ f (k ) $ a set of vertices $ U $ such that
(1) if there is a $ k $-vertex-out component containing~$ s $ of symmetric volume at most~$ \Delta $, then $ U \supseteq \{ s \} $ with high probability and
(2) if $ U \supseteq \{ s \} $, then $ U $ is a $ k $-vertex-out component of symmetric volume at most $ S (\Delta) $.

Then there is a randomized Monte-Carlo algorithm for computing the vertex connectivity of a directed graph in time time $ O (\tfrac{m}{\Delta^*} \cdot k m \log n + f(k) \cdot m \log^2 n + n) $ where $ \Delta^* $ is the maximum $ \Delta $ such that $ S (\Delta) + k^2 < m $.
\end{lemma}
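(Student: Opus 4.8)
\emph{Overall approach.} Since the preceding discussion already exhibits the algorithm, the plan is to package it cleanly and verify the two claims. I would first reduce to a decision task: for the given $k$, design a subroutine that, with high probability, reports a vertex cut of size at most $k$ whenever one exists and reports ``none'' otherwise; the vertex connectivity is then the smallest $k$ for which the answer is affirmative (with $\kappa = 0$ exactly when $G$ is not strongly connected, which I would dispatch by a single linear-time strong-connectivity test in the $O(m+n)$ preprocessing), and ranging over $k$ by doubling plus binary search costs only an $O(\log\kappa)$ overhead. For the decision subroutine I would run the two steps from the discussion above: \textbf{Step~1}, which samples $O(\tfrac{m}{\Delta^*}\log n)$ pairs of edges and, for each of the $O(1)$ ways of designating an endpoint $s$ of the first and $t$ of the second as source/sink, runs $O(k)$ augmentation rounds of Ford--Fulkerson on the standard vertex-splitting graph to look for an $s$-$t$ vertex cut of size at most $k$; and \textbf{Step~2}, which for each scale $\Delta_i = \Delta^*/2^i$ ($0 \le i \le \log\Delta^*$) samples $O(\tfrac{m}{\Delta_{i+1}}\log n)$ edges and invokes the assumed local cut-detection algorithm with parameter $\Delta_i$ from both endpoints of each sampled edge, both in $G$ (to catch out-components) and in the reverse of $G$ (to catch in-components), reporting a cut whenever it returns a set $U$ with $\{s\} \subseteq U \subsetneq V$. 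If neither step reports anything, the subroutine answers ``none''.

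\emph{Correctness.} I would proceed by a case analysis on a hypothetical vertex cut $(L,M,R)$ of size $|M| \le k$, splitting on whether both $\vol^*(L) > \Delta^*$ and $\vol^*(R) > \Delta^*$. In that case, Lemma~\ref{lem:connectivity:large cuts} (applicable since $k \le \sqrt m / 2$) says that with high probability some sampled edge pair has an endpoint in $L$ and one in $R$; because $E(L,R) = \emptyset$, the minimum vertex cut separating these two vertices has size at most $|M| \le k$, so Step~1 finds it. Otherwise some side $C \in \{L,R\}$ has $\vol^*(C) = |\mathcal C| \le \Delta^*$ with $\mathcal C := E(C,V) \cup E(V,C)$ nonempty; I would pick the unique scale with $\Delta_{i+1} \le |\mathcal C| \le \Delta_i$, argue that among the $O(\tfrac{m}{\Delta_{i+1}}\log n)$ edges sampled at that scale at least one lies in $\mathcal C$ with high probability, note that such an edge has an endpoint in $C$ and that $C$ is a $k$-vertex-out component of $G$ (if $C = L$) or of the reverse of $G$ (if $C = R$) of symmetric volume at most $\Delta_i$, so the detection algorithm returns, with high probability, a $k$-vertex-out/in component $U \ni s$ of symmetric volume at most $S(\Delta_i) \le S(\Delta^*)$. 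The crux would then be to certify $U \ne V$: since $S(\Delta^*) + k^2 < m$ and $|E(M,M)| \le k^2$, a component of symmetric volume at most $S(\Delta^*)$ cannot span all of $V$, so $U$ is proper and yields a genuine cut of size at most $k$. Conversely there are no false positives: Step~1 only reports a cut that Ford--Fulkerson actually exhibits, and Step~2 only reports a proper component that the detection algorithm actually certifies via its property~(2). A union bound over the polynomially many random trials, after boosting each invocation's failure probability below $n^{-c}$, gives overall success with high probability.

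\emph{Running time.} Finally I would add up the costs, which is routine. Preprocessing is $O(m+n)$. Step~1 runs $O(\tfrac{m}{\Delta^*}\log n)$ samples, each spawning $O(1)$ bounded Ford--Fulkerson computations of cost $O(km)$ each, for $O(\tfrac{m}{\Delta^*}\cdot km\log n)$ in total. In Step~2, scale $i$ costs $O(\tfrac{m}{\Delta_{i+1}}\log n)$ samples times $O(1)$ detection calls of cost $T(\Delta_i) = f(k)\cdot O(\Delta_i)$; since $\Delta_{i+1} = \Delta_i/2$ the factor $\Delta_i$ cancels and each scale contributes $O(f(k)\, m\log n)$, so summing over the $O(\log\Delta^*) = O(\log m)$ scales gives $O(f(k)\, m\log^2 n)$. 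Adding the three terms yields $O(\tfrac{m}{\Delta^*}\cdot km\log n + f(k)\, m\log^2 n + n)$ as claimed; the $O(\log\kappa)$ factor for searching over $k$ to recover $\kappa$ itself is absorbed into the $\tilde O(\cdot)$ in the applications.

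\emph{Main obstacle.} I expect the delicate part to be the properness argument in Step~2 --- showing the returned component omits at least one vertex --- which is exactly why $\Delta^*$ must be chosen as the largest $\Delta$ with $S(\Delta) + k^2 < m$ and why the hypothesis $k \le \sqrt m/2$, giving $k^2 \le m/4$, is needed, both here and for the applicability of Lemma~\ref{lem:connectivity:large cuts}. A secondary subtlety is that $\vol^*(C)$ is unknown, which the geometric ladder of scales $\Delta_i$ handles --- only the ``right'' scale must succeed --- at the price of a single extra logarithmic factor; and one must remember to run the detection on the reverse graph as well, since the small side of the cut may be $R$ rather than $L$.
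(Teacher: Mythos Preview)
Your proposal is correct and follows essentially the same approach as the paper: the two-step decision procedure (edge-pair sampling plus Ford--Fulkerson for the ``both sides large'' case, geometric scaling plus local detection for the ``one side small'' case), the properness argument via $S(\Delta^*)+k^2<m$, and the running-time accounting all match the paper's discussion preceding the lemma. One small terminological slip: when you write ``certify $U\ne V$'' you actually need the stronger conclusion that $U$ is \emph{proper}, i.e.\ $U\cup B\subsetneq V$ where $B$ is the boundary --- but your argument using $|E(B,B)|\le k^2$ does establish exactly this, so the reasoning is sound even if the phrasing conflates the two notions.
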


Using our local vertex cut algorithm of Theorem~\ref{thm:local procedure vertex-out component}, we have $ S (\Delta) = O (k \Delta) $ and $ T (\Delta) = O (k^2 \Delta \log n) $ to obtain a high-probability guarantee.
This means that $ \Delta^* = \Theta (\tfrac{m}{k}) $.
We therefore arrive at a running time of $ O (k^2 m \log^3 n) $.
Using a standard binary search approach, we can use this algorithm to determine $ \kappa $ (with high probability).
If in this binary search, an answer for $ k > \tfrac{\sqrt{m}}{2} $ is needed, we simply run the algorithm of Henzinger, Rao, and Gabow~\cite{HenzingerRG00} which, with high probability, computes the vertex connectivity in time $ O (m n \log{n}) $, which is $ O (k^2 m \log{n}) $ for $ k > \tfrac{\sqrt{m}}{2} $.
We thus obtain the following result.
\begin{theorem}
There is a randomized Monte-Carlo algorithm for computing the vertex connectivity~$ \kappa $ of a directed graph in time $ O (k^2 m \log^3 n \log \kappa + n) $.
The algorithm is correct with high probability.
\end{theorem}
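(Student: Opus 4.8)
The plan is to obtain $\kappa$ from $O(\log\kappa)$ invocations of the decision procedure of the preceding lemma, combined via a doubling (exponential) search on the parameter~$k$ together with the fallback to the algorithm of~\cite{HenzingerRG00} for large~$k$.

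First I would instantiate the preceding lemma with the local vertex cut detector of Theorem~\ref{thm:local procedure vertex-out component}: for its high-probability variant we have $S(\Delta) = O(k\Delta)$ and running time $T(\Delta) = O(k^2\Delta\log n)$, hence $f(k) = O(k^2\log n)$ and $\Delta^* = \Theta(m/k)$. Substituting these into the bound $O(\tfrac{m}{\Delta^*}\cdot km\log n + f(k)\cdot m\log^2 n + n)$ shows that one run of the resulting decision procedure $\mathcal D(k)$ — which for $k\le\sqrt m/2$ either outputs a vertex cut of size less than~$k$ (an unconditional certificate that $\kappa<k$) or reports that $\kappa\ge k$, correctly with high probability — takes time $O(k^2 m\log^3 n + n)$.

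Next I would exploit that the predicate ``$\kappa<k$'' is monotone in~$k$: it is false for $k\in\{1,\dots,\kappa\}$ and true for $k\ge\kappa+1$, so $\kappa$ is one less than the smallest~$k$ at which $\mathcal D$ reports $\kappa<k$. To pin this threshold down with only $O(\log\kappa)$ calls I would first dispatch the degenerate case $\kappa=0$ by an $O(m+n)$-time strong-connectivity check, and then run a doubling search: test $k = 2,4,8,\dots$; at step~$j$, if $2^j>\sqrt m/2$ stop and instead run the $\tilde O(mn)$-time algorithm of~\cite{HenzingerRG00} (in this case $\mathcal D(2^{j-1})$ had reported $\kappa\ge 2^{j-1}$ with $2^{j-1}=\Omega(\sqrt m)$, so $\kappa=\Omega(\sqrt m)$ and, using $m\ge n$, its $O(mn\log n)$ cost is within the target bound), otherwise run $\mathcal D(2^j)$ and, if it reports $\kappa<2^j$, conclude $\kappa\in[2^{j-1},2^j)$ and finish with an ordinary binary search inside that interval. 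Since $2^{j-1}\le\kappa$, every~$k$ ever tested is $O(\kappa)$, and the number of tests is $O(\log\kappa)$, so the total cost is $O(\log\kappa)$ runs at $O(\kappa^2 m\log^3 n + n)$ each; because $\kappa\ge1$ implies $m\ge n$, the additive $n$'s collapse into the leading term up to a single $O(m+n)$, giving the claimed $O(\kappa^2 m\log^3 n\log\kappa + n)$.

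For correctness, note that every cut produced by $\mathcal D$ is a genuine certificate, so the only fallible outputs are the at most $O(\log\kappa)=O(\log n)$ conclusions of the form $\kappa\ge k$; a union bound over them, after inflating the constant in the high-probability guarantee of each call (and of the \cite{HenzingerRG00} fallback), keeps the whole algorithm correct with high probability. The step needing the most care is not deep but is the crux of getting the stated running time: since one call of $\mathcal D(k)$ costs $\Theta(k^2 m\log^3 n)$, a plain binary search over $[1,n-1]$ would be far too expensive when $\kappa$ is small (it would test some $k=\Theta(n)$); the doubling search is precisely what keeps every tested $k$ at $O(\kappa)$, and switching to~\cite{HenzingerRG00} is precisely what covers $\kappa=\Omega(\sqrt m)$, where the hypothesis $k\le\sqrt m/2$ of $\mathcal D$ no longer holds.
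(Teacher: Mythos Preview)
Your proposal is correct and follows essentially the same approach as the paper: instantiate the preceding black-box lemma with the local vertex-cut detector of Theorem~\ref{thm:local procedure vertex-out component} (yielding $S(\Delta)=O(k\Delta)$, $T(\Delta)=O(k^2\Delta\log n)$, hence $\Delta^*=\Theta(m/k)$ and cost $O(k^2 m\log^3 n)$ per decision), then search over~$k$ with a fallback to~\cite{HenzingerRG00} once $k>\sqrt{m}/2$. The paper compresses the search step into the phrase ``standard binary search approach''; you spell out the doubling search and the reason it is needed (keeping every tested $k$ at $O(\kappa)$ so that the $\log\kappa$ factor, rather than $\log n$, appears), which is exactly the intended reading.
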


By the standard approach of running our algorithm on the sparse $k$-vertex connectivity certificate of Nagamochi and Ibaraki~\cite{NagamochiI92}, which can be computed in linear time, we obtain the following result.

\begin{theorem}
There is a randomized Monte-Carlo algorithm for computing the vertex connectivity~$ \kappa $ of an undirected graph in time $ O ((\kappa^3 n \log^3 n + m) \log \kappa) $.
The algorithm is correct with high probability.
\end{theorem}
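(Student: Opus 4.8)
The plan is to reduce to the directed case of the previous theorem after first sparsifying the input with the linear-time certificate of Nagamochi and Ibaraki~\cite{NagamochiI92}. Recall that this certificate, for a given integer~$k$, produces in time $O(m+n)$ a subgraph $H = H_k \subseteq G$ with at most $k(n-1)$ edges that is $k$-vertex connected if and only if $G$ is; more precisely, the local vertex connectivity of every pair of vertices, capped at~$k$, is preserved, so every vertex cut of~$H$ of size less than~$k$ is also a vertex cut of~$G$, and conversely a vertex cut of~$G$ of size less than~$k$ remains one in~$H$. Replacing each undirected edge of~$H$ by the two opposite directed edges gives a directed graph $\vec{H}$ with at most $2k(n-1)$ edges and the same vertex connectivity, since the vertex cuts of a bidirected graph are exactly the vertex cuts of the underlying undirected graph.

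For a fixed candidate value~$k$ I would then invoke, on $\vec{H}$ with parameter~$k$, the decision procedure underlying the directed vertex connectivity algorithm (the lemma implicit in~\cite{NanongkaiSY19} together with our local vertex cut algorithm of Theorem~\ref{thm:local procedure vertex-out component}). Since $\vec{H}$ has $m' = O(kn)$ edges, plugging $S(\Delta) = O(k\Delta)$ and $T(\Delta) = O(k^2\Delta\log n)$ into that lemma yields $\Delta^* = \Theta(m'/k)$ and a running time of $O(k^2 m'\log^3 n + n) = O(k^3 n\log^3 n + n)$; with high probability the procedure either outputs a proper $k$-vertex-out or $k$-vertex-in component of~$\vec{H}$ (which, by the certificate property, certifies $\kappa < k$ in~$G$) or reports $\kappa \ge k$. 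The one precondition is $k \le \sqrt{m'}/2$; since $m' \le 2k(n-1)$ this can fail only when $k = \Omega(n)$, in which case $\kappa \le n-1 = O(k)$ and I would instead run the Henzinger--Rao--Gabow algorithm~\cite{HenzingerRG00} directly on~$G$ in time $O(mn\log n)$, which is dominated by $\kappa^3 n\log^3 n\log\kappa$ in that regime.

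Finally I would wrap this decision procedure in an exponential search followed by a binary search on~$k$: double~$k$ until the procedure reports $\kappa < k$, then binary-search within the last interval. This examines $O(\log\kappa)$ values of~$k$, each $O(\kappa)$, so every iteration costs $O(m+n)$ to (re)build the certificate and $O(\kappa^3 n\log^3 n + n)$ for the directed decision procedure, for a total of $O((\kappa^3 n\log^3 n + m)\log\kappa)$ after noting that $n = O(m)$ whenever $\kappa \ge 1$ (disconnected inputs, where $\kappa = 0$, being detected in linear time). A union bound over the $O(\log\kappa)$ invocations keeps the total error probability polynomially small once the high-probability constant is chosen appropriately. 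I expect the only delicate point to be the verification that the Nagamochi--Ibaraki certificate and the bidirected orientation each preserve vertex cuts of size below~$k$ in \emph{both} directions, together with the handling of the $k = \Omega(n)$ corner case; everything else is a direct substitution into results already established above.
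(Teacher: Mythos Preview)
Your proposal is correct and follows essentially the same approach as the paper: the paper's proof is the single sentence ``By the standard approach of running our algorithm on the sparse $k$-vertex connectivity certificate of Nagamochi and Ibaraki~\cite{NagamochiI92}, which can be computed in linear time, we obtain the following result,'' and you have spelled out exactly this reduction, together with the bidirection, the exponential/binary search over~$k$, and the $k=\Omega(n)$ fallback to~\cite{HenzingerRG00} (the latter two already handled in the preceding directed theorem).
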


\section{Maximal $k$-Edge Connected Subgraphs}\label{sec:connected subgraphs}

In the following, we consider the problem of computing the maximal $k$-edge connected subgraphs of a directed and undirected graphs.
In directed graphs, we essentially follow the overall algorithmic scheme of Chechik et al.~\cite{ChechikHILP17} and obtain an improvement by plugging in our new local cut-detection procedure.
In undirected graphs, we additionally modify the algorithmic scheme to obtain further running time improvements.

\subsection{Directed Graphs}\label{sec:subgraphs directed}

The baseline recursive algorithm for computing the maximal $k$-edge connected subgraphs works as follows:
First, try to find a directed cut with at most $ k -1 $ cut edges.
If such a cut exists, remove the cut edges from the graph and recurse on each strongly connected component of the remaining graph.
If no such cut exists, then the graph is $k$-edge connected.
The recursion depth of this algorithm is at most $ n $, and using Gabow's cut algorithm~\cite{Gabow95}, it takes time $ O (k m \log n) $ to find a cut with at most $ k - 1 $ cut edges.
Therefore this algorithm has a running time of $ O (k m n \log n) $.

The idea of Chechik et al.\ is to speed up this baseline algorithm by using a local cut-detection procedure as follows:
The algorithm ensures that the graph contains no $ (k'-1) $-edge-out components of edge size at most $ \Delta $ anymore for $ k' = \min (k, \Delta) $ before Gabow's global cut algorithm is invoked.
This can be achieved as follows:
If the number of edges in the graph is at most $ S (\Delta) $ (an upper bound on the edge size of the component returned by the local cut-detection procedure), then the basic algorithm is invoked.
Otherwise, the algorithm maintains a list~$ L $ of vertices which it considers as potential starting vertices for the local cut procedure.
Initially, $ L $ consists of all vertices.
For every vertex $ s $ of $ L $ the algorithm first tries to detect a small $(k'-1)$-edge-out component containing~$ s $ and then a small $(k'-1)$-edge-in component containing~$ s $.
It then removes $ s $ from $ L $ and if a component~$ C $ was detected, it removes $ C $ from the graph (as well as the outgoing and incoming edges of~$ C $) and adds the heads of the outgoing edges and the tails of the incoming edges to~$ L $.
Each component found in this way is processed (recursively) with the baseline algorithm.
Once $ L $ is empty, Gabow's cut algorithm is invoked on the remaining graph, the cut edges are removed from the graph, their endpoints are added to a new list~$ L' $, and the strongly connected components of the remaining graph are computed.
The algorithm then recurses on each strongly connected component with the list $ L' $.
As a preprocessing step to this overall algorithm, we first compute the strongly connected components and run the algorithm on each strongly connected component separately.\footnote{We have added this preprocessing step to ensure that $ n = O (m) $ for each strongly connected component in the running time analysis.}

The running time analysis is as follows.
The strongly connected components computation in the preprocessing can be done in time $ O (m + n) $.
For every vertex, we initiate the local cut detection initially and whenever it was the endpoint of a removed edge.
We thus initiate at most $ O (n + m) = O (m) $ local cut detections, each taking time $ T (\Delta) $.
It remains to bound the time spent for the calls of Gabow's cut algorithm and the subsequent computations of strongly connected components after removing the cut edges.
On a strongly connected graph with initially $ m' $ edges, these two steps take time $ O (k m' \log n) $.
Consider all recursive calls at the same recursion level of the algorithm.
As the graphs that these recursive calls operate on are disjoint, the total time spent at this recursion level is $ O (k m \log n) $.
To bound the total recursion depth, observe that for a graph with initially $ m' $ edges, the graph passed to each recursive call has at most $ \max \{ m' - \Delta, S (\Delta) \} $ edges as the only cuts left to find for Gabow's cut algorithm either have edge size at least $ \Delta $ on one side of the cut or at least $ \Delta $ cut edges.
Thus, the recursion depth is $ O (\tfrac{m}{\Delta} + S (\Delta)) $.
Altogether, we therefore arrive at a running time of
\begin{equation*}
O \left(m \cdot T (\Delta) + \left( \frac{m}{\Delta} + S(\Delta) \right) \cdot k m \log n + n \right) \, .
\end{equation*}

Observe further that a one-sided Monte-Carlo version of the local cut-detection procedure, as the one we are giving in this paper, only affects the recursion depth.
If each execution of the procedure is successful with probability $ p \geq 1 - \tfrac{1}{n^3} $, then the probability that all $ O (m) = O (n^2) $ executions of the procedure are successful is at least $ 1 - O(\tfrac{1}{n}) $.
As the worst-case recursion depth is at most~$ n $, the expected recursion depth is at most $ O ((1 - \tfrac{1}{n}) \cdot (\tfrac{m}{\Delta} + S (\Delta)) + \tfrac{1}{n} \cdot n) = O (\tfrac{m}{\Delta} + S (\Delta)) $.

The analysis of this algorithmic scheme can be summarized in the following lemma.
\begin{lemma}[Implicit in \cite{ChechikHILP17}]
Suppose there is an algorithm that, given constant-time query access to a directed graph, returns, for a fixed $ k \geq 1 $, any integer~$ \Delta \geq k $ and any starting vertex~$ s $, in time $ T (\Delta) $ a set of vertices $ U $ such that
(1) if there is a $ (k-1) $-edge-out component containing~$ s $ of edge size at most~$ \Delta $, then $ U \supseteq \{ s \} $ with probability at least $ 1 - \tfrac{1}{n^3} $ and
(2) if $ U \supseteq \{ s \} $, then $ U $ is a $ (k-1) $-edge-out component of edge size at most $ S (\Delta) $.

Then there is an algorithm for computing the maximal $k$-edge connected subgraphs of a directed graph with expected running time $ O (m \cdot T (\Delta) + (\tfrac{m}{\Delta} + S(\Delta)) \cdot k m \log n + n) $ for every $ 1 \leq \Delta \leq m $.
\end{lemma}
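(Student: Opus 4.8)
The plan is to instantiate the recursive scheme recapitulated above and to charge its running time to the four ingredients already isolated: the preprocessing, the local cut detections, the global (Gabow) cut computations together with the strongly-connected-component recomputations, and the recursion depth. Since the hypothesis gives $\Delta \ge k$, we have $k' := \min(k,\Delta) = k$, so the local subroutine is always invoked with parameter $k-1$ — on $G$ to detect $(k-1)$-edge-out components, and on the reverse graph to detect $(k-1)$-edge-in components.

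First I would settle correctness. The structural fact to prove is: if $(A,B)$ is an edge cut of $G$ with $|E(A,B)| \le k-1$, then no maximal $k$-edge connected subgraph $G[C]$ can have both $C\cap A \ne \emptyset$ and $C\cap B \ne \emptyset$, since then $(C\cap A,\,C\cap B)$ would be a cut of $G[C]$ of size $|E(C\cap A, C\cap B)| \le |E(A,B)| < k$. Hence the maximal $k$-edge connected subgraphs of $G$ are exactly those of the induced subgraphs on the two sides of any such cut — equivalently, on the strongly connected components obtained after deleting the at most $k-1$ crossing edges — because deleting crossing edges does not affect $G[C]$ for a $C$ confined to one side, and every maximal $k$-edge connected subgraph, being strongly connected, lies inside a single strongly connected component. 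The algorithm deletes edges only in two ways: the $\le k-1$ edges of a cut returned by Gabow's algorithm, and, via the local subroutine, the edges leaving (entering) a set $U$ that by guarantee~(2) is a genuine $(k-1)$-edge-out ($(k-1)$-edge-in) component; in both cases the structural fact applies, so recursing on the pieces neither loses a maximal subgraph nor introduces a spurious one, and a branch is declared $k$-edge connected exactly when Gabow's algorithm certifies the absence of a cut of size $\le k-1$. Guarantee~(1) plays \emph{no} role in correctness: a false negative of the local subroutine merely postpones finding a small cut to the later Gabow step.

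Next comes the running-time accounting. Preprocessing (strongly connected components) costs $O(m+n)$ and ensures $n_i = O(m_i)$ on each non-trivial component, single vertices being handled in $O(1)$. A local cut detection is launched once per vertex initially and once per endpoint of each deleted crossing edge; since at most $m$ edges are ever deleted, this is $O(n+m)$ launches, totalling $O(m\,T(\Delta)+n)$. One Gabow call plus a strongly-connected-component recomputation on a graph with $m'$ edges costs $O(k m'\log n)$, and since the graphs at a fixed recursion depth are vertex-disjoint this is $O(km\log n)$ per level. For the recursion depth: a branch with $\le S(\Delta)$ edges is handed to the baseline algorithm, whose depth is at most its edge count $\le S(\Delta)$ (each of its recursive steps removes at least one edge of a strongly connected graph); on a larger branch, the key claim is that once the worklist $L$ is exhausted the current graph has no $(k-1)$-edge-out and no $(k-1)$-edge-in component of edge size $\le\Delta$, whence every remaining cut of size $\le k-1$ has \emph{both} sides of edge size $>\Delta$, so deleting the $\le k-1 < \Delta$ cut edges and splitting into strongly connected components decreases the edge count by at least $\Delta$ on every child — giving a Gabow-recursion depth of $O(m/\Delta)$, to which each locally removed component (edge size $\le S(\Delta)$) contributes a baseline subcomputation of depth $\le S(\Delta)$. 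Thus the depth is $O(m/\Delta + S(\Delta))$ provided all local calls succeed; a union bound over the $O(n+m) = O(n^2)$ calls, each failing with probability $\le 1/n^3$, makes this hold with probability $1 - O(1/n)$, and otherwise the trivial bound (depth $\le n$) applies, so the expected depth is $O(m/\Delta + S(\Delta) + 1)$. Combining the four parts yields the claimed expected running time $O(m\,T(\Delta) + (m/\Delta + S(\Delta))\cdot k m\log n + n)$.

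The hard part will be the recursion-depth invariant — that exhausting $L$ really does leave no $(k-1)$-edge-out or $(k-1)$-edge-in component of edge size $\le\Delta$. I would prove this conditioned on the (high-probability) event that no local call produces a false negative: take a purported small out-component $D$ in the final graph, let $s^\ast\in D$ be the vertex of $D$ processed last, observe that after that moment no edge incident to $D$ is deleted (since each removal of a component incident to a vertex of $D$ would re-insert it into $L$ and force another processing, contradicting ``last''), so $D$ is still a small out-component when the local subroutine was last run from $s^\ast$; by guarantee~(1) that call then returns a set containing $s^\ast$, which gets removed — contradicting $s^\ast$ being in the final graph. The remaining subtleties are to make the $\Delta$-per-level edge decrease work under a split into possibly several strongly connected components (each child, not just the two sides jointly, must lose $\ge\Delta$ edges, which follows since each side has edge size $>\Delta$ and its complement's contribution to $m'$ is at least $\Delta$), and to keep the one-sided Monte-Carlo error confined to the recursion-depth bound — bounded in the worst case by $n$ — so that correctness is unconditional.
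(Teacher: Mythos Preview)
Your proposal is correct and follows essentially the same approach as the paper: the same recursive scheme, the same charging of $O(m)$ local calls at cost $T(\Delta)$ each, the same per-level Gabow cost of $O(km\log n)$, the same recursion-depth bound $O(m/\Delta + S(\Delta))$ via the ``no small component survives once $L$ is empty'' invariant, and the same expected-depth argument absorbing the Monte-Carlo error into the trivial depth bound~$n$. You supply more detail than the paper does---in particular the correctness argument and the last-processed-vertex proof of the invariant are spelled out, whereas the paper leaves these implicit---but the route is the same.
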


By plugging in the local cut-detection procedure of Chechik et al.\ mentioned in Lemma~\ref{lem:local procedure edge-out component Chechik} with $ T (\Delta) = O ((2k)^{k+1} \Delta) $ and $ S (\Delta) = O (k \Delta) $, one obtains an algorithm with running time $ O ((2k)^{k + 2} m^{3/2} \log{n}) $.

For any $ k \geq \Delta $, our improved local cut-detection procedure of Theorem~\ref{thm:local procedure edge-out component} has $ T (\Delta) = O (k^2 \Delta \log n) $ and $ S (\Delta) = O (k \Delta) $.
By setting $ \Delta = \tfrac{\sqrt{m}}{\sqrt{k}} $, we arrive at running time of
\begin{equation*}
O \left( k^2 m \Delta \log n + \left( \frac{m}{\Delta} + k \Delta \right) \cdot k m \log n + n \right) = O (k^{3/2} m^{3/2} \log n + n) \, .
\end{equation*}

\begin{theorem}
There is a randomized Las Vegas algorithm for computing the maximal $k$-edge connected subgraphs of a directed graph that has expected running time $ O (k^{3/2} m^{3/2} \log n + n) $.
\end{theorem}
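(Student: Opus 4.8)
The plan is to derive the theorem by instantiating the algorithmic scheme of the preceding lemma (implicit in Chechik et al.~\cite{ChechikHILP17}) with the local cut-detection procedure of Theorem~\ref{thm:local procedure edge-out component}. Run with success parameter $ p = 1 - 1/n^3 $, that procedure detects, in time $ T(\Delta) = O(k^2 \Delta \log n) $, a $(k-1)$-edge-out component of edge size at most $\Delta$ containing a given starting vertex with probability at least $ 1 - 1/n^3 $, and whenever it returns a nonempty set, that set is a $(k-1)$-edge-out component of edge size $ S(\Delta) = O(k\Delta) $ --- precisely the two hypotheses required by the scheme's lemma (for $\Delta \ge k$, so that $\min(k,\Delta) = k$).

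Plugging these bounds into the scheme's running time
\[
O\!\left(m\,T(\Delta) + \left(\tfrac{m}{\Delta} + S(\Delta)\right) k m \log n + n\right)
= O\!\left(k^2 m \Delta \log n + \tfrac{k m^2 \log n}{\Delta} + n\right),
\]
I would balance the term $k^2 m \Delta \log n$ (which grows with $\Delta$ and also dominates the contribution $S(\Delta)\cdot km\log n = k^2 m \Delta \log n$) against the term $\tfrac{k m^2 \log n}{\Delta}$ (which shrinks with $\Delta$). These are equal when $\Delta^2 = m/k$, so I set $\Delta = \lceil \sqrt{m/k}\, \rceil$; each term then becomes $O(k^{3/2} m^{3/2}\log n)$, giving the claimed total $O(k^{3/2} m^{3/2}\log n + n)$. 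When $k$ is so large that $\sqrt{m/k} < k$ (i.e.\ $m < k^3$) the constraint $\Delta \ge k$ binds; in that regime one either takes $\Delta = k$ directly or, as is standard, first passes to a sparse $k$-edge-connectivity certificate of $G$, so that effectively $k = O(m^{1/3})$.

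It remains to justify that the resulting algorithm is Las Vegas rather than only Monte Carlo. The local procedure has one-sided error: it may fail to report an existing small $(k-1)$-edge-out component, but every set it does report is genuinely such a component, so every removal/contraction step the scheme performs is correct. Furthermore, once its worklist is exhausted, the scheme invokes Gabow's deterministic global cut algorithm~\cite{Gabow95}, which finds a cut with at most $k-1$ edges whenever one exists; hence whenever the recursion declares a subgraph $k$-edge connected it really is, and the returned family of maximal $k$-edge connected subgraphs is always correct. Only the running time is random, and it is bounded in expectation exactly as in the scheme's lemma: with probability at least $1 - O(1/n)$ all $O(m)$ local calls succeed and the recursion depth is $O(\tfrac{m}{\Delta} + S(\Delta))$, while in the complementary event the depth is at most $n$, so the expected recursion depth --- and hence the expected running time --- is as stated.

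I do not anticipate a serious obstacle, since both heavy ingredients (Theorem~\ref{thm:local procedure edge-out component} and the Chechik et al.\ scheme) are already available; the argument is essentially the choice of $\Delta$ together with the Las Vegas observation. The only mildly delicate points are keeping the probability-amplification overhead inside the $\log n$ already present in $T(\Delta)$ --- which works because boosting the failure probability to $1/n^3$ costs only an $O(\log n)$ factor --- and the behaviour for large $k$ noted above.
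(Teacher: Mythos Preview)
Your proposal is correct and follows essentially the same route as the paper: instantiate the Chechik et al.\ scheme with $T(\Delta)=O(k^2\Delta\log n)$ and $S(\Delta)=O(k\Delta)$ from Theorem~\ref{thm:local procedure edge-out component}, then balance by taking $\Delta=\sqrt{m/k}$; your Las~Vegas justification mirrors the paper's earlier discussion of the scheme. One caveat: your fallback for the regime $m<k^3$ of ``passing to a sparse $k$-edge-connectivity certificate'' does not work here --- the theorem is about \emph{directed} graphs, and moreover (as the paper itself remarks and illustrates) sparse certificates need not preserve the maximal $k$-edge connected subgraphs even in the undirected case; the paper simply leaves this boundary case implicit.
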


\subsection{Undirected Graphs}

In undirected graphs, we obtain a tighter upper bound on the running time of the algorithm for three reasons.
First, instead of parameterizing the algorithm by and edge size $ \Delta $, we parameterize it by a vertex size $ \Gamma $.
Thus, whenever the list $ L $ in the algorithm becomes empty we can be sure that there is no component of vertex size at most $ \Gamma $ in the current graph anymore.
Components that are larger can be found at most $ \tfrac{n}{\Gamma} $ times.
Second, each $(k-1)$-edge-out component detected by the local cut procedure is also a $(k-1)$-edge-in component, i.e., its number of incoming edges equals its number of outgoing edges and is at most $ k - 1 $.
Thus, the number of removed edges per successful component detection is at most $ k - 1 $.
As there are at most $ n $ such successful detections in total, the total number of executions of the local cut-detection procedure is at most $ n + (k-1) n = k n $.
Third, in undirected graphs we can run instances of both Gabow's global cut detection algorithm and our local cut detection algorithm on a sparse $k$-edge connectivity certificate~\cite{Thurimella89,NagamochiI92} of the current graph.
The sparse certificate can be computed in time $ O (m + n) $, but we do not want to explicitly perform this expensive computation each time edges are removed from the graph. 
Instead, we maintain the sparse certificate with a dynamic algorithm as outlined below.\footnote{Let us emphasize again that the sparse certificate is not w.r.t.\ to the input graph but w.r.t.\ to the graph in which all cut edges found by Gabow's algorithm and all edges incident on components detected by the local algorithm have been removed.}
As the $k$-edge certificate has size $ O (k n) $, Gabow's cut algorithm has running time $ O (k^2 n \log{n}) $ if it is run on the certificate.
Furthermore, as observed by Nanongkai et al.~\cite{NanongkaiSY19}, the $k$-edge certificate has arboricity $ k $.
Since the arboricity bounds the local density of any vertex-induced subgraph, the edge size of a component of vertex size $ \Gamma $ is $ O (\Gamma k) $ on the $k$-edge certificate.
Each instance of the local cut detection procedure therefore has running time $ T (k \Gamma) $ and if successful detects a component of size $ S (k \Gamma) $.

A sparse $k$-connectivity certificate of a graph $ G = (V, E) $ is a graph $ H = (V, F_1 \cup \ldots \cup F_k) $ such that for every $ 1 \leq i \leq k $ the graph $ (V, F_i) $ is a spanning forest of $ (V, E \setminus \bigcup_{1 \leq j \leq i- 1} F_j) $ (where in particular $ (V, F_1) $ is a spanning forest of $ G $).
The dynamic connectivity algorithm of Holm et al.~\cite{HolmLT01} can be used to dynamically maintain a spanning forest of a graph undergoing edge insertions and deletions in time $ O (\log^2 n) $ per update.
Maintaining the hierarchy of spanning forests for a $k$-edge certificate under a sequence edge removals to $ G $ takes time $ O (k m \log^2 {n}) $ by the following argument\footnote{This argument is similar to the one for maintaining the cut sparsifier in~\cite{AbrahamDKKP16}.}:
The dynamic algorithm of Holm et al.\ makes at most one change to the spanning forest per change to the input graph.
Therefore each deletion in $ G $ causes at most one update to to each of the $ k $ levels of the hierarchy.
As at most $ m $ edges can be removed from~$ G $, the total update time is $ O (k m \log^2 {n}) $.

Using otherwise the same analysis as in Section~\ref{sec:subgraphs directed}, we arrive at the following running time:
\begin{equation*}
O \left(k n \cdot T (k \Gamma) + \left( \frac{n}{\Gamma} + S(k \Gamma) \right) \cdot k^2 n \log n + k m \log^2 n \right) \, .
\end{equation*}
The analysis of this algorithmic scheme can be summarized in the following lemma.
\begin{lemma}
Suppose there is an algorithm that, given constant-time query access to an undirected graph, returns, for a fixed $ k \geq 1 $, any integer~$ \Delta \geq k $ and any starting vertex~$ s $, in time $ T (\Delta) $ a set of vertices $ U $ such that
(1) if there is a $ (k-1) $-edge-out component containing~$ s $ of edge size at most~$ \Delta $, then $ U \supseteq \{ s \} $ with probability at least $ 1 - \tfrac{1}{n^3} $ and
(2) if $ U \supseteq \{ s \} $, then $ U $ is a $ (k-1) $-edge-out component of edge size at most $ S (\Delta) $.

Then there is a randomized Las Vegas algorithm for computing the maximal $k$-edge connected subgraphs of an undirected graph with expected running time $ O (k n \cdot T (k \Gamma) + (\tfrac{n}{\Gamma} + S(k \Gamma)) \cdot k^2 n \log n + k m \log^2 n) $ for every $ 1 \leq \Gamma \leq n $.
\end{lemma}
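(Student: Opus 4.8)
The plan is to turn the scheme sketched above into a precise recursive procedure and to bound its expected running time by the three terms in the statement. After a linear-time preprocessing that restricts to one connected component of $G$ at a time (so that $n=O(m)$), I maintain a working graph $G$, a candidate list $L$ of starting vertices (initially $V$), and a dynamically maintained sparse $k$-edge connectivity certificate $H$ of $G$. While $G$ has more than $S(k\Gamma)$ edges and $L$ is nonempty, I pop a vertex $s$ from $L$ and run the hypothesized local procedure on $H$ with parameter $k\Gamma$; if it returns a set $C$, I remove $C$ and its incident edges from $G$ (updating $H$), add the outside endpoints of the removed edges to $L$, and recurse on $G[C]$ with the baseline algorithm, otherwise I drop $s$. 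Once $L$ is empty, I run Gabow's algorithm~\cite{Gabow95} on $H$: if it finds a cut of size at most $k-1$, I remove the cut edges, add their endpoints to a fresh list $L'$, and recurse on each connected component of the result with $L'$; if it finds none, I output $G$ as a maximal $k$-edge connected subgraph.

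Two observations justify working on $H$ rather than on $G$. First, a sparse $k$-edge connectivity certificate preserves all cuts of size less than $k$, so a cut of size at most $k-1$ in $H$ is a cut of the same size in $G$, Gabow's algorithm on $H$ behaves correctly, and a $(k-1)$-edge-out component of $H$ is also one of $G$. Second, as observed in~\cite{NanongkaiSY19}, the certificate has arboricity at most $k$, so a vertex set of size at most $\Gamma$ spans at most $k\Gamma$ edges in $H$; hence the parameter $k\Gamma$ is large enough for property~(1) to guarantee that every $(k-1)$-edge-out component of vertex size at most $\Gamma$ is detected, which is exactly what the vertex-size parameterization needs. Correctness of the scheme itself is the standard argument: deleting the edges of a cut of size at most $k-1$ destroys no maximal $k$-edge connected subgraph, and when $L$ is empty and Gabow reports nothing, $G$ has no cut of size at most $k-1$ and is genuinely $k$-edge connected; by property~(2) the local procedure never deletes edges spuriously and Gabow is exact, so the algorithm always outputs the correct family and is Las Vegas.

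For the running time I would verify three bounds. (i) The local procedure is invoked $O(kn)$ times: at most $n$ starting vertices initially, plus the outside endpoints of the $O(kn)$ edges ever removed --- each successful detection removes at most $k-1$ crossing edges (using that an undirected $(k-1)$-edge-out component is also $(k-1)$-edge-in), there are at most $n$ such detections, and Gabow removes at most $k-1$ edges in each of the $O(n)$ calls in the whole recursion forest --- each invocation costing $T(k\Gamma)$ on $H$, for a total of $O(kn\cdot T(k\Gamma))$. (ii) On a subgraph with $n'$ vertices the certificate has $O(kn')$ edges, so Gabow and the connected-component computation cost $O(k^2 n'\log n)$, and since the subgraphs at a fixed recursion level are vertex-disjoint one level costs $O(k^2 n\log n)$; arguing as in the directed analysis of Section~\ref{sec:subgraphs directed} but tracking vertex count, and using that when $L$ is empty every surviving cut of size at most $k-1$ has both sides of vertex size more than $\Gamma$, each recursive call loses at least $\Gamma$ vertices until $G$ has at most $S(k\Gamma)$ edges, after which the baseline adds $O(S(k\Gamma))$ further levels, so a union bound over the $O(kn)$ one-sided invocations gives recursion depth $O(\tfrac{n}{\Gamma}+S(k\Gamma))$ with probability $1-O(1/n)$, and the worst-case depth $n$ on the bad event makes the expected depth $O(\tfrac{n}{\Gamma}+S(k\Gamma))$, hence $O((\tfrac{n}{\Gamma}+S(k\Gamma))\cdot k^2 n\log n)$ overall. (iii) I maintain $H$ by keeping, for each of its $k$ forests, the dynamic spanning-forest structure of Holm et al.~\cite{HolmLT01}: each edge deletion in $G$ triggers at most one replacement per forest at $O(\log^2 n)$ amortized cost, and at most $m$ edges are ever deleted, for $O(km\log^2 n)$ in total. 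Summing (i)--(iii) yields the claimed bound.

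The step I expect to be the main obstacle is establishing the recursion-depth bound under the joint use of the vertex-size parameterization and the certificate: I must argue that after $L$ empties no $(k-1)$-edge-out component of vertex size at most $\Gamma$ survives (this needs the arboricity bound so that $k\Gamma$ is the right parameter, together with the fact that every vertex is re-examined whenever an incident edge is removed), and hence that every cut Gabow can still find is balanced in vertex count, so that the recursion shrinks the vertex count geometrically down to the base case. The remaining work --- the list-insertion count, the per-level summation, and the dynamic-certificate accounting --- is routine and parallels Section~\ref{sec:subgraphs directed}.
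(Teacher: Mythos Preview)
Your proposal is correct and follows essentially the same approach as the paper: the three ingredients you isolate---vertex-size parameterization with the arboricity bound on the certificate, the undirected symmetry giving at most $k-1$ crossing edges per successful detection (hence $O(kn)$ invocations), and dynamic maintenance of the sparse certificate via Holm et al.---are exactly the paper's, and your running-time accounting mirrors the paper's per-level summation and recursion-depth argument. The point you flag as the main obstacle (that emptying $L$ guarantees no small-vertex-size side survives, because endpoints of every removed edge re-enter $L$) is indeed the crux, and your sketch of it is sound.
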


For any $ \Gamma \geq 1 $, our local cut-detection procedure of Theorem~\ref{thm:local procedure edge-out component} has $ T (k \Gamma) = O (k^3 \Gamma \log n) $ and $ S (k \Gamma) = O (k^2 \Gamma) $.
Thus, the running time is
\begin{equation*}
O \left(k^4 n \Gamma \log n + \left( \frac{n}{\Gamma} + k^2 \Gamma \right) \cdot k^2 n \log n + k m \log^2 n \right) \, .
\end{equation*}
By setting $ \Gamma = \tfrac{\sqrt{n}}{k} $, we obtain a running time of $ O (k^4 n^{3/2} \log{n} + k m \log^2 n) $.

\begin{theorem}
There is a randomized Las Vegas algorithm for computing the maximal $k$-edge connected subgraphs of an undirected graph that has expected running time $ O (k^4 n^{3/2} \log{n} + k m \log^2 n) $.
\end{theorem}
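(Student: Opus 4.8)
The plan is to obtain this theorem by instantiating the algorithmic scheme of the preceding Lemma with the local cut-detection procedure of Theorem~\ref{thm:local procedure edge-out component}. First I would verify that the Lemma's hypotheses are met: running the procedure of Theorem~\ref{thm:local procedure edge-out component} with connectivity parameter $k-1$, edge-size parameter $\Delta$, and probability parameter $p = 1 - 1/n^3$ yields, under constant-time query access, running time $O((k^2+1)(\Delta+k)\log(1/(1-p))) = O(k^2(\Delta+k)\log n)$; it detects an existing $(k-1)$-edge-out component of edge size at most $\Delta$ containing $s$ with probability at least $1-1/n^3$, and it always returns either a $(k-1)$-edge-out component containing $s$ of edge size $O(k(\Delta+k))$ or a set that omits $s$.

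Next I would reconcile the two budgets used by the scheme. The Lemma is stated with an edge-size budget $\Delta$, whereas the undirected analysis is parameterized by a vertex-size budget $\Gamma$, and the translation relies on the fact that the local and global cut procedures are run not on the current graph but on a dynamically maintained sparse $k$-edge connectivity certificate. Since such a certificate has arboricity $k$, a set of $\Gamma$ vertices induces only $O(k\Gamma)$ edges there, so searching for a component of vertex size at most $\Gamma$ amounts to calling the procedure with $\Delta = k\Gamma$; substituting $\Delta = k\Gamma$ (and using $k,\Gamma \ge 1$) gives $T(k\Gamma) = O(k^3\Gamma\log n)$ and component-size bound $S(k\Gamma) = O(k^2\Gamma)$. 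I would also confirm that maintaining this certificate under edge deletions with the dynamic forest data structure of Holm et al.\ contributes only the additive $km\log^2 n$ term and does not disturb the correctness of the local searches. Plugging $T$ and $S$ into the running time furnished by the Lemma gives
\[
O\left(kn\cdot T(k\Gamma) + \left(\tfrac{n}{\Gamma}+S(k\Gamma)\right)k^2 n\log n + km\log^2 n\right)
= O\left(k^4 n\Gamma\log n + \left(\tfrac{n}{\Gamma}+k^2\Gamma\right)k^2 n\log n + km\log^2 n\right),
\]
and then I would optimize over $\Gamma$: balancing the $\Gamma$-increasing terms against $\tfrac{k^2n^2}{\Gamma}\log n$ suggests $\Gamma = \sqrt{n}/k$ when $k \le \sqrt n$, making every non-$m$ term $O(k^3 n^{3/2}\log n)$; taking $\Gamma = 1$ when $k > \sqrt n$ keeps all terms within $O(k^4 n^{3/2}\log n)$. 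Either way the claimed bound $O(k^4 n^{3/2}\log n + km\log^2 n)$ follows.

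Finally, for the Las Vegas guarantee I would argue that randomness affects only efficiency, never correctness: the local procedure is one-sided, so any component it reports is genuine and can be safely split off, while Gabow's global cut algorithm and the certificate maintenance are deterministic. A failed local detection merely postpones shrinking the graph, costing extra recursion depth; since the scheme makes $O(m) = O(n^2)$ local calls, each succeeding with probability at least $1-1/n^3$, with probability $1-O(1/n)$ all of them succeed, and together with the worst-case recursion depth of $n$ this keeps the expected recursion depth at $O(\tfrac{n}{\Gamma}+S(k\Gamma))$, exactly the quantity the running-time analysis relies on.

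I expect the middle step to be the main obstacle, as it is where the two different parameterizations -- edge size versus vertex size -- must be matched through the arboricity bound of the sparse certificate, and where one has to ensure that running the local and global routines on the certificate instead of on $G$ is both sound (it preserves the maximal $k$-edge connected subgraphs) and cheap (the dynamic maintenance is the only new cost, bounded via the fact that each edge deletion induces at most one update per level of the $k$-level forest hierarchy). The rest is a mechanical substitution into the Lemma's bound and a single-variable optimization of $\Gamma$.
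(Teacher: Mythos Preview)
Your plan is correct and follows essentially the same route as the paper: instantiate the undirected lemma with the procedure of Theorem~\ref{thm:local procedure edge-out component} (so $T(k\Gamma)=O(k^3\Gamma\log n)$ and $S(k\Gamma)=O(k^2\Gamma)$), then optimize $\Gamma$. Your case split on $k\lessgtr\sqrt n$ is in fact a bit more careful than the paper, which simply sets $\Gamma=\sqrt n/k$.

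One point needs correction. You write that soundness of running the local and global routines on the sparse certificate amounts to checking that the certificate ``preserves the maximal $k$-edge connected subgraphs.'' It does not, and the paper gives an explicit counterexample (the figure in the remark at the end of the section). The reason the scheme is nevertheless sound is different: a sparse $k$-edge connectivity certificate $H$ of the current graph $G$ has the property that every cut of size $\ge k$ in $G$ still has size $\ge k$ in $H$, while $H\subseteq G$ ensures no cut grows. Hence any $(k{-}1)$-edge-out component the local procedure finds in $H$, and any small cut Gabow's algorithm finds in $H$, is a genuine $(k{-}1)$-cut in $G$; the algorithm then removes those cut edges from $G$ (updating the certificate dynamically) and recurses. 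So the certificate is used only to \emph{find} small cuts cheaply, never to read off the final subgraphs directly. With this corrected justification, the rest of your argument goes through unchanged.
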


\paragraph{Remark on Sparse Certificates}

Our algorithm tailored to undirected graphs relies on sparse certificates.
It would be tempting to run our algorithm for directed graphs directly on a sparse certificate to achieve running-time savings by performing a maximum amount of sparsification.
However this approach does not work as Figure~\ref{fig:sparse certificate} shows.
Sparse certificates preserve the $k$-edge connectivity, but not necessarily the maximal $k$-edge connected subgraphs.

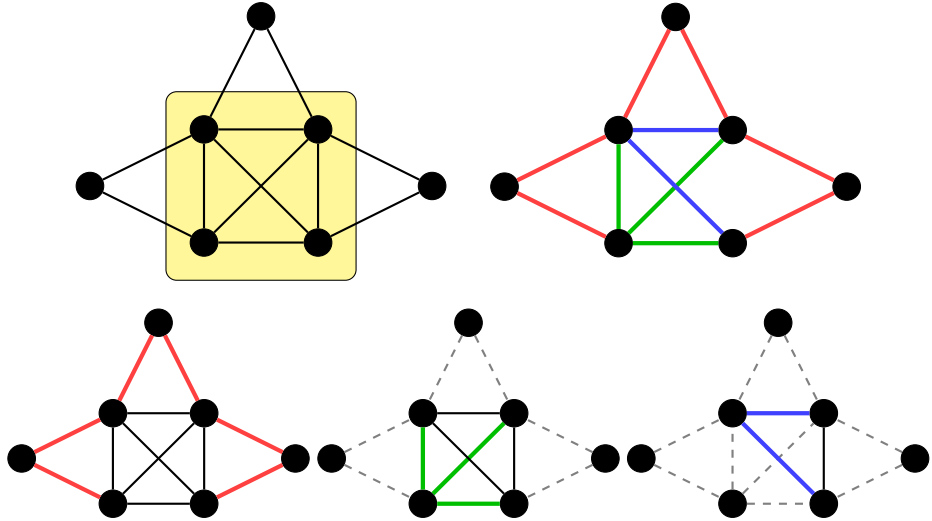
\begin{figure}[htbp!]
    \centering
    \begin{tikzpicture}
	\tikzset{default_node/.style={circle,draw,fill=black,minimum size=1pt}}
	\tikzset{edge/.style={thick,black}}
	
	\draw[rounded corners,fill=yellow!50] (-0.5,-2) rectangle (2,0.5);
	
	\node[default_node] (1) at (0,-1.5) {};
	\node[default_node] (2) at (0,0) {};
	\node[default_node] (3) at (1.5,0) {};
	\node[default_node] (4) at (1.5,-1.5) {};
	
	\node[default_node] (5) at (-1.5,-0.75) {};
	\node[default_node] (6) at (0.75,1.5) {};
	\node[default_node] (7) at (3.0,-0.75) {};

	\draw[edge] (1) edge (2);
	\draw[edge] (1) edge (3);
	\draw[edge] (1) edge (4);
	\draw[edge] (2) edge (3);
	\draw[edge] (2) edge (4);
	\draw[edge] (3) edge (4);
	
	\draw[edge] (5) edge (1);
	\draw[edge] (5) edge (2);
	\draw[edge] (6) edge (2);
	\draw[edge] (6) edge (3);
	\draw[edge] (7) edge (3);
	\draw[edge] (7) edge (4);
    \end{tikzpicture}
    \hspace{1em}
    \begin{tikzpicture}
	\tikzset{default_node/.style={circle,draw,fill=black,minimum size=1pt}}
	\tikzset{edge/.style={thick,black}}
	\tikzset{spanning_tree_edge1/.style={ultra thick,red!75}}
	\tikzset{spanning_tree_edge2/.style={ultra thick,green!75!black}}
	\tikzset{spanning_tree_edge3/.style={ultra thick,blue!75}}

	\draw[rounded corners,draw=none] (-0.5,-2) rectangle (2,0.5);

	\node[default_node] (1) at (0,-1.5) {};
	\node[default_node] (2) at (0,0) {};
	\node[default_node] (3) at (1.5,0) {};
	\node[default_node] (4) at (1.5,-1.5) {};
	
	\node[default_node] (5) at (-1.5,-0.75) {};
	\node[default_node] (6) at (0.75,1.5) {};
	\node[default_node] (7) at (3.0,-0.75) {};

	\draw[spanning_tree_edge2] (1) edge (2);
	\draw[spanning_tree_edge2] (1) edge (3);
	\draw[spanning_tree_edge2] (1) edge (4);
	\draw[spanning_tree_edge3] (2) edge (3);
	\draw[spanning_tree_edge3] (2) edge (4);
	
	\draw[spanning_tree_edge1] (5) edge (1);
	\draw[spanning_tree_edge1] (5) edge (2);
	\draw[spanning_tree_edge1] (6) edge (2);
	\draw[spanning_tree_edge1] (6) edge (3);
	\draw[spanning_tree_edge1] (7) edge (3);
	\draw[spanning_tree_edge1] (7) edge (4);
    \end{tikzpicture}
    
    \vspace{2ex}
    
    \begin{tikzpicture}[scale=0.8]
	\tikzset{default_node/.style={circle,draw,fill=black,minimum size=1pt}}
	\tikzset{edge/.style={thick,black}}
	\tikzset{spanning_tree_edge1/.style={ultra thick,red!75}}
	\node[default_node] (1) at (0,-1.5) {};
	\node[default_node] (2) at (0,0) {};
	\node[default_node] (3) at (1.5,0) {};
	\node[default_node] (4) at (1.5,-1.5) {};
	
	\node[default_node] (5) at (-1.5,-0.75) {};
	\node[default_node] (6) at (0.75,1.5) {};
	\node[default_node] (7) at (3.0,-0.75) {};

	\draw[edge] (1) edge (2);
	\draw[edge] (1) edge (3);
	\draw[edge] (1) edge (4);
	\draw[edge] (2) edge (3);
	\draw[edge] (2) edge (4);
	\draw[edge] (3) edge (4);
	
	\draw[spanning_tree_edge1] (5) edge (1);
	\draw[spanning_tree_edge1] (5) edge (2);
	\draw[spanning_tree_edge1] (6) edge (2);
	\draw[spanning_tree_edge1] (6) edge (3);
	\draw[spanning_tree_edge1] (7) edge (3);
	\draw[spanning_tree_edge1] (7) edge (4);
    \end{tikzpicture}
    \begin{tikzpicture}[scale=0.8]
	\tikzset{default_node/.style={circle,draw,fill=black,minimum size=1pt}}
	\tikzset{edge/.style={thick,black}}
	\tikzset{spanning_tree_edge1/.style={thick,dashed,black!50}}
	\tikzset{spanning_tree_edge2/.style={ultra thick,green!75!black}}
	\node[default_node] (1) at (0,-1.5) {};
	\node[default_node] (2) at (0,0) {};
	\node[default_node] (3) at (1.5,0) {};
	\node[default_node] (4) at (1.5,-1.5) {};
	
	\node[default_node] (5) at (-1.5,-0.75) {};
	\node[default_node] (6) at (0.75,1.5) {};
	\node[default_node] (7) at (3.0,-0.75) {};

	\draw[spanning_tree_edge2] (1) edge (2);
	\draw[spanning_tree_edge2] (1) edge (3);
	\draw[spanning_tree_edge2] (1) edge (4);
	\draw[edge] (2) edge (3);
	\draw[edge] (2) edge (4);
	\draw[edge] (3) edge (4);
	
	\draw[spanning_tree_edge1] (5) edge (1);
	\draw[spanning_tree_edge1] (5) edge (2);
	\draw[spanning_tree_edge1] (6) edge (2);
	\draw[spanning_tree_edge1] (6) edge (3);
	\draw[spanning_tree_edge1] (7) edge (3);
	\draw[spanning_tree_edge1] (7) edge (4);
    \end{tikzpicture}
    \begin{tikzpicture}[scale=0.8]
	\tikzset{default_node/.style={circle,draw,fill=black,minimum size=1pt}}
	\tikzset{edge/.style={thick,black}}
	\tikzset{spanning_tree_edge1/.style={thick,dashed,black!50}}
	\tikzset{spanning_tree_edge2/.style={thick,dashed,black!50}}
	\tikzset{spanning_tree_edge3/.style={ultra thick,blue!75}}

	\node[default_node] (1) at (0,-1.5) {};
	\node[default_node] (2) at (0,0) {};
	\node[default_node] (3) at (1.5,0) {};
	\node[default_node] (4) at (1.5,-1.5) {};
	
	\node[default_node] (5) at (-1.5,-0.75) {};
	\node[default_node] (6) at (0.75,1.5) {};
	\node[default_node] (7) at (3.0,-0.75) {};

	\draw[spanning_tree_edge2] (1) edge (2);
	\draw[spanning_tree_edge2] (1) edge (3);
	\draw[spanning_tree_edge2] (1) edge (4);
	\draw[spanning_tree_edge3] (2) edge (3);
	\draw[spanning_tree_edge3] (2) edge (4);
	\draw[edge] (3) edge (4);
	
	\draw[spanning_tree_edge1] (5) edge (1);
	\draw[spanning_tree_edge1] (5) edge (2);
	\draw[spanning_tree_edge1] (6) edge (2);
	\draw[spanning_tree_edge1] (6) edge (3);
	\draw[spanning_tree_edge1] (7) edge (3);
	\draw[spanning_tree_edge1] (7) edge (4);
    \end{tikzpicture}

    \caption{The top row shows an example of an undirected graph on the left with maximal $3$-edge component consisting of four vertices (marked by the yellow box) that is not preserved in its sparse certificate on the right. The bottom row shows the $ 3 $~spanning forests found for the sparse certificate.}
    \label{fig:sparse certificate}
\end{figure}

\section{Testing Higher Connectivity}\label{sec:property testing}

We now explain how our new approach for locally detecting edge-out and vertex-out components leads to improved property testing algorithms for $k$-edge connectivity and $k$-vertex connectivity.

\subsection{Testing $k$-Edge Connectivity}

We build upon the property testing algorithm of Orenstein and Ron~\cite{OrensteinR11}, which we will briefly review in the following.
Their tester exploits the following structural property.

\begin{lemma}[\cite{OrensteinR11}]\label{lem:number of components}
A directed graph that is $ \epsilon $-far from being $k$-edge connected has at least $ \tfrac{\epsilon m}{2 k} $ subsets of vertices that are proper $(k-1)$-edge-out components or proper $(k-1)$-edge-in components of vertex size at most $ \frac{2 k}{\epsilon \davg} $.
\end{lemma}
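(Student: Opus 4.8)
The plan is to recast the statement in terms of cuts and then to reduce it, via edge-connectivity augmentation, to a clean counting argument. Recall that a directed graph is $k$-edge connected if and only if every proper non-empty $S \subsetneq V$ satisfies $|E(S, V \setminus S)| \ge k$; thus the proper $(k-1)$-edge-out components are exactly the non-empty $S \subsetneq V$ with $|E(S, V \setminus S)| \le k-1$, and the proper $(k-1)$-edge-in components are the non-empty $S \subsetneq V$ with $|E(V \setminus S, S)| \le k-1$. The target quantities are suggestive: $\tfrac{2k}{\epsilon \davg} \cdot \tfrac{\epsilon m}{2k} = \tfrac{m}{\davg} = n$, so I expect to produce $\tfrac{\epsilon m}{2k}$ such components that are moreover pairwise \emph{disjoint}, the vertex-size bound then being forced by the trivial observation that pairwise disjoint vertex sets have total size at most $n$.

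First I would pass from ``$\epsilon$-far'' to a lower bound on the number of edge additions needed to repair $G$. If $G$ is $\epsilon$-far from being $k$-edge connected and $F$ is a minimum-size set of new edges whose addition makes $G$ $k$-edge connected, then $G + F$ is a $k$-edge connected graph obtained from $G$ by $|F|$ modifications, so $|F| > \epsilon n \davg = \epsilon m$ (using $n \davg = m$ in the directed unbounded-degree model). Next I would invoke Frank's min-max theorem for directed edge-connectivity augmentation, which states that this minimum number equals the larger of the following two quantities: the maximum of $\sum_i \max(0,\, k - |E(S_i, V \setminus S_i)|)$ over all families of pairwise disjoint non-empty $S_i \subsetneq V$, and the analogous quantity with $|E(V \setminus S_i, S_i)|$ in place of $|E(S_i, V \setminus S_i)|$. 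Hence one of these two quantities exceeds $\epsilon m$; assume it is the first (the second is handled symmetrically and yields $(k-1)$-edge-in components). Taking a witnessing family and discarding any $S_i$ with $|E(S_i, V \setminus S_i)| \ge k$ (which contribute nothing), I obtain pairwise disjoint proper $(k-1)$-edge-out components $S_1, \dots, S_p$ with $k p \ge \sum_i \bigl(k - |E(S_i, V \setminus S_i)|\bigr) > \epsilon m$, so $p > \tfrac{\epsilon m}{k}$.

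Finally I would run the averaging step. Since the $S_i$ are pairwise disjoint, $\sum_i |S_i| \le n = m / \davg$, so fewer than $\tfrac{m/\davg}{2k/(\epsilon \davg)} = \tfrac{\epsilon m}{2k}$ of them can have vertex size more than $\tfrac{2k}{\epsilon \davg}$. Therefore more than $p - \tfrac{\epsilon m}{2k} > \tfrac{\epsilon m}{k} - \tfrac{\epsilon m}{2k} = \tfrac{\epsilon m}{2k}$ of the $S_i$ are proper $(k-1)$-edge-out components of vertex size at most $\tfrac{2k}{\epsilon \davg}$, and being pairwise disjoint they are in particular distinct; in the symmetric case one instead gets that many distinct proper $(k-1)$-edge-in components. (When $\tfrac{2k}{\epsilon \davg} \ge n$ the size constraint is vacuous and $\tfrac{\epsilon m}{2k} < 1$, so the claim reduces to the existence of a single proper $(k-1)$-edge-out component, which holds because an $\epsilon$-far graph is not $k$-edge connected.)

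The main obstacle is exactly the structural input that the repair cost is controlled by a family of pairwise \emph{disjoint} deficient sets rather than merely a nested chain of them: disjointness is what, together with $\sum_i |S_i| \le n$, forces enough of the components to be small, whereas a chain of deficient sets could consist entirely of large sets and give nothing. If one prefers not to cite Frank's theorem as a black box, this step can be re-derived from first principles via submodularity of the cut function and a standard splitting-off / uncrossing argument that converts an arbitrary optimal augmentation into a certifying disjoint subpartition; everything after that is routine arithmetic.
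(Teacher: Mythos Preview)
The paper does not give its own proof of this lemma; it is quoted from Orenstein and Ron~\cite{OrensteinR11} and used as a black box. Your argument is correct: invoking Frank's min--max theorem for directed edge-connectivity augmentation immediately produces a subpartition into pairwise disjoint deficient sets whose total deficiency exceeds $\epsilon m$, hence more than $\epsilon m / k$ disjoint proper $(k{-}1)$-edge-out (or, in the symmetric case, $(k{-}1)$-edge-in) components; the averaging step $\sum_i |S_i| \le n = m/\davg$ then forces more than $\epsilon m /(2k)$ of them to have vertex size at most $2k/(\epsilon\davg)$, exactly as you wrote.

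Compared to the original source, your route is different in flavour. Orenstein and Ron (following the template of Goldreich and Ron for the undirected case) establish the disjoint-family bound by a self-contained combinatorial argument---essentially exhibiting an explicit augmenting edge set and charging each added edge to a deficient set in a maximal disjoint family---rather than by appealing to Frank's theorem. Your approach is shorter and cleaner if one is willing to cite Frank as a black box, and it makes transparent \emph{why} disjointness (as opposed to a laminar or chain family) is the right structure here; the Orenstein--Ron argument buys self-containment and avoids importing a nontrivial structural result from connectivity augmentation. Both land on the same counting step at the end.
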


The main idea now is that if the input graph is $\epsilon$-far from being $k$-edge connected, random sampling of a sufficient number of vertices will ``hit'' one of these small components with constant probability.
By running a local search procedure, the algorithm can detect for each sampled vertex whether it is contained in a small component.
The algorithm thus has the following steps:
\begin{enumerate}
    \item Sample $ \Theta (\tfrac{k}{\epsilon \davg}) $ vertices uniformly at random.
    \item For each sampled vertex $ s $, check if $ s $ is contained in a proper $ (k-1) $-edge-out component or a in a proper $ (k-1) $-edge-out component of vertex size at most~$ \tfrac{2 k}{\epsilon \davg}$.
    \item If this is the case for any sampled vertex, then \textsc{Reject} and \textsc{Accept} otherwise.
\end{enumerate}
To implement Step~2 of their algorithm, Orenstein and Ron design a procedure that performs $ O ( ( \tfrac{c k}{\epsilon \davg} )^{k+1} ) $ queries to the graph (for some constant $ c $).
As this procedure is run for each sampled vertex, the overall number of queries of the property testing algorithm is $ O ( ( \tfrac{c k}{\epsilon \davg} )^{k+2} ) $.
Using a standard technique, Orenstein and Ron reduce this to $ O ( ( \tfrac{c k}{\epsilon \davg} )^{k+1} \log (\frac{k}{\epsilon \davg}) ) $.

We observe that the requirements of the tester in Step~2 can be relaxed as follows: If a sampled vertex $ s $ is contained in \emph{any} proper $ (k-1) $-edge-out component or $ (k-1) $-edge-in component, then the tester may reject as the graph is not $k$-edge connected if such a component exists; it is not necessary for this component to have vertex size at most~$ \tfrac{2 k}{\epsilon \davg} $.
The modified algorithm works as follows:
\begin{enumerate}
    \item Sample $ \Theta (\tfrac{k}{\epsilon \davg}) $ vertices uniformly at random.
    \item For each sampled vertex $ s $, run a decision procedure that has the following guarantees:
    \begin{enumerate}
        \item If $ s $ is contained in a proper $ (k-1) $-edge-out component of vertex size at most~$ \tfrac{2 k}{\epsilon \davg} $ or a in a $ (k-1) $-edge-in component of vertex size at most~$ \tfrac{2 k}{\epsilon \davg} $, then the procedure returns \textsc{Yes}.\label{itm:first case}
        \item If the procedure returns \textsc{Yes}, then $ s $ is contained in a proper $ (k-1) $-edge-out component (of arbitrary size).
    \end{enumerate}
    \item If the decision procedure returns \textsc{Yes} for any sampled vertex, then \textsc{Reject} and \textsc{Accept} otherwise.
\end{enumerate}

It is sufficient to have a local procedure for $(k-1)$-edge-out components as the corresponding algorithm for $(k-1)$-edge in components can be obtained by running the former procedure on the reverse graph.\footnote{Recall that in the property testing model considered in this paper both outgoing and incoming edges can be queried.}
Furthermore, to obtain a tester with false-reject probability at most $ \tfrac{1}{3} $ it suffices to correctly identify a $ (k-1) $-edge-out component or a $ (k-1) $-edge-in component of vertex size at most~$ \tfrac{2 k}{\epsilon \davg} $ containing a given starting vertex in case~(a) with probability at least~$ \tfrac{5}{6} $, similar to the tester of Goldreich and Ron~\cite{GoldreichR02} for undirected graphs.
The correctness of the proposed algorithm is immediate and its guarantees can be summarized as follows.
\begin{lemma}\label{lem:property testing with black box}
Suppose we are given a decision procedure that, given a starting vertex~$ s $ and query access to a directed graph~$ G $, performs at most $ Q $ queries to~$ G $ and has the following guarantees:
\begin{enumerate}
    \item[(a)] If $ s $ is contained in a proper $ (k-1) $-edge-out component of vertex size at most~$ \tfrac{2 k}{\epsilon \davg} $, then the procedure returns \textsc{Yes} with probability at least~$ \tfrac{5}{6} $.
    \item[(b)] If the procedure returns \textsc{Yes}, then $ s $ is contained in a proper $ (k-1) $-edge-out component (of arbitrary size).
\end{enumerate}
Then there is a one-sided property testing algorithm for $k$-edge connectivity with false-reject probability at most $ \tfrac{1}{3} $ performing $ O (\frac{k}{\epsilon \davg} \cdot Q) $ many queries.
\end{lemma}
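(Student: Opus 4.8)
The plan is to show that the sample-and-locally-search algorithm described above meets the three requirements of a one-sided $k$-edge connectivity tester with the stated query bound: (i)~perfect completeness, (ii)~rejection probability at least $\tfrac{2}{3}$ on inputs that are $\epsilon$-far, and (iii)~query complexity $O(\tfrac{k}{\epsilon\davg}\cdot Q)$. Concretely, the tester samples $t=\Theta(\tfrac{k}{\epsilon\davg})$ vertices uniformly at random and, for each sampled vertex $s$, invokes the given decision procedure both on $G$ and on the reverse graph $G'$, rejecting if and only if some invocation returns \textsc{Yes}. Two elementary observations drive the reduction. First, a subset $C\subseteq V$ is a proper $(k-1)$-edge-in component of $G$ of a given vertex size if and only if it is a proper $(k-1)$-edge-out component of $G'$ of the same vertex size, so running the procedure on $G'$ handles the in-component case with exactly guarantees (a) and (b); since the query model exposes both incoming and outgoing edges, each query to $G'$ is simulated by a single query to $G$ with the two adjacency lists swapped. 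Second, $G$ is $k$-edge connected if and only if it has no proper $(k-1)$-edge-out component, and $G$ is $k$-edge connected if and only if $G'$ is.

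For completeness, assume $G$ is $k$-edge connected. Then neither $G$ nor $G'$ contains a proper $(k-1)$-edge-out component, so by guarantee (b) every invocation of the decision procedure returns \textsc{No}; hence the tester always accepts, and it is therefore one-sided.

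For soundness, assume $G$ is $\epsilon$-far from being $k$-edge connected. By Lemma~\ref{lem:number of components} there are at least $\tfrac{\epsilon m}{2k}$ proper $(k-1)$-edge-out or $(k-1)$-edge-in components of vertex size at most $\tfrac{2k}{\epsilon\davg}$; these can be chosen pairwise disjoint, so together they contain $\Omega(\tfrac{\epsilon m}{k})$ of the $n$ vertices, and since $m=\davg\,n$ a uniformly random vertex lies in one of these small components with probability $\Omega(\tfrac{\epsilon\davg}{k})$. Fixing the constant hidden in $t=\Theta(\tfrac{k}{\epsilon\davg})$ appropriately, the event $\mathcal{E}$ that at least one sampled vertex lies in such a small component has probability at least $\tfrac{11}{12}$. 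Now $\mathcal{E}$ is determined by the sampled vertices alone and is thus independent of the internal randomness of the decision procedure; conditioning on $\mathcal{E}$ and fixing a sampled vertex $s^{\star}$ lying in a small component, guarantee (a) --- applied to $G$ if that component is a $(k-1)$-edge-out component and to $G'$ if it is a $(k-1)$-edge-in component --- ensures that the corresponding invocation returns \textsc{Yes} with probability at least $\tfrac{5}{6}$. Multiplying, the tester rejects with probability at least $\tfrac{11}{12}\cdot\tfrac{5}{6}>\tfrac{2}{3}$. The query bound is immediate: $\Theta(\tfrac{k}{\epsilon\davg})$ samples, two invocations per sample, and at most $Q$ queries per invocation.

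The one place that needs genuine care is the covering step in the soundness argument: converting the \emph{count} of small components supplied by Lemma~\ref{lem:number of components} into a lower bound on the \emph{fraction of vertices} that lie in one of them, for which one uses that these components may be taken pairwise disjoint --- this is where the combinatorics of small directed cuts enters, and it is already part of the machinery behind Lemma~\ref{lem:number of components}. Everything else is the standard sample-and-locally-decide skeleton for one-sided testers, and the probability bookkeeping merely has to combine the $\tfrac{5}{6}$ success guarantee of the procedure with the sampling success probability so that their product stays above $\tfrac{2}{3}$.
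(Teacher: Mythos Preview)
Your proof is correct and follows exactly the approach the paper intends: the paper itself regards the correctness as immediate and does not spell out a proof, and you fill in the details of the same sample-and-locally-decide algorithm using Lemma~\ref{lem:number of components} in the intended way. Your flagging of the disjointness of the small components as the one genuinely substantive step (and attributing it to the Orenstein--Ron machinery behind Lemma~\ref{lem:number of components}) is apt, since that is indeed what converts the component count into a vertex-fraction lower bound.
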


Thus, the question of designing an efficient tester, can be reduced to finding an efficient local search procedure with properties (a) and (b).
Such a procedure readily follows from our local algorithm of Theorem~\ref{thm:local procedure edge-out component}; to plug this algorithm into Lemma~\ref{lem:property testing with black box}, observe that any component of vertex size at most $ \tfrac{2 k}{\epsilon \davg} $ has edge size at most $ (\tfrac{2 k}{\epsilon \davg})^2 $.
For $ \Delta := (\tfrac{2 k}{\epsilon \davg})^2 $, the algorithm performs $ O (k^2 (\Delta + k)) = O (k^2 ((\tfrac{k}{\epsilon \davg})^2 + k )) $ queries and returns a set of vertices of edge size at most $ 2k (\Delta + k) = O (k ((\tfrac{k}{\epsilon \davg})^2 + k )) $.
We may assume without loss of generality that $ k > \frac{\epsilon \davg}{2} $ as otherwise any $ \epsilon $-far graph contains $ n $ singleton components by Lemma~\ref{lem:number of components} and we can thus reject such a graph in a preprocessing step by querying any arbitrary vertex and checking if its degree is less than $ k $.
We therefore have $ \Delta \geq k $ and thus the bound on the number of queries is $ O (k^2 \Delta) = O (\tfrac{k^4}{(\epsilon \davg)^2}) $ and the bound on the edge size is $ O (k \Delta) = O (\tfrac{k^3}{(\epsilon \davg)^2}) $.

We now do the following:
If $ m \leq 2k (\Delta + k) $, then we query the whole graph (performing $ 2k (\Delta + k) $ queries) and directly check if $ s $ is contained in a proper $ (k-1) $-edge-out component of edge size at most~$ \Delta $ using any centralized algorithm.\footnote{Recall that in the property testing model considered in this paper the tester knows $ m $, the number of edges.}
Otherwise, we run the algorithm of Theorem~\ref{thm:local procedure edge-out component} with $ \Delta := (\tfrac{2 k}{\epsilon \davg})^2 $ and return \textsc{Yes} whenever it finds a set of vertices containing~$ s $ and \textsc{No} otherwise.
In this way, properties (a) and (b) are satisfied.
As this is repeated $ O (\tfrac{k}{\epsilon \davg}) $ times, we obtain a testing algorithm with the following guarantees.

\begin{theorem}
There is a one-sided property testing algorithm for $k$-edge connectivity with false-reject probability at most $ \tfrac{1}{3} $ that performs $ O (\tfrac{k^5}{(\epsilon \davg)^3}) $ many queries.
\end{theorem}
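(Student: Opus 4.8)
The plan is to instantiate Lemma~\ref{lem:property testing with black box} with a concrete decision procedure obtained from the local cut-detection algorithm of Theorem~\ref{thm:local procedure edge-out component}, so that essentially nothing remains but to verify properties~(a) and~(b) of that lemma and to bound the per-vertex query count $Q$. First I would dispose of the degenerate regime: if $k \le \tfrac{\epsilon \davg}{2}$, then by Lemma~\ref{lem:number of components} every $\epsilon$-far graph contains $n$ singleton proper $(k-1)$-edge-out components, so the tester may reject after a single query that checks whether some fixed vertex has out-degree less than $k$. Hence we may assume $k > \tfrac{\epsilon \davg}{2}$, which ensures $\Delta := (\tfrac{2k}{\epsilon \davg})^2 \ge k$ and will let us write $k^2(\Delta+k) = O(k^2 \Delta)$.

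The decision procedure on a starting vertex $s$ is then the following. If $m \le 2k(\Delta+k)$, query the whole graph (at most $2k(\Delta+k)$ queries) and decide exactly, with any centralized algorithm, whether $s$ lies in a proper $(k-1)$-edge-out component of edge size at most $\Delta$; otherwise run the algorithm of Theorem~\ref{thm:local procedure edge-out component} with parameters $k-1$, $\Delta$, and probability parameter $p = \tfrac{5}{6}$, answering \textsc{Yes} iff the returned set contains $s$. For property~(a): a component of vertex size at most $\tfrac{2k}{\epsilon \davg}$ has edge size at most $(\tfrac{2k}{\epsilon \davg})^2 = \Delta$, so in the small-$m$ branch the exact check finds it, and in the large-$m$ branch guarantee~(1) of Theorem~\ref{thm:local procedure edge-out component} returns $U \supseteq \{s\}$ with probability at least $\tfrac{5}{6}$. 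For property~(b): in the small-$m$ branch properness is checked directly; in the large-$m$ branch guarantee~(2) says the returned $U$ is a $(k-1)$-edge-out component of edge size at most $2k(\Delta+k) < m$, so $E(U,U) < |E(V,V)|$ forces $U \ne V$, i.e., $U$ is a proper $(k-1)$-edge-out component containing $s$. Both branches perform $O(k^2(\Delta+k)) = O(k^2 \Delta) = O(\tfrac{k^4}{(\epsilon \davg)^2})$ queries, using $\Delta \ge k$; note that boosting the detection probability from $\tfrac12$ to $\tfrac56$ costs only the constant factor $\log 6$, already accounted for in Theorem~\ref{thm:local procedure edge-out component}.

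Finally, plugging $Q = O(\tfrac{k^4}{(\epsilon \davg)^2})$ into Lemma~\ref{lem:property testing with black box} gives a one-sided tester with false-reject probability at most $\tfrac13$ performing $O(\tfrac{k}{\epsilon \davg} \cdot Q) = O(\tfrac{k^5}{(\epsilon \davg)^3})$ queries, which is the claim. I expect the only genuinely delicate point to be the bookkeeping that makes properties~(a) and~(b) hold verbatim — specifically the two observations that (i) a vertex-size-$\tfrac{2k}{\epsilon\davg}$ component really does have edge size at most $\Delta$ so that the local procedure is being invoked within its promised regime, and (ii) the possibly enlarged component returned in the large-$m$ case is still a proper subset of $V$ precisely because its edge size is kept strictly below $m$; the small-$m$ fallback is what guarantees both even when $\Delta$ is comparable to $m$.
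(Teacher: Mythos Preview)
Your proposal is correct and follows essentially the same approach as the paper: you instantiate Lemma~\ref{lem:property testing with black box} with the local procedure of Theorem~\ref{thm:local procedure edge-out component}, handle the degenerate regime $k \le \tfrac{\epsilon\davg}{2}$ via Lemma~\ref{lem:number of components}, use the small-$m$ fallback to guarantee properness of the detected component, and combine the per-vertex cost $Q = O(k^4/(\epsilon\davg)^2)$ with the $O(k/(\epsilon\davg))$ samples. Your explicit justification of property~(b) in the large-$m$ branch (edge size $< m$ forces $U \subsetneq V$) is in fact slightly more detailed than what the paper spells out.
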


Similar to Orenstein and Ron~\cite{OrensteinR11}, we can (1) employ a technique of Goldreich and Ron~\cite{GoldreichR02} to reduce the query time by a doubling approach for the size of the components to detect and (2) employ a technique of Yoshida and Ito~\cite{YoshidaI10} to run the tester on bounded-degree graphs by setting $ \epsilon' = \epsilon / 13 $.
The guarantees obtained this way can be summarized as follows.
\begin{lemma}\label{lem:edge connectivity testing with black box}
Suppose we are given a decision procedure parameterized by $ \Gamma \geq 1 $ that, given a starting vertex~$ s $ and query access to a directed graph~$ G $, performs at most $ Q (\Gamma) $ queries to $ G $ and has the following guarantees:
\begin{enumerate}
    \item[(a)] If $ s $ is contained in a proper $ (k-1) $-edge-out component of vertex size at most~$ \Gamma $ in~$ G $, then the procedure returns \textsc{Yes} with probability at least~$ \tfrac{5}{6} $.
    \item[(b)] If the procedure returns \textsc{Yes}, then $ s $ is contained in a proper $ (k-1) $-edge-out component in~$ G $ (of arbitrary size).
\end{enumerate}
Then there is a one-sided property testing algorithm for $k$-edge connectivity with false-reject probability at most $ \tfrac{1}{3} $ performing
\begin{equation*}
\sum_{i=1}^{O( \log{(k / (\epsilon \davg)))}} Q (2^i - 1) \cdot O \left( \frac{ k \log{(k / (\epsilon \davg))}}{2^i \epsilon \davg} \right)
\end{equation*}
many queries in unbounded-degree graphs and
\begin{equation*}
\sum_{i=1}^{O( \log{(k / (\epsilon d)))}} Q (2^i - 1) \cdot O \left( \frac{ k \log{(k / (\epsilon d))}}{2^i \epsilon d} \right)
\end{equation*}
many queries in bounded-degree graphs.
\end{lemma}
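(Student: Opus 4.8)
The plan is to combine the structural count of Lemma~\ref{lem:number of components} with the ``doubling'' trick of Goldreich and Ron~\cite{GoldreichR02}. Instead of always searching for components of the worst-case vertex size $\Gamma_0 := \tfrac{2k}{\epsilon\davg}$, as in Lemma~\ref{lem:property testing with black box}, the tester runs the given decision procedure with geometrically growing size parameters $\Gamma = 2^i - 1$ while sampling geometrically fewer starting vertices at larger scales: for $i = 1, \dots, I$ with $I := \lceil \log(\Gamma_0+1) \rceil = O(\log(k/(\epsilon\davg)))$, it samples $t_i = \Theta\!\left(\tfrac{k I}{2^i \epsilon \davg}\right)$ vertices uniformly at random and runs the procedure with parameter $2^i-1$ on each, both in $G$ and in the reverse graph $\overleftarrow G$ (so that $(k-1)$-edge-in components, which are $(k-1)$-edge-out components of $\overleftarrow G$, are also caught). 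It rejects iff some call returns \textsc{Yes}. For one-sidedness: if $G$ is $k$-edge connected then neither $G$ nor $\overleftarrow G$ has a proper $(k-1)$-edge-out component, so by guarantee~(b) no call ever returns \textsc{Yes} and the tester accepts with probability $1$; we may also assume $\Gamma_0 \ge 1$ (i.e.\ $k > \tfrac{\epsilon\davg}{2}$), as otherwise an $\epsilon$-far graph has $n$ singleton components and one degree query suffices.

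For the other direction, assume $G$ is $\epsilon$-far from $k$-edge connected. By Lemma~\ref{lem:number of components} there are at least $\tfrac{\epsilon m}{2k}$ distinct subsets that are proper $(k-1)$-edge-out or $(k-1)$-edge-in components of vertex size at most $\Gamma_0$, and these can be taken pairwise disjoint --- this disjointness is exactly what makes the single-scale sampling in Lemma~\ref{lem:property testing with black box} succeed in the first place. Placing each subset of vertex size $s$ into the bucket $i$ with $2^{i-1} \le s \le 2^i - 1$ distributes them among $I$ buckets, so some bucket $i^{*}$ holds at least $\tfrac{\epsilon m}{2kI}$ of them, and after a further factor-$2$ loss at least $\tfrac{\epsilon m}{4kI}$ of those are $(k-1)$-edge-out components (else apply the symmetric argument to $\overleftarrow G$). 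These are pairwise disjoint and each has vertex size $\ge 2^{i^{*}-1}$, so their union $U^{*}$ satisfies $|U^{*}| \ge \tfrac{\epsilon m}{4kI}\cdot 2^{i^{*}-1} = \Omega\!\left(\tfrac{2^{i^{*}} \epsilon\davg\, n}{kI}\right)$. Choosing the hidden constant in $t_{i^{*}}$ large enough, the probability that none of the $t_{i^{*}}$ level-$i^{*}$ samples lands in $U^{*}$ is at most $(1 - |U^{*}|/n)^{t_{i^{*}}} \le \tfrac{1}{12}$; and conditioned on some sampled $s \in U^{*}$, that $s$ lies in a proper $(k-1)$-edge-out component of vertex size at most $2^{i^{*}} - 1 = \Gamma$, so guarantee~(a) makes the corresponding call answer \textsc{Yes} with probability at least $\tfrac56$. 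Hence the tester rejects with probability at least $\tfrac{11}{12}\cdot\tfrac56 > \tfrac23$; since no $k$-edge-connected graph is ever rejected, this matches the stated guarantee.

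Summing the cost over the $I$ scales (and over $G$ and $\overleftarrow G$) gives $\sum_{i=1}^{I} O(t_i)\cdot Q(2^i-1) = \sum_{i=1}^{O(\log(k/(\epsilon\davg)))} Q(2^i-1)\cdot O\!\left(\tfrac{k\log(k/(\epsilon\davg))}{2^i\epsilon\davg}\right)$, the claimed unbounded-degree bound. For the bounded-degree model we invoke the reduction of Yoshida and Ito~\cite{YoshidaI10}: running the same algorithm with $\epsilon' = \epsilon/13$ and with the degree bound $d$ playing the role of $\davg$ throughout (which costs only an absorbed constant factor) yields a valid bounded-degree tester, whose query complexity is the first expression with every $\davg$ replaced by $d$.

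The main obstacle is the middle step: converting the bare \emph{count} of $\tfrac{\epsilon m}{2k}$ small components into a large \emph{vertex-fraction} covered by components of a single dyadic size class. This is where the pairwise disjointness of the components produced by Lemma~\ref{lem:number of components} is essential, and it has to be combined with careful bookkeeping of the two independent error sources (the $\tfrac1{12}$ sampling miss and the $\tfrac16$ procedure failure) so that the total error stays under $\tfrac13$ without any extra boosting. The reduction to bounded-degree graphs, by contrast, is a routine black-box step.
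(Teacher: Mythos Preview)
Your argument is correct and follows precisely the approach the paper invokes: it cites the Goldreich--Ron doubling technique together with the Yoshida--Ito bounded-degree reduction and states the lemma without spelling out a proof, so your write-up is essentially a faithful expansion of that sketch. One point worth flagging explicitly (you do allude to it) is that the pairwise disjointness of the small components is not part of the statement of Lemma~\ref{lem:number of components} as quoted in the paper; it comes from Orenstein and Ron's construction, and your observation that already the single-scale tester of Lemma~\ref{lem:property testing with black box} implicitly relies on it is the right justification.
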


Since a $ (k - 1) $-edge-out component of vertex size $ \Gamma $ has edge size at most $ \Gamma^2 $ in general and at most $ \Gamma d $ in bounded-degree graphs (where we may additionally assume $ d \geq k $), we arrive at the following result.

\begin{theorem}\label{thm:main result testing edge connectivity}
There is a one-sided property testing algorithm for $k$-edge connectivity with false-reject probability at most $ \tfrac{1}{3} $ that performs $ O ( \tfrac{k^4}{(\epsilon \davg)^2} \cdot \log (\frac{k}{\epsilon \davg})) $ many queries in unbounded-degree graphs and $ O (\tfrac{k^3}{\epsilon} \cdot (\log{(\tfrac{k}{\epsilon d})})^2) $ many queries in bounded-degree graphs.
\end{theorem}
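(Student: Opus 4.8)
The plan is to instantiate the decision procedure required by Lemma~\ref{lem:edge connectivity testing with black box} with the local algorithm of Theorem~\ref{thm:local procedure edge-out component} and then to evaluate the resulting query count. Given the target vertex-size parameter $\Gamma \geq 1$, I would set $\Delta := \Gamma^2$ in the unbounded-degree model and $\Delta := \Gamma d$ in the bounded-degree model, so that any proper $(k-1)$-edge-out component of vertex size at most $\Gamma$ automatically has edge size at most $\Delta$ (a set of at most $\Gamma$ vertices induces at most $\Gamma^2$ edges in general and at most $\Gamma d$ edges when every degree is at most $d$). The procedure itself would then behave exactly as in the non-doubling case discussed above: if $m \leq 2k(\Delta+k)$ it reads the whole graph with $O(k(\Delta+k))$ queries and decides directly, with any centralized algorithm, whether $s$ lies in a proper $(k-1)$-edge-out component of edge size at most $\Delta$; otherwise it calls the algorithm of Theorem~\ref{thm:local procedure edge-out component} with parameters $k-1$, $\Delta$, constant probability parameter $p = \tfrac{5}{6}$, and starting vertex $s$, and answers \textsc{Yes} exactly when the returned set contains $s$.

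Checking the two required guarantees is routine. Property~(a) is immediate from part~(1) of Theorem~\ref{thm:local procedure edge-out component}: if $s$ lies in a proper $(k-1)$-edge-out component of vertex size at most $\Gamma$, that component has edge size at most $\Delta$, so the returned set contains $s$ with probability at least $\tfrac56$ (and certainly in the small-$m$ branch). For property~(b), if \textsc{Yes} is returned then by part~(2) of Theorem~\ref{thm:local procedure edge-out component} the returned set $U$ is a minimal $(k-1)$-edge-out component containing $s$ of edge size $O(k(\Delta+k))$; in the local branch its edge size is at most $2k(\Delta+k) < m = |E(V,V)|$, hence $U \neq V$, so $U$ is a proper non-empty $(k-1)$-edge-out component containing $s$ (and in the small-$m$ branch properness is verified explicitly), which certifies $\lambda < k$, as required. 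Because $p$ is a constant, $\log\tfrac{1}{1-p} = O(1)$, so the procedure performs $Q(\Gamma) = O\big(k^2(\Delta+k)\big)$ queries, i.e. $Q(\Gamma) = O\big(k^2(\Gamma^2+k)\big)$ in the unbounded model and, using the standing assumption $d \geq k$, $Q(\Gamma) = O\big(k^2\Gamma d\big)$ in the bounded model.

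It then remains to substitute $Q(\cdot)$ into the summation of Lemma~\ref{lem:edge connectivity testing with black box} and simplify, which is the only genuine calculation. In the unbounded model, write $L = \log(\tfrac{k}{\epsilon\davg})$ and note that the doubling index runs up to $i^*$ with $2^{i^*} = \Theta(\tfrac{k}{\epsilon\davg})$; bounding $(2^i-1)^2 \leq 2^{2i}$, the $i$-th summand is
\begin{equation*}
O\big(k^2(2^{2i}+k)\big)\cdot O\!\left(\frac{kL}{2^i\epsilon\davg}\right)
= O\!\left(\frac{k^3 L\, 2^i}{\epsilon\davg}\right) + O\!\left(\frac{k^4 L}{2^i\epsilon\davg}\right).
\end{equation*}
Summing the first part geometrically gives $O\big(\tfrac{k^3 L}{\epsilon\davg}\cdot 2^{i^*}\big) = O\big(\tfrac{k^4}{(\epsilon\davg)^2}\,L\big)$, and the second part sums to $O\big(\tfrac{k^4 L}{\epsilon\davg}\big)$, which is dominated by the first in the relevant range; this yields the claimed bound $O\big(\tfrac{k^4}{(\epsilon\davg)^2}\log(\tfrac{k}{\epsilon\davg})\big)$. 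The bounded-degree case is analogous and slightly cleaner: with $Q(2^i-1) = O(k^2 2^i d)$ and $L' = \log(\tfrac{k}{\epsilon d})$, each of the $O(L')$ summands equals $O(k^2 2^i d)\cdot O\big(\tfrac{kL'}{2^i\epsilon d}\big) = O\big(\tfrac{k^3 L'}{\epsilon}\big)$, for a total of $O\big(\tfrac{k^3}{\epsilon}(\log(\tfrac{k}{\epsilon d}))^2\big)$. I expect the main obstacle to be precisely this bookkeeping: one has to track the additive ``$+k$'' term in the query bound of Theorem~\ref{thm:local procedure edge-out component} and the cost of the small-$m$ branch against the standing assumptions ($k > \tfrac{\epsilon\davg}{2}$ in the unbounded model, $d \geq k$ in the bounded model) so that the contribution of the low-$\Gamma$ rounds is absorbed by the term coming from the top scale $\Gamma = \Theta(\tfrac{k}{\epsilon\davg})$, which is what determines the stated bounds.
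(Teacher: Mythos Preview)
Your proposal is correct and follows essentially the same approach as the paper: instantiate the decision procedure of Lemma~\ref{lem:edge connectivity testing with black box} with the local algorithm of Theorem~\ref{thm:local procedure edge-out component}, using $\Delta=\Gamma^2$ (unbounded) or $\Delta=\Gamma d$ (bounded), and evaluate the resulting geometric sum. You in fact supply more computational detail than the paper does, and your remark about tracking the additive ``$+k$'' term against the standing assumptions $k>\tfrac{\epsilon\davg}{2}$ and $d\geq k$ is exactly the bookkeeping the paper leaves implicit.
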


By a simple reduction, this result can be extended to undirected graphs: replace every undirected edge $ e = \{ u, v \} $ by two antiparallel edges $ (u, v) $ and $ (v, u) $.
For $k$-vertex connectivity, Orenstein and Ron~\cite{OrensteinR11} have argued that this correctly reduces the undirected case to the directed case.
Literally the same arguments also work for $k$-edge connectivity.

\subsection{Testing $k$-Vertex Connectivity}

The property testing algorithm for $k$-vertex connectivity follows the same principles as the algorithm for $k$-edge connectivity.
It exploits the following combinatorial property.

\begin{lemma}[\cite{OrensteinR11}]
A directed graph that is $ \epsilon $-far from being $k$-vertex connected has at least $ \tfrac{\epsilon m}{2 k} $ subsets of vertices that are $(k-1)$-vertex out components or $(k-1)$-vertex in components of vertex size at most $ \frac{2 k}{\epsilon \davg} $.
\end{lemma}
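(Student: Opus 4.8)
The plan is to prove the contrapositive in the style of the proof of Lemma~\ref{lem:number of components}. Suppose $G$ has fewer than $\tfrac{\epsilon m}{2k}$ distinct vertex subsets that are proper $(k-1)$-vertex-out or proper $(k-1)$-vertex-in components of vertex size at most $\Gamma := \tfrac{2k}{\epsilon \davg}$; call this collection $W$. I would exhibit a graph $G' \supseteq G$ obtained from $G$ by adding at most $\epsilon m = \epsilon n \davg$ edges that is $k$-vertex connected, contradicting that $G$ is $\epsilon$-far from being $k$-vertex connected (adding edges is a legal modification and suffices to raise connectivity). The starting point is that a directed graph is $k$-vertex connected if and only if it has no proper $(k-1)$-vertex-out component: one direction is immediate from the definitions, and for the other a vertex cut $(L, M, R)$ with $|M| \le k-1$ yields the proper $(k-1)$-vertex-out component $L$. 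Hence it suffices to destroy all proper $(k-1)$-vertex-out components by adding edges.

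First I would set up a repair loop. Maintain a current graph $H$, initially $G$. While $H$ has a proper $(k-1)$-vertex-out component, pick an inclusion-minimal one $C$ together with its boundary $B = N^+_H(C) \setminus C$ of size at most $k-1$. Since $H \supseteq G$, the set $C$ is also a proper $(k-1)$-vertex-out component of $G$, so if $|C| \le \Gamma$ then $C \in W$. Repair $C$ by adding at most $k - |B|$ out-edges from vertices of $C$ to vertices of $V \setminus (C \cup B)$ — enough to push the out-boundary of $C$ to $k$, or, if $V \setminus (C \cup B)$ is too small, to all of it, so that $C$ is no longer proper. Two invariants make this well-behaved: adding edges never creates a new $(k-1)$-vertex-out component (it can only enlarge out-boundaries), and once $C$ is repaired it never again becomes a proper $(k-1)$-vertex-out component.

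The heart is the accounting, split by whether the repaired $C$ has $|C| \le \Gamma$ or $|C| > \Gamma$. Because each repaired set is inclusion-minimal at the moment of repair and is permanently resolved afterwards, the small sets repaired are pairwise distinct and each lies in $W$; so there are fewer than $|W| < \tfrac{\epsilon m}{2k}$ small repairs, contributing fewer than $k \cdot \tfrac{\epsilon m}{2k} = \tfrac{\epsilon m}{2}$ new edges. For the large repairs ($|C| > \Gamma$), I would use that the closed out-neighborhood function $S \mapsto |S \cup N^+(S)|$ is a coverage function — the union of the sets $\{v\} \cup N^+(v)$ over $v \in S$ — hence submodular; this constrains how the inclusion-minimal $(k-1)$-vertex-out components of $H$ can overlap, and, combined with a careful choice of the targets of the added edges, shows that the number of large repairs — each involving more than $\Gamma$ vertices, so that only $n/\Gamma$ pairwise disjoint ones can coexist at any time — and hence their total edge cost is $O(k \cdot \tfrac{n}{\Gamma}) = O(k \cdot \tfrac{\epsilon m}{2k}) = O(\epsilon m)$, and with careful bookkeeping at most $\tfrac{\epsilon m}{2}$. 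Summing the two contributions gives at most $\epsilon m$ added edges, the desired contradiction. The $(k-1)$-vertex-in components in the statement come for free: running the same argument on the reverse of $G$ (whose $(k-1)$-vertex-out components are exactly the $(k-1)$-vertex-in components of $G$) shows there are many small in-components as well, and in any case destroying the out-components already yields $k$-vertex connectivity, so in-components only enlarge $W$, which can only help.

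The step I expect to be the main obstacle is exactly the accounting for the large repairs. Unlike the symmetric submodular cut function governing edge connectivity, the relevant vertex function is submodular (via the out-neighborhood coverage function) but not posimodular, so one cannot simply argue that inclusion-minimal $(k-1)$-vertex-out components form a laminar or disjoint family that is never revisited; instead one must argue more carefully — in effect invoking a constant-factor bound for $k$-vertex-connectivity augmentation in terms of the number of pairwise disjoint deficient sets, which is small precisely because every such set is large — that repeated repairs of nested large components cannot occur too often. This parallels the most delicate part of the proof of Lemma~\ref{lem:number of components}.
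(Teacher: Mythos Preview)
The paper does not give a proof of this lemma; it is imported verbatim from Orenstein and Ron~\cite{OrensteinR11}, just like Lemma~\ref{lem:number of components}. So there is no in-paper argument to compare against, and I assess your proposal on its own merits.

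Your overall architecture is the right one, and the small-repair half is solid: each small set you repair is a proper $(k-1)$-vertex-out component of $G$ (since $H \supseteq G$ only shrinks out-boundaries and preserves properness), no set is repaired twice, and hence there are fewer than $|W|$ small repairs contributing fewer than $\tfrac{\epsilon m}{2}$ edges.

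The gap is precisely where you flag it, but your sketch does not close it. You assert that the total edge cost of large repairs is $O(k \cdot n/\Gamma)$, which amounts to saying the number of large repairs is $O(n/\Gamma)$. Neither submodularity of the out-neighborhood coverage function nor the observation that at most $n/\Gamma$ pairwise-disjoint large sets coexist at a fixed time yields this. With the routing rule as you state it (``to vertices of $V \setminus (C \cup B)$''), the repair loop can walk up a long nested chain $C_1 \subsetneq C_2 \subsetneq \cdots$ of large minimal components: when repairing $C_i$, the added edges may land in $C_{i+1} \setminus (C_i \cup B_i)$, leaving $C_{i+1}$'s out-boundary untouched, so $C_{i+1}$ becomes the next minimal component. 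Such a chain can have length $\Theta(n)$, not $O(n/\Gamma)$. The phrase ``careful choice of the targets of the added edges'' is thus doing all the work; to make the argument go through you must specify that choice---for example, routing every repair to a fixed set of $k$ anchor vertices so that any later deficient component containing a previously repaired vertex is forced to contain an anchor, and then separately bounding the (few) components that do contain anchors---or else invoke a quantitative vertex-connectivity augmentation bound (in the Frank--Jord\'an vein) relating the augmentation cost to the maximum number of pairwise-disjoint deficient sets. Without one of these, the large-repair budget of $\tfrac{\epsilon m}{2}$ is unproven.
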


By carrying out the analysis in the same way as Orenstein and Ron-- together with the additional observation that the requirements on the local procedure for detecting a $ (k-1) $-vertex component can be relaxed -- we obtain the following property testing algorithm.\footnote{Note however that for technical reasons the constants hidden in the $ O (\cdot) $-notation differ from those of Lemma~\ref{lem:edge connectivity testing with black box} in bounded-degree graphs.} 

\begin{lemma}\label{lem:generic tester vertex connectivity}
Suppose we are given a decision procedure parameterized by $ \Gamma \geq 1 $ that, given a starting vertex~$ s $ and query access to a directed graph~$ G $, performs at most $ Q (\Gamma) $ queries to $ G $ and has the following guarantees:
\begin{enumerate}
    \item[(a)] If $ s $ is contained in a proper $ (k-1) $-vertex-out component of vertex size at most~$ \Gamma $ in~$ G $, then the procedure returns \textsc{Yes} with probability at least~$ \tfrac{5}{6} $.
    \item[(b)] If the procedure returns \textsc{Yes}, then $ s $ is contained in a proper $ (k-1) $-edge-out component in~$ G $ (of arbitrary size).
\end{enumerate}
Then there is a one-sided property testing algorithm for $k$-vertex connectivity with false-reject probability at most $ \tfrac{1}{3} $ performing
\begin{equation*}
\sum_{i=1}^{O ( \log{(k / (\epsilon \davg))} )} Q (2^i - 1) \cdot O \left( \frac{k \log{(k / (\epsilon \davg))}}{2^i \epsilon \davg} \right)
\end{equation*}
many queries in unbounded-degree graphs and
\begin{equation*}
\sum_{i=1}^{O ( \log{(k / (\epsilon d))} )} Q (2^i - 1) \cdot O \left( \frac{k \log{(k / (\epsilon d))}}{2^i \epsilon d} \right)
\end{equation*}
many queries in bounded-degree graphs.
\end{lemma}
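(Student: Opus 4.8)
The plan is to replay Orenstein and Ron's analysis of their $k$-vertex-connectivity tester, feeding it the given decision procedure in place of their explicit bounded-size local search and invoking the combinatorial lemma stated just above. I would first establish the single-scale version (one value of $\Gamma$), paralleling Lemma~\ref{lem:property testing with black box}: sample $\Theta(k/(\epsilon\davg))$ vertices uniformly at random; for each sampled vertex $s$ run the decision procedure with parameter $\Gamma := 2k/(\epsilon\davg)$ once on $G$ and once on the reverse graph $G^{\mathrm{rev}}$ (reversing all edges interchanges the roles of out- and in-components, so this detects both $(k-1)$-vertex-out and $(k-1)$-vertex-in components of $G$); \textsc{Reject} iff some call returns \textsc{Yes}, and \textsc{Accept} otherwise. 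For one-sidedness: a $k$-vertex-connected graph has no proper $(k-1)$-vertex-out component and no proper $(k-1)$-vertex-in component (either would give a vertex cut of size at most $k-1$), so by guarantee~(b) no call returns \textsc{Yes} and the tester accepts with probability~$1$. For completeness on an $\epsilon$-far graph, the combinatorial lemma of \cite{OrensteinR11} supplies at least $\epsilon m/(2k)$ proper $(k-1)$-vertex-out or $(k-1)$-vertex-in components of vertex size at most $2k/(\epsilon\davg)$; following their counting, a random vertex lies in one of them, say $C$, with probability $\Omega(\epsilon\davg/k)$ (using $m = n\davg$), so the sample hits such a $C$ with probability at least $9/10$, and then the run on whichever of $G,G^{\mathrm{rev}}$ exhibits $C$ as a $(k-1)$-vertex-out component of vertex size $\le\Gamma$ returns \textsc{Yes} with probability at least $5/6$ by guarantee~(a); a union bound gives reject probability at least $2/3$.

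Next I would apply the two amortizing devices used in the $k$-edge-connectivity development (Lemma~\ref{lem:edge connectivity testing with black box}). First, the doubling technique of Goldreich and Ron~\cite{GoldreichR02}: instead of a fixed $\Gamma$, run the decision procedure at scales $\Gamma = 2^i - 1$ for $i = 1,\dots,O(\log(k/(\epsilon\davg)))$, sampling $O\!\big(\tfrac{k\log(k/(\epsilon\davg))}{2^i \epsilon\davg}\big)$ vertices at scale~$i$. Bucketing the $\ge \epsilon m/(2k)$ guaranteed components by dyadic vertex size and applying pigeonhole over the $O(\log(k/(\epsilon\davg)))$ buckets, some bucket $i^\ast$ carries an $\Omega(1/\log(k/(\epsilon\davg)))$ fraction of the components, each of vertex size at most $2^{i^\ast}$; the scale-$i^\ast$ sample size is calibrated exactly so that it hits one of these with constant probability, whereupon the run with parameter $2^{i^\ast}-1$ reports \textsc{Yes} by guarantee~(a). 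Summing $Q(2^i-1)$ times the scale-$i$ sample size over the scales gives the first displayed bound, the geometric $2^{-i}$ factor keeping the total number of sampled vertices at $O(k\log(k/(\epsilon\davg))/(\epsilon\davg))$. Second, for the bounded-degree model I would invoke the reduction of Yoshida and Ito~\cite{YoshidaI10} (run the unbounded-degree tester with $\epsilon' = \epsilon/13$ and with $\davg$ replaced by $d$), which yields the second displayed bound; as the footnote notes, the hidden constants differ slightly from the $k$-edge case for technical reasons tied to the vertex-cut accounting.

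The one genuinely new point relative to \cite{OrensteinR11} is the relaxation recorded in guarantee~(b): it is enough for the procedure to certify membership in \emph{some} proper $(k-1)$-vertex-out component of arbitrary size, because a $k$-vertex-connected graph contains no such component at all, so one-sidedness is preserved even if the returned component is far larger than $\Gamma$. I expect the main obstacle to be the faithful reproduction of Orenstein and Ron's sampling calibration across the dyadic scales — arranging that the per-scale sample sizes both telescope to within a logarithmic factor of $k/(\epsilon\davg)$ and still hit the heaviest bucket with constant probability — together with the bookkeeping that running the given procedure on $G^{\mathrm{rev}}$ indeed captures $(k-1)$-vertex-in components and that a \textsc{Yes} there still certifies $\kappa(G) < k$; everything else is a routine transcription of the $k$-edge-connectivity argument with \emph{edge-out component} replaced by \emph{vertex-out component} throughout.
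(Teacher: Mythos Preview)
Your proposal is correct and takes essentially the same approach as the paper, which does not give a standalone proof but simply remarks that the analysis proceeds ``in the same way as Orenstein and Ron'' together with the relaxation of condition~(b), exactly as you spell out. You supply the details the paper omits (single-scale sampling, dyadic bucketing \`a la Goldreich--Ron, and the Yoshida--Ito reduction for bounded degree), all of which are the intended ingredients.
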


Observe that a $ (k - 1) $-vertex-out component of vertex size $ \Gamma $ has volume at most $ \Gamma^2 + \Gamma k \leq 2 \Gamma^2 k $ in general and at most $ \Gamma d $ in bounded-degree graphs. 
By plugging in the local procedure of Theorem~\ref{thm:local procedure vertex-out component} we arrive at the following result.

\begin{theorem}\label{thm:main result testing vertex connectivity}
There is a one-sided property testing algorithm for $k$-vertex connectivity with false-reject probability at most $ \tfrac{1}{3} $ that performs $ O ( \tfrac{k^5}{(\epsilon \davg)^2} \cdot \log (\frac{k}{\epsilon \davg})) $ many queries in unbounded-degree graphs and $ O (\tfrac{k^3}{\epsilon} \cdot (\log{(\tfrac{k}{\epsilon d})})^2) $ many queries in bounded-degree graphs.
\end{theorem}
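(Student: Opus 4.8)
The plan is to instantiate the generic vertex-connectivity tester of Lemma~\ref{lem:generic tester vertex connectivity} with the local vertex-cut procedure of Theorem~\ref{thm:local procedure vertex-out component}, mirroring the proof of Theorem~\ref{thm:main result testing edge connectivity}. Given the vertex-size parameter $\Gamma$ of Lemma~\ref{lem:generic tester vertex connectivity}, I would run the procedure of Theorem~\ref{thm:local procedure vertex-out component} with $k-1$ playing the role of $k$, with the probability parameter fixed to $p = \tfrac{5}{6}$ (so that $\log \tfrac{1}{1-p} = O(1)$), and with the volume bound set to $\Delta := 2\Gamma^2 k$ in the unbounded-degree model and $\Delta := \Gamma d$ in the bounded-degree model; the decision procedure answers \textsc{Yes} exactly when the local procedure returns a set containing~$s$. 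These choices of $\Delta$ are legitimate by the observation preceding the theorem that a $(k-1)$-vertex-out component of vertex size $\Gamma$ has volume at most $\Gamma^2 + \Gamma k \leq 2\Gamma^2 k$ in general and at most $\Gamma d$ when the out-degree is bounded by $d$.

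Property~(a) of Lemma~\ref{lem:generic tester vertex connectivity} is then immediate from part~(1) of Theorem~\ref{thm:local procedure vertex-out component}: a $(k-1)$-vertex-out component of vertex size at most $\Gamma$ containing~$s$ has volume at most $\Delta$, so the local procedure returns a set containing~$s$ with probability at least $\tfrac{5}{6}$. For property~(b), part~(2) of Theorem~\ref{thm:local procedure vertex-out component} tells us that any returned set $U \ni s$ is a $(k-1)$-vertex-out component of volume $O(k(\Delta+k))$, so it only remains to certify that it is \emph{proper}. Following the edge-connectivity proof, I would dispose of this with a case distinction on the size of the graph: if $m$ (which the tester knows) is at most the per-call query budget $Q(\Gamma) = O(k^2\Delta)$, simply query the whole graph and decide exactly with any centralized algorithm; otherwise the volume $O(k(\Delta+k))$ of the detected component lies well below that of the whole graph, which (using that a non-proper $(k-1)$-vertex-out component $U$ satisfies $|V \setminus U| \leq k-1$ and hence, on a strongly connected graph, has volume at least $n-k+1$) rules out the only way $U$ could fail to be proper. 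Making this last step airtight, and in particular keeping the ``small graph'' branch inside the budget $O(k^2\Delta)$, is the one part of the argument requiring genuine care; everything else is bookkeeping.

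The query bound follows by substituting $Q(\Gamma) = O(k^2\Delta)$ into the summation of Lemma~\ref{lem:generic tester vertex connectivity} and summing a geometric series. In the unbounded-degree model $\Delta = \Theta(\Gamma^2 k)$, so $Q(2^i-1) = O(k^3 2^{2i})$ and
\begin{equation*}
\sum_{i=1}^{O(\log(k/(\epsilon\davg)))} O(k^3 2^{2i}) \cdot O\!\left(\frac{k \log(k/(\epsilon\davg))}{2^i \epsilon\davg}\right) = O\!\left(\frac{k^4 \log(k/(\epsilon\davg))}{\epsilon\davg}\right) \cdot \sum_i 2^i = O\!\left(\frac{k^5}{(\epsilon\davg)^2} \log\frac{k}{\epsilon\davg}\right),
\end{equation*}
using that $\sum_i 2^i = O(k/(\epsilon\davg))$ over the relevant range of $i$. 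In the bounded-degree model $\Delta = \Theta(\Gamma d)$, so $Q(2^i-1) = O(k^2 2^i d)$; here the factors $2^i$ and $d$ cancel against the weight $O(\tfrac{k \log(k/(\epsilon d))}{2^i \epsilon d})$, leaving $O(\tfrac{k^3}{\epsilon}\log\tfrac{k}{\epsilon d})$ per term and hence $O(\tfrac{k^3}{\epsilon}(\log\tfrac{k}{\epsilon d})^2)$ over the $O(\log(k/(\epsilon d)))$ terms. Finally, I note that detecting $(k-1)$-vertex-in components (via running on the reverse graph) and amplifying the false-reject probability to $\tfrac{1}{3}$ are already folded into Lemma~\ref{lem:generic tester vertex connectivity}, so nothing further is needed.
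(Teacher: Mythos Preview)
Your proposal is correct and follows essentially the same approach as the paper: the paper's proof is a one-line ``plug in the local procedure of Theorem~\ref{thm:local procedure vertex-out component}'' into Lemma~\ref{lem:generic tester vertex connectivity}, with the volume bounds $2\Gamma^2 k$ and $\Gamma d$ stated just before the theorem, and you have simply written out the bookkeeping (including the properness check, handled exactly as in the edge-connectivity case) that the paper leaves implicit.
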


By a simple reduction of Orenstein and Ron~\cite{OrensteinR11}, this result can be extended to undirected graphs.

\section{Conclusion}\label{sec:conclusion}

In this paper, we have given a faster local procedure for detecting cuts of bounded size.
Besides being a natural object of study in its own right, we have demonstrated that this problem is pivotal to several algorithms for higher-connectivity problems that use such a procedure as an essential subroutine.
Improving its running-time guarantees has immediate consequences on several important problems.
In addition to obtaining new results for these problems, we believe that our main contribution is a synthesis and simplification of prior work.

Based on the significance of the local cut detection problem, we suggest the following research directions.
First, our procedure has a slack of $ O (k) $, i.e., it might return a component of edge size up to $ O (k) \cdot \Delta $ instead of tightly respecting the upper bound of $ \Delta $.
This seems somewhat natural, but we would like to know if there is a deeper reason for this.
Is there a matching lower bound on the query complexity?
Second, even when allowing a slack of $ O (k) $, our procedure runs in time $ O (k^2 \Delta) $, i.e., it has an overhead of another factor of $ k $.
In contrast, the local vertex cut procedure of Nanongkai et al.~\cite{NanongkaiSY19}, which follows a different approach, runs in time $ O (k \Delta^{3/2}) $, i.e., the additional overhead is $ \sqrt{\Delta} $ instead of $ k $.
Can we get the best of both worlds with a running time of $ O (k \Delta) $?
Third, randomization is used in our local procedure in an elegant, but relatively simple way.
However, we do not see any straightforward derandomization approach.
Can the local cut detection be performed deterministically while still keeping linear dependence on $ \Delta $ in the running time and without paying the exponential blowup in the dependence on $ k $ as in the argument of Chechik et al.~\cite{ChechikHILP17}?

\subsection*{Acknowledgments}

We would like to thank Asaf Ferber for fruitful discussions.
The first author would like to thank Gramoz Goranci for giving comments on a previous draft of this manuscript.

\bibliography{references}
\bibliographystyle{alpha}

\end{document}